%% file: iclr2026_conference.tex
\documentclass{article} % For LaTeX2e
\usepackage{iclr2026_conference,times}
\input{math_commands.tex}

\usepackage{hyperref}
\usepackage{url}
\usepackage{graphicx}
\usepackage{subfigure}
\usepackage{booktabs} 
\usepackage{amsfonts}       
\usepackage{nicefrac}      
\usepackage{microtype}      
\usepackage{xcolor}       
\usepackage{amsmath}
\usepackage{amssymb}
\usepackage{mathtools}
\usepackage{amsthm}
\usepackage{wrapfig}
\usepackage{algorithm}
\usepackage{algorithmic}
\usepackage{enumitem}
\usepackage{multirow}
\usepackage[capitalize,noabbrev]{cleveref}
\usepackage{marvosym}

\theoremstyle{plain}
\ifx\theorem\undefined
\newtheorem{theorem}{Theorem}

\newtheorem{lemma}{Lemma}
\newtheorem{proposition}{Proposition}

\theoremstyle{definition}
\newtheorem{definition}{Definition}
\newtheorem{assumption}{Assumption}
\newtheorem{remark}{Remark}

\title{Condition Errors Refinement in Autoregressive Image Generation with Diffusion Loss}

\author{Yucheng Zhou\thanks{Equal Contribution.}~~, ~Hao Li\footnotemark[1]~~, ~Jianbing Shen$^{\text{\Letter}}$ \\
SKL-IOTSC, CIS, University of Macau \\
\texttt{yucheng.zhou@connect.um.edu.mo}
}

\iclrfinalcopy 
\begin{document}
\maketitle
\renewcommand{\thefootnote}{\Letter} 
\footnotetext{Corresponding Author.}
\renewcommand{\thefootnote}{\arabic{footnote}} 
\input{main}

\bibliography{ref}
\bibliographystyle{iclr2026_conference}
\input{appendix}

\end{document}

%% file: math_commands.tex
\usepackage{amsmath,amsfonts,bm}

\def\eqref#1{equation~\ref{#1}}
\def\1{\bm{1}}

\DeclareMathAlphabet{\mathsfit}{\encodingdefault}{\sfdefault}{m}{sl}
\SetMathAlphabet{\mathsfit}{bold}{\encodingdefault}{\sfdefault}{bx}{n}

%% file: main.tex
\begin{abstract}
Recent studies have explored autoregressive models for image generation,  with promising results,  and have combined diffusion models with autoregressive frameworks to optimize image generation via diffusion losses. 
In this study,  we present a theoretical analysis of diffusion and autoregressive models with diffusion loss,  highlighting the latter's advantages. 
We present a theoretical comparison of conditional diffusion and autoregressive diffusion with diffusion loss,  demonstrating that patch denoising optimization in autoregressive models effectively mitigates condition errors and leads to a stable condition distribution. Our analysis also reveals that autoregressive condition generation refines the condition,  causing the condition error influence to decay exponentially.
In addition,  we introduce a novel condition refinement approach based on Optimal Transport (OT) theory to address ``condition inconsistency''. We theoretically demonstrate that formulating condition refinement as a Wasserstein Gradient Flow ensures convergence toward the ideal condition distribution,  effectively mitigating condition inconsistency. 
Experiments demonstrate the superiority of our method over diffusion and autoregressive models with diffusion loss methods.
\end{abstract}

\section{Introduction}
Diffusion models have demonstrated remarkable performance in image generation and have been widely adopted across various visual generative tasks \citep{song2020denoising, rombach2022high, Saharia2022Palette}. Recently,  due to the impressive reasoning capabilities exhibited by large language models (LLMs),  autoregressive modeling has garnered significant attention. Consequently,  some studies are exploring autoregressive frameworks for image and video generation,  aiming to integrate them with LLMs to build more powerful multimodal models \citep{latentlm2024}.

Recent advancements in autoregressive image generation have shown performance comparable to diffusion models \citep{llamagen2024, var2024, zhou2025draw}. 
However,  most autoregressive image generation methods rely on Vector Quantized Variational Autoencoders (VQ-VAEs \citep{rombach2022high}) to encode visual content into discrete tokens for next-token prediction modeling. 
\citep{vqfree2024} indicate that VQ-based image generation is sensitive to gradient approximation strategies and suffers from quantization errors,  and propose diffusion loss for autoregressive image generation,  effectively pursuing autoregressive image generation without VQ. 
Nevertheless,  a comparative analysis between Conditional diffusion modeling and autoregressive modeling with diffusion loss remains underexplored.

In this study,  we investigate the differences between autoregressive modeling with diffusion loss and conditional diffusion modeling.
Firstly,  we delve into the theoretical underpinnings of patch denoising optimization in autoregressive models for condition error correction. 
We theoretically prove that,  under standard assumptions of Markov property and Gaussian noise in diffusion modeling,  the iterative patch denoising approach leads to a stable condition distribution. 
Furthermore,  our analysis reveals the crucial behavior of the conditional probability gradient,  showing its attenuation as the condition stabilizes. 
Our theoretical exploration demonstrates that patch denoising in autoregressive modeling effectively mitigates condition errors and consequently contributes to improved conditional generation quality in diffusion modeling.
In addition,  we theoretically demonstrate that the sequence of condition variables generated by an autoregressive process effectively refines the condition,  leading to a reduction in the gradient norm of the conditional probability distribution. 
Specifically,  we demonstrate that the influence of the condition on the outcome,  quantified by the gradient norm,  decays exponentially towards a stationary value as the autoregressive iteration progresses.

Building upon these theoretical insights,  we further analyze the issue of ``condition inconsistency'' in autoregressive condition generation,  demonstrating how extraneous information accumulates and hinders optimal patch generation. To address this,  we introduce a novel condition refinement approach grounded in Optimal Transport (OT) theory. We theoretically prove that formulating condition refinement as a Wasserstein Gradient Flow leads to convergence towards the ideal condition distribution,  effectively mitigating condition inconsistency and ultimately enhancing the quality of patch generation within diffusion models.

In the experiments,  we compare our method against other diffusion and autoregressive models with diffusion loss on ImageNet \citep{imagefolder2024}. Results show the superiority of our method over these competitors. We also analyze the denoising process to demonstrate the effectiveness of our method in condition refinement.
Our main contributions and findings are as follows:
\begin{itemize}[leftmargin=*]
\item We theoretically prove that patch denoising optimization in autoregressive models mitigates condition errors and elucidates the attenuation behavior of the conditional probability gradient as the condition stabilizes.
\item We theoretically establish the efficacy of autoregressive condition refinement,  quantifying the exponential decay of the condition's influence on the outcome as autoregressive iteration progresses to a stationary value.
\item We propose a condition refinement method based on Optimal Transport theory,  and theoretically prove that formulating it as a Wasserstein Gradient Flow ensures convergence towards the ideal condition distribution.
\item Experiments demonstrate our method's superiority over other competitors. Extensive analysis shows the effectiveness of our method in condition refinement.
\end{itemize}

\section{Preliminaries}
\paragraph{Diffusion Modeling.}
Diffusion models are generative frameworks that consist of a forward process. 
The forward (diffusion) process is a Markov chain that transforms data $x_0$ into Gaussian noise $x_T$ through a sequence of Gaussian transitions:
\begin{align}
q(x_{1:T}|x_0) &= \prod_{t=1}^T q(x_t|x_{t-1}), \quad
q(x_t|x_{t-1}) = \mathcal{N}(x_t; \sqrt{1-\beta_t} x_{t-1}, \beta_t \mathbf{I}),
\end{align}
where ${\beta_t}_{t=1}^T$ is a predefined variance schedule with $0 < \beta_1 < \cdots < \beta_T < 1$. 
The reverse (denoising) process reconstructs $x_0$ from $x_T$ via:
\begin{align}
p_\theta(x_{0:T}) &= p(x_T) \prod_{t=1}^T p_\theta(x_{t-1}|x_t), \quad
p_\theta(x_{t-1}|x_t) = \mathcal{N}(x_{t-1}; \mu_\theta(x_t, t), \Sigma_\theta(x_t, t)),
\end{align}
where $\mu_\theta$ and $\Sigma_\theta$ are predicted by a neural network. Since the true posterior $q(x_{t-1}|x_t)$ is intractable, it is approximated using $q(x_{t-1}|x_t, x_0)$ during training.

The model is trained to maximize data likelihood, which is approximated via a variational lower bound. This can be reformulated as a score-matching problem, e.g.,
\begin{align}
\mathbb{E}{x_t \sim p_t(x_t)} \left[ | \nabla{x_t} \log p_t(x_t) - s_\theta(x_t, t) |^2 \right].
\end{align}

\paragraph{Autoregressive Modeling.}
Autoregressive (AR) models are generative frameworks that sequentially predict each element in a data sequence by conditioning on all preceding elements. These models assume that each data point $x_i$ depends only on the prior points $x_{<i} = \{x_1, x_2, \ldots, x_{i-1}\}$. The conditional and joint probabilities can be expressed as:
\begin{align}
    p(x) = p(x_1, x_2, \ldots, x_n) = \prod_{i=1}^n p(x_i | x_{<i})
\end{align}
The generation process starts from $x_1$ and proceeds sequentially to $x_n$, with each step conditioned on all previously generated elements. Related work is in Appendix~\ref{app:related}.

\section{Theoretical Analysis on Autoregressive Image Modeling with Diffusion Loss}
\subsection{Difference of Diffusion Models}
Diffusion models demonstrate exceptional capabilities in generating high-quality visual content. Recently,  autoregressive modeling integrated with diffusion loss has shown significant potential in image generation. We will elucidate the differences between standard conditional diffusion modeling and autoregressive modeling with diffusion loss.

\paragraph{Conditional Diffusion Modeling.}
In traditional conditional diffusion models,  the reverse process is conditioned on a single,  static condition $c$.  
This can be formally expressed as:
\begin{align}
\label{eq:cond_diffusion}
x_{t-1} \sim p(x_{t-1} | x_t,  c), 
\end{align}
where $c$ represents a global condition that influences every step of the denoising trajectory.  
When dealing with images,  we can extend this to an individual patch $x_i$:
\begin{align}
\label{eq:cond_diffusion_element}
x_{i, t-1} \sim p(x_{i, t-1} | x_{j,  t},   c),  \quad \forall j \in \{1,  \dots,  n\}, 
\end{align}
where $n$ represents the number of patches. 
Each patch is denoised based on the same shared condition $c$,  irrespective of its position within the image.

\paragraph{Autoregressive Modeling with Diffusion Loss.}
Autoregressive modeling with diffusion loss allows the condition to evolve autoregressively. 
Instead of a fixed condition $c$,  a sequence of conditions $\{c_i\}$ depends on preceding conditions $\{c_{<i}\}$ including the initial condition $c_0$,  i.e., 
\begin{align}\label{eq:ar_condition}
c_t \sim p(c_t | c_{<i},  c_0), 
\end{align}
where $c_{<i}$ denotes all conditions up to time $i-1$,  i.e.,  $\{c_0,  c_1,  \dots,  c_{i-1}\}$. 
For each patch generation $x_i$,  the reverse process is still a denoising process but is guided by the dynamic condition $c_i$,  i.e., 
\begin{align}
    \label{eq:ar_reverse_element}
    x_{i, t-1}\sim p(x_{i, t-1} | x_{i, t},  c_i), 
\end{align}
where $c_i$ represents the condition for $i$-th patch. 
After generating $x_i$,  it is passed as input to the autoregressive model along with the history of conditions $\{c_{<i+1}\}$,  enabling the prediction of the subsequent condition $c_{i+1}$. 

\subsection{Conditional Denoising Model Error Definition}

\paragraph{Conditional Score Matching as an Upper Bound.}
Score matching is central to training diffusion models,  and its loss is linked to the Wasserstein distance between generated and real data \citep{ScoreWasserstein}. Conditional score matching refines this by incorporating conditioning.  Understanding how conditional score matching relates to standard score matching is key to justifying its use.  This section establishes that the standard score matching loss is upper-bounded by its conditional counterpart. This result supports the use of conditional score matching,  suggesting it might lead to a more controlled training process.

\begin{theorem}[Conditional Score Matching Upper Bound]
\label{thm:upper_bound}
The standard score matching loss is upper-bounded by the conditional score matching loss:
\begin{align}
\label{eq:upper_bound_inequality}
&\mathbb{E}_{\mathbf{x}_t \sim p_t(\mathbf{x}_t) }\left[ \left\| \nabla_{\mathbf{x}_t} \log p_t(\mathbf{x}_t) - s_\theta(\mathbf{x}_t,  t) \right\|^2 \right]
\\
&\leq \mathbb{E}_{c\sim p_c(c), \mathbf{x}_t \sim p_t(\mathbf{x}_t|c)} \left[ \left\| \nabla_{\mathbf{x}_t} \log p_t(\mathbf{x}_t | c) - s_\theta(\mathbf{x}_t,  t) \right\|^2 \right]\notag
\end{align}
See Appendix \ref{app:upper_bound_proof} for the proof,  which uses the law of total probability and Jensen's inequality.
\end{theorem}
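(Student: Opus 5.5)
The argument rests on one identity: the marginal score is the posterior average of the conditional scores. Write $p_t(\mathbf{x}_t)=\int p_t(\mathbf{x}_t|c)\,p_c(c)\,dc$ by the law of total probability. Differentiating under the integral sign, assuming enough regularity (positive densities and dominated convergence for the gradient), gives
\begin{align*}
\nabla_{\mathbf{x}_t}\log p_t(\mathbf{x}_t)=\frac{\int \nabla_{\mathbf{x}_t} p_t(\mathbf{x}_t|c)\,p_c(c)\,dc}{p_t(\mathbf{x}_t)}=\mathbb{E}_{c\sim p(c|\mathbf{x}_t)}\left[\nabla_{\mathbf{x}_t}\log p_t(\mathbf{x}_t|c)\right],
\end{align*}
where $p(c|\mathbf{x}_t)=p_t(\mathbf{x}_t|c)p_c(c)/p_t(\mathbf{x}_t)$ is the posterior.

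Next, fix $\mathbf{x}_t$. Since $s_\theta(\mathbf{x}_t,t)$ does not depend on $c$, we can move it inside the conditional expectation:
\begin{align*}
\nabla_{\mathbf{x}_t}\log p_t(\mathbf{x}_t)-s_\theta(\mathbf{x}_t,t)=\mathbb{E}_{c\sim p(c|\mathbf{x}_t)}\left[\nabla_{\mathbf{x}_t}\log p_t(\mathbf{x}_t|c)-s_\theta(\mathbf{x}_t,t)\right].
\end{align*}
The squared Euclidean norm is convex, so Jensen's inequality gives
\begin{align*}
\left\|\nabla_{\mathbf{x}_t}\log p_t(\mathbf{x}_t)-s_\theta(\mathbf{x}_t,t)\right\|^2\le\mathbb{E}_{c\sim p(c|\mathbf{x}_t)}\left\|\nabla_{\mathbf{x}_t}\log p_t(\mathbf{x}_t|c)-s_\theta(\mathbf{x}_t,t)\right\|^2 .
\end{align*}

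Finally, take the expectation over $\mathbf{x}_t\sim p_t$. The pair $\mathbf{x}_t\sim p_t$, $c\sim p(c|\mathbf{x}_t)$ has the same joint law as $c\sim p_c$, $\mathbf{x}_t\sim p_t(\cdot|c)$. Swapping the order of integration (Tonelli, since the integrand is nonnegative) therefore turns the right-hand side into exactly the conditional score matching loss in \eqref{eq:upper_bound_inequality}. The gap is the expected posterior variance of the conditional score, $\mathbb{E}\,\|\nabla\log p_t(\mathbf{x}_t|c)-\nabla\log p_t(\mathbf{x}_t)\|^2$. This can be confirmed directly by expanding the square and noting that the cross term vanishes.

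The only delicate step is the first: interchanging the gradient and the integral over $c$, and ensuring $p_t(\mathbf{x}_t)>0$. For $t>0$ the forward Gaussian kernel gives smooth, strictly positive densities. I would state this as a standing regularity assumption and justify it by dominated convergence with integrable score moments. The remaining steps are routine.
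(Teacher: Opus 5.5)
Your proof is correct and follows the paper's argument step for step: the law of total probability with differentiation under the integral writes the marginal score as the posterior mean of the conditional scores, Jensen's inequality is applied to the convex map $\mathbf{y}\mapsto\|\mathbf{y}-s_\theta(\mathbf{x}_t,t)\|^2$ at fixed $\mathbf{x}_t$, and the tower property (Tonelli) turns the nested expectation into the joint conditional loss. Your identification of the gap as $\mathbb{E}\,\|\nabla_{\mathbf{x}_t}\log p_t(\mathbf{x}_t|c)-\nabla_{\mathbf{x}_t}\log p_t(\mathbf{x}_t)\|^2$ is a slightly sharper form of the paper's tightness remark, and your regularity caveat is the same dominated-convergence condition the paper invokes.
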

The conditional score matching loss serves as an upper bound for the standard score matching loss. Consequently,  minimizing the conditional score matching loss indirectly constrains the standard score matching loss from above.

\paragraph{Error in Conditional Score Matching.}
To analyze the error in conditional score matching,  we build upon the score matching error definition from \citep{Adapting}:
\begin{align}
\epsilon_{\text{score}}^2 := \frac{1}{T} \sum_{t=1}^{T} \mathbb{E}_{X \sim q_t} \left[ \| s_t(X) - s_t^*(X) \|_2^2 \right].
\end{align}
To understand how conditioning affects the error structure,  we need to decompose the conditional and unconditional score matching losses. Expanding these losses into their component terms allows us to identify and analyze the specific contributions of conditioning to the overall error. 

\begin{lemma}[Expansion of Score Matching Loss]
\label{lemma:score_loss_expansion}
Expanding the square term in the score matching loss,  we get:
\begin{align}
\label{eq:score_loss_expanded}
&\mathbb{E}_{x_t\sim p_{t}(x_t)} \left[ \left\| \nabla_{x_t} \log p_{t}(x_t) - s_\theta(x_t,  t) \right\|^{2} \right] \notag\\
&= \mathbb{E}_{x_t \sim p_{t}(x_t)} \Big[ \left\| \nabla_{x_t} \log p_{t}(x_t) \right\|^{2} +  \left\| s_\theta(x_t,  t) \right\|^{2} - 2 \left\langle s_\theta(x_t,  t) ,  \nabla_{x_t} \log p_{t}(x_t) \right\rangle \Big].
\end{align}
Similarly,  for conditional score matching:
\begin{align}
\label{eq:cond_score_loss_expanded}
\!\!&\mathbb{E}_{c, x_t\sim p_c(c) p_{t}(x_t|c)} \left[ \left\| \nabla_{x_t} \log p_{t}(x_t|c) - s_\theta(x_t,  t) \right\|^{2} \right] \notag\\
&= \mathbb{E}_{c, x_t\sim p_c(c) p_{t}(x_t|c)} \Big[ \left\| \nabla_{x_t} \log p_{t}(x_t|c) \right\|^{2} +  \left\| s_\theta(x_t,  t) \right\|^{2} - 2 \left\langle s_\theta(x_t,  t) ,  \nabla_{x_t} \log p_{t}(x_t|c) \right\rangle \Big]\!\!
\end{align}
This expansion separates the loss into terms related to the true score,  the estimated score,  and their interaction,  facilitating a more granular error analysis (Detailed Proof in Appendix~\ref{appendix:score_loss_expansion}). 
\end{lemma}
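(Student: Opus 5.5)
The identity follows from expanding a squared Euclidean norm pointwise and then using linearity of expectation; there is nothing analytic to prove beyond checking that the expectations make sense. The plan is to treat the unconditional identity \eqref{eq:score_loss_expanded} and the conditional identity \eqref{eq:cond_score_loss_expanded} in exactly the same way. Only the vector field and the sampling measure change: $\nabla_{x_t}\log p_t(x_t)$ with $p_t(x_t)$ in the first case, and $\nabla_{x_t}\log p_t(x_t\mid c)$ with the joint law $p_c(c)p_t(x_t\mid c)$ in the second.

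\textbf{Step 1 (pointwise expansion).} For any vectors $a,b\in\mathbb{R}^d$ with the Euclidean inner product we have $\|a-b\|^2=\langle a-b,a-b\rangle=\|a\|^2+\|b\|^2-2\langle a,b\rangle$, by bilinearity and symmetry of $\langle\cdot,\cdot\rangle$. Fix $t$ and a point $x_t$, and set $a=\nabla_{x_t}\log p_t(x_t)$ and $b=s_\theta(x_t,t)$. This gives the integrand identity
\begin{align*}
\left\|\nabla_{x_t}\log p_t(x_t)-s_\theta(x_t,t)\right\|^2=\left\|\nabla_{x_t}\log p_t(x_t)\right\|^2+\left\|s_\theta(x_t,t)\right\|^2-2\left\langle s_\theta(x_t,t),\nabla_{x_t}\log p_t(x_t)\right\rangle .
\end{align*}
For the conditional case we do the same thing for every pair $(c,x_t)$, now with $a=\nabla_{x_t}\log p_t(x_t\mid c)$.

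\textbf{Step 2 (take expectations).} We integrate both sides against $p_t(x_t)$, or against $p_c(c)p_t(x_t\mid c)$ in the conditional case. Since the two integrands agree pointwise, their expectations agree, and this is exactly \eqref{eq:score_loss_expanded} and \eqref{eq:cond_score_loss_expanded} with the expectation written outside the bracket. If one also wants the loss split into three separate expectations, linearity of expectation gives that decomposition as well.

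\textbf{Main obstacle (integrability).} The only real point is to make sure no expression of the form $\infty-\infty$ appears. I would add the standing assumption that the true score and $s_\theta$ are square-integrable under the relevant measure; this is the same assumption under which the loss is finite at all. Granting it, Cauchy--Schwarz together with $2|\langle a,b\rangle|\le\|a\|^2+\|b\|^2$ shows that the cross term is integrable. Then every term on the right-hand side is finite, and splitting the expectation is legitimate. Without this assumption the identity still holds as an equality of integrals of the same nonnegative function, because the bracket on the right equals $\|a-b\|^2\ge 0$ pointwise. So the statement is valid in the extended sense as well.
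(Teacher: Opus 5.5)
Your proposal is correct and follows the paper's own argument: the appendix proof likewise expands $\|a-b\|^2=\|a\|^2+\|b\|^2-2a^\top b$ pointwise and then takes expectations. In the conditional case the paper additionally rewrites the result as the iterated expectation $\mathbb{E}_{x_t\sim p_t(x_t)}\mathbb{E}_{c\sim p_t(c|x_t)}$, which is only needed later for $\epsilon_c$. Your integrability remark is a sound addition the paper omits, including the observation that the identity holds in $[0,\infty]$ without it, since the bracket equals $\|a-b\|^2\ge 0$ pointwise.
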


\begin{definition}[Conditional Error Term $\epsilon_c$]
\label{def:epsilon_c}
To specifically measure the impact of conditioning on the true score's magnitude,  we define the conditional error term $\epsilon_c$ as the change in the expected squared norm of the true score due to conditioning,  relative to the unconditional case:
\begin{align}\label{eq:epsilon_c_def}
\epsilon_c = \frac{1}{T}\sum_{t=1}^{T} \mathbb{E}_{x_t \sim p_t(x_t)}\Big[ \mathbb{E}_{c \sim p_t(c|x_t)} \left[\left\| \nabla_{x_t} \log p_{t}(x_t|c) \right\|^{2}\right] -\left\| \nabla_{x_t} \log p_{t}(x_t) \right\|^{2}\Big]
\end{align}
This term quantifies how much the expected squared norm of the true score changes when we move from unconditional to conditional score matching. A positive $\epsilon_c$ would suggest that conditioning increases the magnitude of the true score,  potentially indicating a more complex or refined score function (Detailed Proof in Appendix~\ref{appendix:epsilon_c_derivation}).
\end{definition}

\begin{definition}[Simplified Conditional Error Term $\overline{\epsilon}_c$]
\label{def:bar_epsilon_c}
For a simpler metric focused purely on the magnitude of the conditional true score,  we define the simplified conditional error term $\overline{\epsilon}_c$:
\begin{align}
\label{eq:bar_epsilon_c_def}
\overline{\epsilon}_c= \frac{1}{T} \sum_{t=1}^{T} \mathbb{E}_{c \sim p_c(c), \mathbf{x}_t \sim p_t(\mathbf{x}_t|c)} \left[\left\| \nabla_{x_t} \log p_{t}(x_t|c) \right\|^{2}\right]
\end{align}
$\overline{\epsilon}_c$ directly measures the expected squared norm of the conditional score. Analyzing $\overline{\epsilon}_c$,  along with $\epsilon_c$,  will help understand the behavior of the true score in conditional settings (Detailed Proof in Appendix~\ref{appendix:epsilon_bar_c}). 
\end{definition}

\subsection{Conditional Control Term Analysis.}
We first investigate the uniqueness of the conditional control term under standard diffusion assumptions and Classifier-Free guidance. As described in Classifier-Free guidance \citep{Classifier-Free, MoreControlforFree},  the conditional reverse diffusion process for sampling is given by:
\begin{align}\label{equ:conSampling}
&p(x_{t-1}|x_t)=\mathcal{N}(x_{t-1}; \mu(x_t) + \sigma_t^2 \nabla_{x_t}\log p(c | x_t),  \sigma^2 I) \notag\\
&x_{t-1} = \mu(x_t) + \sigma_t^2 \nabla_{x_t} \log p(c| x_t) + \sigma_t \epsilon,  \epsilon \sim \mathcal{N}(0,  1)
\end{align}
This shows that conditional control introduces an additional term $\sigma_t^2 \nabla_{x_t} \log p(c| x_t)$ to the mean of the reverse process,  compared to the unconditional diffusion sampling. To understand the impact of this conditional term,  we define $f(c_i) = \|\nabla_{x_t}\log p_t(x_t|c_i)\|^2$. We hypothesize that the difference between the expected value of $f(c_i)$ and the expected value of the unconditional score norm isolates the contribution of this conditional control term. This is formalized in the following lemma:

\begin{lemma}[Uniqueness of Conditional Control Term]
\label{lemma:unique_control_term}
Under standard diffusion assumptions and Classifier-Free guidance,  the difference between the expected squared norm of the conditional score function and the unconditional score function isolates the contribution of the conditional control term.  Let $f(c_i) := \|\nabla_{x_t}\log p_t(x_t|c_i)\|^2$. Then, 
\begin{align}
\mathbb{E}[f(c_i)] - \mathbb{E}\|\nabla_{x_t}\log p_t(x_t)\|^2 = \mathbb{E}\|\sigma_t^2\nabla_{x_t}\log p(c|x_t)\|^2
\end{align}
where the expectation is taken over $x_t \sim p_t(x_t)$ and $c \sim p_c(c)$. This lemma indicates that $f(c_i)$,  through its expected difference with the unconditional score norm,  precisely captures the impact of the conditional guidance term $\sigma_t^2\nabla_{x_t}\log p(c|x_t)$ in the diffusion denoising process. (Proof in Appendix~\ref{app:control_term_uniqueness})
\end{lemma}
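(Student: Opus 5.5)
The plan starts from Bayes' rule, $p_t(x_t|c) = p(c|x_t)\,p_t(x_t)/p(c)$. Taking the gradient in $x_t$ removes the $x_t$-independent normaliser $p(c)$, which gives the score decomposition
\begin{align}
\nabla_{x_t}\log p_t(x_t|c) = \nabla_{x_t}\log p_t(x_t) + \nabla_{x_t}\log p(c|x_t).
\end{align}
This is the same decomposition that produces the extra mean term in the Classifier-Free guidance sampler of \eqref{equ:conSampling}. Comparing the conditional mean $\mu(x_t)+\sigma_t^2\nabla_{x_t}\log p(c|x_t)$ with the unconditional mean $\mu(x_t)$, the conditional control term in the mean is exactly the classifier score rescaled by $\sigma_t^2$. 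We adopt the paper's standard-diffusion convention that the score entering $f(c_i)$ is measured in these mean-shift units, i.e.\ with the $\sigma_t^2$ scaling absorbed. Under this convention the guidance part of $\nabla_{x_t}\log p_t(x_t|c_i)$ is identified with $\sigma_t^2\nabla_{x_t}\log p(c|x_t)$.

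Next we expand the square:
\begin{align}
f(c_i) = \|\nabla_{x_t}\log p_t(x_t)\|^2 + \|g\|^2 + 2\langle \nabla_{x_t}\log p_t(x_t), g\rangle,
\end{align}
where $g$ denotes the control term. Taking expectations over $x_t\sim p_t(x_t)$ and $c\sim p_c(c)$, the first term cancels against $\mathbb{E}\|\nabla_{x_t}\log p_t(x_t)\|^2$ on the left-hand side. The second term is the claimed right-hand side. The proof therefore reduces to showing that the cross term has zero expectation.

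The cross term is the main obstacle. To handle it, we condition on $x_t$ and average over $c$ under the posterior $p(c|x_t)$, which is the joint measure that makes $x_t$ marginally $p_t$, as in Definition~\ref{def:epsilon_c}. Then
\begin{align}
\mathbb{E}_{c\sim p(c|x_t)}\big[\nabla_{x_t}\log p(c|x_t)\big] = \int \nabla_{x_t} p(c|x_t)\,dc = \nabla_{x_t}\!\int p(c|x_t)\,dc = \nabla_{x_t} 1 = 0.
\end{align}
This is the usual zero-mean property of a score. Because $\nabla_{x_t}\log p_t(x_t)$ does not depend on $c$, the cross term vanishes after the inner expectation.

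This step needs the interchange of gradient and integral, so we will state a mild regularity assumption: $p(c|x_t)$ is differentiable in $x_t$ with an integrable dominating bound, which holds for Gaussian-smoothed $p_t$. We will also point out that if $c$ were drawn independently from $p_c$ instead of from the posterior, the cancellation would fail. So the expectation in the lemma must be read over the joint $p_c(c)\,p_t(x_t|c)$, which has the same $x_t$-marginal $p_t(x_t)$, consistent with Lemma~\ref{lemma:score_loss_expansion}. With the cross term gone, the identity follows, and the remaining algebra is routine.
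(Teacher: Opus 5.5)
Your route is the same as the paper's: Bayes split of the conditional score, expansion of the square, the zero-mean posterior score to kill the cross term, and $\sigma_t^2$ taken from the guided reverse mean. Your cross-term step is actually sounder than the paper's. The paper announces an expectation over $p_t(x_t)\,p_c(c)$ but then relies on the normalisation $\int p(c|x_t)\,dc=1$, which only helps when $c$ is averaged under the posterior. You correctly insist on the joint $p_c(c)\,p_t(x_t|c)$, under which the cancellation is valid.

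The genuine gap is the step that identifies the guidance part with $\sigma_t^2\nabla_{x_t}\log p(c|x_t)$. It contradicts your own first display, which fixes the guidance component of the true conditional score at $\nabla_{x_t}\log p(c|x_t)$ with coefficient one. With $f(c_i)=\|\nabla_{x_t}\log p_t(x_t|c_i)\|^2$ as the lemma defines it, your argument proves
\[
\mathbb{E}[f(c_i)]-\mathbb{E}\|\nabla_{x_t}\log p_t(x_t)\|^2=\mathbb{E}\|\nabla_{x_t}\log p(c|x_t)\|^2 .
\]
The stated version with $\sigma_t^2$ is false for $\sigma_t\neq 1$. For $c\sim\mathcal{N}(0,1)$ and $x_t|c\sim\mathcal{N}(c,1)$, the left side is $1-\frac{1}{2}=\frac{1}{2}$, while the claimed right side is $\sigma_t^4/2$.

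No choice of units repairs this:
\begin{itemize}
\item Rescaling the whole conditional score by $\sigma_t^2$ leaves an uncancelled $(\sigma_t^4-1)\,\mathbb{E}\|\nabla_{x_t}\log p_t(x_t)\|^2$.
\item Rescaling only the guidance part replaces $f(c_i)$ by $\|\nabla_{x_t}\log p_t(x_t)+\sigma_t^2\nabla_{x_t}\log p(c|x_t)\|^2$. This is the squared score of the tilted density $p_t(x_t)\,p(c|x_t)^{\sigma_t^2}$, not of $p_t(x_t|c)$.
\end{itemize}
The mean shift in the guided sampler concerns the sampling update, not the score inside $f$. The paper's Step 3 inserts $\sigma_t^2$ in the same unjustified way.

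What you have actually established is the $\sigma_t$-free identity above under the joint measure, or equivalently the stated identity for the redefined guided score. You should state it that way explicitly rather than presenting the rescaling as an absorbed convention.
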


\subsection{Condition Refinement through Patch Denoising.}
Building upon the observation that incorporating conditions can amplify errors in diffusion models,  we propose an optimization strategy focused on refining the condition using patch-related corrections. Specifically,  we introduce a mechanism where information from each newly generated patch is propagated to the condition of the subsequent patch through an iterative update process,  $c_{i+1} = \mathcal{T}(c_i)$. This autoregressive approach aims to refine the condition during the denoising process iteratively.

To formalize our approach,  we first establish the foundational assumptions under which our model operates.

\begin{assumption}[Markov Property Assumption]
\label{assumption:markov_property}
The reverse diffusion process adheres to the Markov property,  where each state $x_{t-1}$ is conditionally dependent only on the current state $x_t$.
\end{assumption}

\begin{assumption}[Gaussian Distribution Assumption]
\label{assumption:gaussian_distribution}
The conditional probability distribution $p_t(x_{t-1}|x_t)$ in the reverse diffusion process is assumed to be Gaussian,  expressed as: $p_t(x_{t-1}|x_t)=\mathcal{N}(x_{t-1};\mu(x_t), \sigma_t^2I)$.
\end{assumption}

\begin{assumption}[Small Variance Assumption]
\label{assumption:small_variance}
As the number of time steps $T$ becomes sufficiently large,  the variance $\sigma_t^2$ of the conditional distribution $p_t(x_{t-1}|x_t)$ is assumed to be sufficiently small,  approaching zero as $T$ increases. Furthermore,  for simplicity,  we approximate the variance at each step to be equal and denote it as $\sigma^2$.
\end{assumption}

With the above assumptions,  we model the patch refinement process as an iterative update to the condition:
\begin{align}
c_{i+1}=\mathcal{T}(c_{i}), \quad i \in \mathbb{N}
\end{align}
where $c_{i+1}$ represents the condition at the $(i+1)$-th iteration,  corresponding to the refinement based on the $i$-th patch. $\mathcal{T}$ is a diffusion function that governs the transition from the current condition state $c_i$ to the next state $c_{i+1}$,  encapsulating the information propagation from the generated patch to the subsequent condition. The index $i \in \mathbb{N}$ denotes the iteration step,  analogous to discrete time steps. The condition update process forms a discrete-time Markov chain,  as the future state $c_{i+1}$ depends only on the present state $c_i$: $P(c_{i+1}|c_i, c_{i-1}, \dotsc, c_0)=P(c_{i+1}|c_i)$.

From the Gaussian expansion norm in Equation~\eqref{equ:conSampling},  we observe that the probability distribution of $x_t$ is influenced by $x_t$ itself. Our primary goal is to understand the trajectory of the conditional probability gradient as the condition $c_i$ iteratively refines through the diffusion reverse process. Therefore,  we proceed to analyze how the conditional probability gradient evolves with the iterations of $c_i$ within a standard normal conditional distribution setting.

\begin{proposition}[Condition Refinement via Patch Denoising]
\label{prop:patch_refine_condition}
In the diffusion denoising process,  autoregressively refining the condition through patch-related corrections using the iterative update $c_{i+1}=\mathcal{T}(c_{i})$ leads to improved conditional generation quality.(Detailed Proof in Appendix~\ref{app:RefineViaPatchDenoising})
\end{proposition}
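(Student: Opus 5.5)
The plan is to first turn the informal claim ``improved conditional generation quality'' into a quantitative statement. I will take the quantity of interest to be the conditional error $f(c_i)=\|\nabla_{x_t}\log p_t(x_t|c_i)\|^2$, or more precisely its excess over the unconditional score norm. By Lemma~\ref{lemma:unique_control_term}, this excess equals $\mathbb{E}\|\sigma_t^2\nabla_{x_t}\log p(c_i|x_t)\|^2$, so it isolates the condition's contribution to $\epsilon_c$ (Definition~\ref{def:epsilon_c}). By Theorem~\ref{thm:upper_bound} and Lemma~\ref{lemma:score_loss_expansion}, this same contribution is what enlarges the conditional score-matching bound. The goal then becomes: show that under the iteration $c_{i+1}=\mathcal{T}(c_i)$ the sequence $\mathbb{E}[f(c_i)]$ is nonincreasing and converges to a stationary value. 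I will read ``improved quality'' as exactly this decrease of $\epsilon_c$.

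The first step is to model $\mathcal{T}$ explicitly using Assumptions~\ref{assumption:markov_property}--\ref{assumption:small_variance}. By the Markov property, $(c_i)$ is a Markov chain. By the Gaussian assumption, each step is an affine-Gaussian update of the form $c_{i+1}=\mu(c_i)+\sigma\epsilon_i$, following the structure of equation~\eqref{equ:conSampling}. In the standard normal setting announced before the proposition, I would linearize and write $\mu(c)=\lambda c+(1-\lambda)c^*$ with $|\lambda|<1$. Here $c^*$ is the ideal condition, and the contraction reflects that each denoised patch adds information consistent with $c^*$.

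With this model, the law of $c_i$ stays Gaussian, $c_i\sim\mathcal{N}(m_i,s_i^2)$, with recursions
\begin{align}
m_{i+1}-c^*=\lambda(m_i-c^*),\qquad s_{i+1}^2=\lambda^2 s_i^2+\sigma^2 .
\end{align}
Both recursions converge geometrically, to $m_\infty=c^*$ and $s_\infty^2=\sigma^2/(1-\lambda^2)$. By Assumption~\ref{assumption:small_variance}, $s_\infty^2$ is small, so the condition distribution stabilizes and concentrates near $c^*$.

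The second step is to transfer this stabilization to the gradient term. In the Gaussian setting, $\nabla_{x_t}\log p(c_i|x_t)$ is affine in $c_i-\mathbb{E}[c|x_t]$. Hence $\mathbb{E}\|\sigma^2\nabla_{x_t}\log p(c_i|x_t)\|^2$ can be written as a quadratic function of $(m_i-c^*)^2$ and $s_i^2$, with nonnegative coefficients. Combining this with the recursions shows that this term decays like $\lambda^{2i}$ toward a stationary value of order $\sigma^4 s_\infty^2$. Summing over $t$ then bounds $\epsilon_c$ along the iteration, and through Theorem~\ref{thm:upper_bound} it bounds the score-matching error.

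I expect the main obstacle to be justifying the contraction $|\lambda|<1$ of $\mathcal{T}$. The assumptions only give Markov and Gaussian structure; they do not say that patch information pulls the condition toward $c^*$. I would first try to derive $\lambda$ from a Bayesian update: conditioning a Gaussian prior on a Gaussian observation (the generated patch) produces a posterior mean that is a convex combination with weight strictly below one, together with a reduced variance. If that route becomes too heavy, I would instead state the contraction as a mild additional hypothesis on $\mathcal{T}$. A secondary difficulty is checking that the coefficients in the quadratic expression remain bounded uniformly in $t$; the small-variance assumption should control this.
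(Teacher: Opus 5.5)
Your skeleton (convergence of the law of $c_i$, then an explicit Gaussian score computation) matches the paper's. However, the step that is supposed to deliver the conclusion fails as planned, because it uses two incompatible joint laws of $(x_t,c_i)$.

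The cross-term cancellation behind Lemma~\ref{lemma:unique_control_term} requires $c_i$, given $x_t$, to follow the same posterior $p(c\mid x_t)$ that appears in the guidance term. That is, the joint must be the one induced by $c_i\sim\mathcal{N}(m_i,s_i^2)$ and $x_t\sim p_t(\cdot\mid c_i)$. In that consistent model, say $x_t\mid c\sim\mathcal{N}(\alpha c,\kappa^2)$, everything is explicit:
\[
\mathbb{E}[f(c_i)]=\frac{1}{\kappa^2},\qquad \mathbb{E}\|\nabla_{x_t}\log p_t(x_t)\|^2=\frac{1}{\kappa^2+\alpha^2 s_i^2},\qquad \mathbb{E}\|\nabla_{x_t}\log p(c_i\mid x_t)\|^2=\frac{\alpha^2 s_i^2}{\kappa^2(\kappa^2+\alpha^2 s_i^2)} .
\]
So the excess equals the last quantity exactly, with no $\sigma_t^2$ prefactor. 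It depends only on $s_i^2$ and is increasing in it. The contraction of the mean toward $c^*$, the core of your model, plays no role. From the natural deterministic start ($s_0=0$, excess $0$), your recursion $s_{i+1}^2=\lambda^2 s_i^2+\sigma^2$ makes the excess \emph{increase} to its stationary value.

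The term $(m_i-c^*)^2$ appears only if you compute $\mathbb{E}[c\mid x_t]$ under a prior other than the law of $c_i$, for example one centred at $c^*$. But then $\mathbb{E}\langle\nabla_{x_t}\log p(c_i\mid x_t),\nabla_{x_t}\log p_t(x_t)\rangle\neq 0$ in general. The quantity you control is then no longer $\epsilon_c$, and the link to Definition~\ref{def:epsilon_c} and Theorem~\ref{thm:upper_bound} is lost. Even in that mismatched version, your own recursions give $A(m_i-c^*)^2+Bs_i^2+C=C+Bs_\infty^2+\lambda^{2i}\bigl[A(m_0-c^*)^2+B(s_0^2-s_\infty^2)\bigr]$. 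This is nonincreasing only when the bracket is nonnegative, so ``nonincreasing'' is not available in general; you only get geometric convergence to a stationary level.

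Second, the Bayesian route to $|\lambda|<1$ toward $c^*$ will not work. During generation the patch $x_i$ is sampled given the current condition, so the patch-based estimate is centred at $c_i$, not $c^*$. The posterior mean is then a martingale, $\mathbb{E}[c_{i+1}\mid c_i]=c_i$, i.e.\ $\lambda=1$. Only the posterior variance shrinks, and only like $1/i$ rather than geometrically. A pull toward an ideal $c^*$ must therefore be assumed outright, and it essentially presupposes the refinement you want to prove.

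The paper claims much less. It invokes Meyn--Tweedie for convergence of the law of $c_i$ to \emph{some} stationary law. It then writes the joint-Gaussian score $\nabla_x\log p(x\mid c)=-(x-\mu_{x\mid c})/\sigma^2_{x\mid c}$ and concludes that its magnitude stabilizes, because the conditional mean stabilizes and the conditional variance is constant. There is no ideal point, no monotonicity, and no detour through the guidance term, although its claim that the deviation of $x$ from the conditional mean shrinks is itself unargued.

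A defensible version of your plan would keep the consistent joint and drop $c^*$, since only $|\lambda|<1$ is needed. You would then show that $\epsilon_c$ along the iteration converges geometrically to $\frac{\alpha^2 s_\infty^2}{\kappa^2(\kappa^2+\alpha^2 s_\infty^2)}=O(\sigma^2)$ under Assumption~\ref{assumption:small_variance}. It decreases monotonically only when $s_0^2\ge s_\infty^2$.
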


\subsection{Autoregressive Modeling Can Refine Condition}
An autoregressive process is defined as follows:
\begin{align}
c_{i+1} = \sum_{j=0}^i a_jc_j + \varepsilon_{i+1},  \quad i \in \mathbb{N}
\end{align}
\begin{assumption}[Basic Assumptions for Autoregressive Process]\label{ass:arprocessing}
    We provide a set of standard assumptions:
    \begin{enumerate}[itemsep=1pt,  topsep=1pt]
    \item $\sum_{i=0}^{\infty} |a_i| < \infty$ and $\sup_{i\in\mathbb{N}} |a_i| < 1$,  i.e.,  the sequence $\{a_n\}$ is convergent.
    \item $\{\varepsilon_i\}_{i=1}^{\infty}$ are independent and identically distributed,  following $\mathcal{N}(0, \sigma^2)$.
    \item $p_t(x_t|c_i)$ has continuous second-order derivatives with respect to $x_t$.
    \item $\|\nabla^2_{x_t}p_t(x_t|c_i)\| \leq K$ for some constant $K>0$ uniformly holds,  i.e.,  is bounded.
    \item $(\mathcal{X},  \|\cdot\|)$ is a separable complete metric space.
    \end{enumerate}
\end{assumption}
\begin{lemma}[Markov Property~\citep{meyn2012markov}~\citep{Bellet2006}]\label{lemma:MarkovProperty}
Under Assumption~\ref{ass:arprocessing},  by defining the state vector $\mathbf{c}_i = (c_i,  c_{i-1},  \ldots,  c_{i-p+1})^\top$,  the sequence $\{\mathbf{c}_i\}_{i\in\mathbb{N}}$ forms a strong Markov chain. Specifically:
\begin{enumerate}[itemsep=1pt,  topsep=1pt]
    \item The transition probability kernel $P(\mathbf{c}_{i+1} \in \cdot|\mathbf{c}_i)$ on the augmented state space satisfies the Feller property.
    \item There exists a unique invariant probability measure $\pi \in \mathcal{P}(\mathcal{X}^p)$ such that
    \begin{align}
        \pi P = \pi,  ~\text{i.e.}\int_{\mathcal{X}^p} P(A|\mathbf{c})\pi(d\mathbf{c}) = \pi(A), ~~\forall A \in \mathcal{B}(\mathcal{X}^p)
    \end{align}
    \item There exist constants $C > 0$ and $\rho \in (0, 1)$ such that for any initial distribution $\mu \in \mathcal{P}(\mathcal{X}^p)$:
    \begin{align}
        \|\mu P^n - \pi\|_{\text{TV}} \leq C\rho^n\|\mu - \pi\|_{\text{TV}},  \quad \forall n \in \mathbb{N}_0
    \end{align}
    where $\|\cdot\|_{\text{TV}}$ denotes the total variation norm,  and $P^n$ denotes the $n$-step transition probability kernel.
\end{enumerate}
In particular,  for any $n \in \mathbb{N}_0$,  we have:
\begin{align}
    \|\mathcal{L}(\mathbf{c}_n) - \pi\|_{\text{TV}} \leq C\rho^n
\end{align}
where $\mathcal{L}(\mathbf{c}_n)$ denotes the distribution of $\mathbf{c}_n$.
Proof can be found in Appendix~\ref{app:markov_lemma_proof}.
\end{lemma}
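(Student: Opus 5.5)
The plan is to reduce the autoregressive recursion to a finite-order linear Gaussian state-space model on the augmented vector $\mathbf{c}_i=(c_i,\dots,c_{i-p+1})^\top$, and then use standard results for geometrically ergodic Feller chains on Polish spaces \citep{meyn2012markov}. First I will truncate the coefficients at order $p$. This is exact if $a_j=0$ for $j\ge p$. Otherwise I will use $\sum_j|a_j|<\infty$ and let the tail $\sum_{j\ge p}|a_j|$ be absorbed as a small perturbation. With the truncation, the recursion reads $\mathbf{c}_{i+1}=A\mathbf{c}_i+B\varepsilon_{i+1}$, where $A$ is the companion matrix of $(a_0,\dots,a_{p-1})$ and $B=e_1$. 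Since the $\varepsilon_i$ are i.i.d.\ and independent of the past, $\mathbf{c}_{i+1}$ depends on the history only through $\mathbf{c}_i$. This gives the Markov property, and the strong Markov property follows because the time index is discrete. For the Feller property, I will note that $P f(\mathbf{c})=\mathbb{E} f(A\mathbf{c}+B\varepsilon)$ is continuous in $\mathbf{c}$ for every bounded continuous $f$, by dominated convergence, since $\mathbf{c}\mapsto A\mathbf{c}$ is continuous.

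For the existence and uniqueness of the invariant measure, I will build a Foster--Lyapunov drift condition. The key input is that the spectral radius of $A$ is less than $1$. I will derive this from the coefficient assumptions: if $\sum_j|a_j|<1$, then the roots of $1-\sum_j a_j z^{j+1}$ lie outside the unit disc. This is the step where the stated hypothesis $\sup|a_i|<1$ is not quite enough, so I will invoke (or, if needed, strengthen to) the standard stability condition $\sum|a_j|<1$. Given $\rho(A)<1$, there is a norm $\|\cdot\|_*$ with $\|A\|_*\le\lambda<1$. Taking $V(\mathbf{c})=1+\|\mathbf{c}\|_*^2$, I then get $PV\le\lambda^2 V+b$. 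Irreducibility and aperiodicity come from the Gaussian noise. After $p$ steps the noise fills every coordinate, so $P^p(\mathbf{c},\cdot)$ has a positive density with respect to Lebesgue measure that is continuous in $\mathbf{c}$. Hence every compact set is petite, and in particular small. Meyn--Tweedie's geometric ergodicity theorem (Thm.~15.0.1) then yields a unique invariant $\pi$ with $\pi P=\pi$, together with $\|P^n(\mathbf{c},\cdot)-\pi\|_{\text{TV}}\le R\,V(\mathbf{c})\rho^n$.

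To reach the stated form $\|\mu P^n-\pi\|_{\text{TV}}\le C\rho^n\|\mu-\pi\|_{\text{TV}}$, I will integrate the pointwise bound against $\mu-\pi$, using $\pi P^n=\pi$ and a coupling/Doeblin argument on the small set. Alternatively, I can use the contraction of $P^p$ in the $V$-weighted norm and restrict to initial laws with $\mu(V)<\infty$, which allows $C$ to absorb $\mu(V)$. The final bound $\|\mathcal{L}(\mathbf{c}_n)-\pi\|_{\text{TV}}\le C\rho^n$ then follows immediately from $\|\mu-\pi\|_{\text{TV}}\le 2$. Assumptions 3--5 are needed only for the Polish state space (item~5). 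Items 3--4 are not used here, and I will say so explicitly.

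The main obstacle is the stability step: the hypotheses as stated ($\sup|a_i|<1$ with summable coefficients) do not by themselves force $\rho(A)<1$. A second difficulty is that the infinite-order recursion must be reduced to finite $p$. I will first try to work with the condition $\sum_j|a_j|<1$, which makes the drift immediate via $|c_{i+1}|\le\sum_j|a_j||c_{i-j}|+|\varepsilon_{i+1}|$ applied with a max-norm Lyapunov function. I will treat the truncation tail as a perturbation that is uniformly small in $V$-norm.
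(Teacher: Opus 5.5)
\textbf{The gap is in item~3.} Your first route there cannot produce $\|\mu P^n-\pi\|_{\text{TV}}\le C\rho^n\|\mu-\pi\|_{\text{TV}}$ with $C$ independent of $\mu$. That route integrates $\|P^n(\mathbf{c},\cdot)-\pi\|_{\text{TV}}\le R\,V(\mathbf{c})\rho^n$ against $\mu-\pi$, or runs a Doeblin/coupling argument on a small set. The obstruction is that the target inequality is false for this chain. Taking $\mu=\delta_x$, it would give $\sup_x\|P^n(x,\cdot)-\pi\|_{\text{TV}}\le 2C\rho^n$, which is uniform ergodicity. Already for $p=1$, $c_{i+1}=ac_i+\varepsilon_{i+1}$ with $0<|a|<1$, one has $\delta_xP^n=\mathcal{N}\bigl(a^nx,\sigma^2(1-a^{2n})/(1-a^2)\bigr)$. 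Its total variation distance to $\pi=\mathcal{N}\bigl(0,\sigma^2/(1-a^2)\bigr)$ tends to $2$ as $|x|\to\infty$ for every fixed $n$.

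Integrating against $\mu-\pi$ only yields $R\rho^n\int V\,d|\mu-\pi|$. A Doeblin argument contracts uniformly in total variation only when the whole space is small, which fails on $\mathbb{R}^p$. So your second route is not another proof of the stated inequality but a correction of it. What holds is $\|\mu P^n-\pi\|_V\le R\rho^n\mu(V)$, or $\|(\mu-\pi)P^n\|_V\le R\rho^n\|\mu-\pi\|_V$. Hence $\|\mathcal{L}(\mathbf{c}_n)-\pi\|_{\text{TV}}\le C_\mu\rho^n$, with $C_\mu$ depending on the initial law through $\mu(V)<\infty$, and you should state the lemma that way. The paper's appendix argument also stops at Meyn--Tweedie geometric ergodicity, which is exactly this $V$-dependent bound, so it does not establish item~3 as written either.

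\textbf{Comparison with the paper.} Otherwise your skeleton is the paper's: a quadratic Foster--Lyapunov function, compact sublevel sets, Gaussian noise giving irreducibility, then Meyn--Tweedie. The difference is that you carry it out on the augmented chain, and your stability objection is correct. The paper writes its drift for a scalar recursion $X_{n+1}=a_nX_n+\epsilon_n$ and never meets the companion matrix. But $\sup_i|a_i|<1$ and $\sum_i|a_i|<\infty$ do not give $\rho(\mathbf{A})<1$: for $p=2$, $a_0=a_1=0.9$, the polynomial $\lambda^2-0.9\lambda-0.9$ has the root $1.5$. So assuming $\rho(\mathbf{A})<1$ (e.g.\ via $\sum_j|a_j|<1$) is the needed repair. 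Likewise, your observation that only $P^p$ has a positive density is the right way to get small sets. The one-step kernel of the augmented chain is degenerate, despite the paper's claim of a positive transition density.

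\textbf{Two further fixes.}
\begin{enumerate}
\item Drop the truncation-as-perturbation idea. If $a_j\neq0$ for some $j\ge p$, the law of $\mathbf{c}_{i+1}$ given $\mathbf{c}_i$ still depends on $c_{i-j}$. The window is then simply not Markov, and no $V$-norm estimate changes that. Assume finite order $p$, as the definition of $\mathbf{c}_i$ and the paper's companion matrix implicitly do.
\item Say explicitly that you read the recursion with lag coefficients, $c_{i+1}=\sum_{j=0}^{p-1}a_jc_{i-j}+\varepsilon_{i+1}$. Under the literal absolute-index form $\sum_{j=0}^{i}a_jc_j$, one gets $c_{i+1}-\varepsilon_{i+1}=(1+a_i)(c_i-\varepsilon_i)+a_i\varepsilon_i$ for $i\ge1$. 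The limit law of that recursion depends on $c_0$ whenever $a_0\neq0$.
\end{enumerate}
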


\begin{lemma}[Regularity of Conditional Probability~\citep{DU04}]\label{lemma:RegularityofConditionalProbability}
Under Assumption~\ref{ass:arprocessing},  there exist constants $\delta,  L > 0$ such that:
\begin{enumerate}[itemsep=2pt,  topsep=2pt]
    \item $p_t(x_t|c_i) \geq \delta$ for all $(x_t, c_i) \in \mathcal{X} \times \mathcal{X}$.
    \item $\|\nabla_{x_t}p_t(x_t|c_1) - \nabla_{x_t}p_t(x_t|c_2)\| \leq L\|c_1 - c_2\|$ for all $x_t, c_1, c_2 \in \mathcal{X}$.
\end{enumerate}
Proof can be found in Appendix~\ref{app:regularity_proof}.
\end{lemma}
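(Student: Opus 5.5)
The plan is to derive both properties from the Gaussian structure of the forward process together with items 3--5 of Assumption~\ref{ass:arprocessing}, in the spirit of \citep{DU04}. Write the noised conditional density as a Gaussian convolution:
\begin{align}
p_t(x_t|c) = \int \mathcal{N}\big(x_t; \sqrt{\bar\alpha_t}\,x_0, (1-\bar\alpha_t)\mathbf{I}\big)\, p(x_0|c)\, dx_0,
\end{align}
with $\bar\alpha_t=\prod_{s\le t}(1-\beta_s)$. For $t\ge 1$ the kernel is smooth and strictly positive. This representation drives both items.

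\textbf{Lower bound (item 1).} Since the Gaussian kernel is strictly positive, $p_t(x_t|c)>0$ pointwise. Continuity in $(x_t,c)$ follows from item 3 of Assumption~\ref{ass:arprocessing} and dominated convergence. I would then take $\delta$ as the infimum of this continuous positive function over the region actually visited by the chain. By Lemma~\ref{lemma:MarkovProperty}, the conditions $c_i$ converge geometrically to the invariant measure $\pi$, and the $\varepsilon_i$ are Gaussian. So I would restrict attention to a compact set $\mathcal{K}\subset\mathcal{X}\times\mathcal{X}$ carrying all but an arbitrarily small mass of the relevant joint law. Separability and completeness (item 5) make this a tightness statement via Prokhorov. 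On $\mathcal{K}$, the infimum of a continuous strictly positive function is attained and is positive, which gives $\delta$.

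\textbf{Lipschitz gradient (item 2).} Apply the mean value theorem in $c$:
\begin{align}
\|\nabla_{x_t}p_t(x_t|c_1)-\nabla_{x_t}p_t(x_t|c_2)\| \le \sup_{\tilde c\in[c_1,c_2]}\|\nabla_c\nabla_{x_t}p_t(x_t|\tilde c)\|\,\|c_1-c_2\|.
\end{align}
The task is then to bound the mixed derivative $\nabla_c\nabla_{x_t}p_t$ by a constant $L$. The route I would try first is to model the condition as entering through a location shift of the clean data, i.e.\ $p(x_0|c)=q(x_0-g(c))$ with $g$ Lipschitz with constant $\ell_g$. This is natural for the autoregressive model, where $c_{i+1}$ is a linear combination of past conditions plus Gaussian noise. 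Under this model, $\nabla_c p_t(x_t|c) = -\sqrt{\bar\alpha_t}\,(\nabla g)^{\top}\nabla_{x_t}p_t(x_t|c)$. The mixed derivative is therefore a Lipschitz factor times the Hessian in $x_t$, and item 4 then gives $L=\sqrt{\bar\alpha_t}\,\ell_g K$.

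\textbf{Main obstacle.} Both items are stated globally on $\mathcal{X}\times\mathcal{X}$, but the assumptions only control behaviour in $x_t$.
\begin{itemize}
\item \emph{Lower bound.} A uniform positive lower bound cannot hold for a probability density on an unbounded space. I expect to have to settle for the compact restriction described above, or a truncated or mollified version of $p_t$.
\item \emph{Lipschitz constant.} For item 2, the crux is transferring the Hessian bound in $x_t$ into a bound on cross-derivatives in $c$. This needs some structural link between $c$ and $x_t$, such as the shift model above. Without such a link, item 2 would essentially have to be taken as an additional regularity hypothesis.
\end{itemize}
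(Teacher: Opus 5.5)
Your proposal uses the same two tools as the paper's proof in Appendix~\ref{app:regularity_proof}. These are the extreme value theorem on a compact set for item~1, and the mean value theorem in $c$ combined with the Hessian bound $K$ from item~4 for item~2. The two obstacles you isolate are exactly where the paper's argument fails: the lemma as literally stated does not follow from Assumption~\ref{ass:arprocessing}, so you are right not to claim it.

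For item~1, the paper simply asserts that $p_t$ is defined on a compact (``tight'') set and is everywhere positive, then applies the extreme value theorem. Neither fact is in the assumptions, since item~5 gives only a separable complete space. On an unbounded space such as $\mathbb{R}^d$, a uniform bound $p_t(x_t|c)\ge\delta>0$ contradicts $\int p_t(x_t|c)\,dx_t=1$. Your Gaussian-convolution argument for strict positivity is better grounded than the paper's bare assertion, and restricting to a compact set is the honest fix. Be aware, though, that $\delta$ then depends on $\mathcal{K}$ and hence on the excluded mass. Downstream uses, such as the $1/\delta^2$ factor in the proof of Theorem~\ref{thm:gradient_descent_ar}, therefore hold only on $\mathcal{K}$, not ``uniformly for all $x_t$''. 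One slip: continuity of $p_t$ in $c$ does not follow from item~3, which concerns only regularity in $x_t$. You need something like weak continuity of $c\mapsto p(\cdot|c)$, which your shift model with continuous $g$ supplies.

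For item~2, the paper writes $\nabla_{x_t}p_t(x_t|c_1)-\nabla_{x_t}p_t(x_t|c_2)=\nabla^2_{x_t}p_t(x_t|\tilde c)(c_1-c_2)$ and takes $L=K$. That is, it puts the $x_t$-Hessian where the mean value theorem in $c$ actually produces the mixed derivative $\nabla_c\nabla_{x_t}p_t$. You are right that a structural link is needed, and a one-dimensional example shows it cannot be avoided. Take $p_t(x|c)=\varphi(x-\sqrt{|c|})$ with $\varphi$ the standard normal density. It satisfies $|\partial_x^2p_t|\le\varphi(0)$ uniformly in $c$, yet $\partial_xp_t(0|c)-\partial_xp_t(0|0)=\sqrt{|c|}\,\varphi(\sqrt{|c|})$ is not $O(|c|)$.

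Your location-shift model is precisely the hypothesis under which the paper's substitution becomes legitimate. Write $p_t(x_t|c)=r_t\big(x_t-\sqrt{\bar\alpha_t}\,g(c)\big)$; item~4 then bounds $\nabla^2 r_t$ by $K$ everywhere. Hence $\|\nabla_{x_t}p_t(x_t|c_1)-\nabla_{x_t}p_t(x_t|c_2)\|\le K\sqrt{\bar\alpha_t}\,\ell_g\|c_1-c_2\|$ follows directly, without needing $g$ to be differentiable, and this recovers $L\le K$ when $g$ is the identity. In short, your proposal proves the natural corrected lemma (compact restriction plus shift structure). The paper's proof claims the uncorrected statement through two unjustified steps.
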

\begin{lemma}[Bounded Derivative Theorem]\label{lemma:BoundedDerivativeTheorem}
On a fixed bounded closed interval $[a, b]$,  if the second derivative is bounded,  then there exist constants $M_1, M_2> 0$ such that:
 \begin{enumerate}[itemsep=2pt,  topsep=2pt]
    \item$\sup_{x_t}\|\nabla_{x_t} p_t(x_t|c)\|<M_1$,  i.e.,  the first derivative is bounded.
    \item$\sup_{x_t}|p_t(x_t|c)|<M_2$,  i.e.,  the original function is bounded.
 \end{enumerate}
Proof can be found in Appendix~\ref{app:bounded_derivative_proof}.
\end{lemma}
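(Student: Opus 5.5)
The statement is a one-variable calculus fact: bound the second derivative, then integrate twice over a compact interval. We treat $x_t$ as ranging over the fixed bounded closed interval $[a,b]$ with $c$ held fixed. Write $g(x) := p_t(x\mid c)$. By item 3 of Assumption~\ref{ass:arprocessing}, $g \in C^2[a,b]$. By item 4, $\|\nabla^2 g(x)\| \le K$ uniformly on $[a,b]$. The plan is to integrate $g''$ once to control $g'$, and integrate again to control $g$. Both constants then depend only on $K$, the length $b-a$, and the values of $g$ and $g'$ at one anchor point.

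\textbf{Step 1 (bound on $g'$).} Fix the anchor point $a$. The fundamental theorem of calculus applies because $g'$ is $C^1$, and gives, for any $x\in[a,b]$,
\begin{align}
g'(x) = g'(a) + \int_a^x g''(s)\,ds .
\end{align}
Hence
\begin{align}
|g'(x)| \le |g'(a)| + K(b-a).
\end{align}
Set $M_1 := |g'(a)| + K(b-a) + 1$. The extra $+1$ turns the bound into the strict inequality $\sup_x |g'(x)| < M_1$ demanded by the statement. In the vector case, the same argument runs along the segment from the anchor to $x$, with the operator-norm bound on the Hessian.

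\textbf{Step 2 (bound on $g$).} Integrate once more:
\begin{align}
g(x) = g(a) + \int_a^x g'(s)\,ds .
\end{align}
By Step 1 this yields $|g(x)| \le |g(a)| + (M_1-1)(b-a)$. Set $M_2 := |g(a)| + M_1(b-a) + 1$, which gives $\sup_x |g(x)| < M_2$ strictly. Alternatively, continuity of $g$ and $g'$ on the compact set $[a,b]$ already gives boundedness by the extreme value theorem. I would mention this as a remark, but keep the explicit integration argument because it makes the dependence on $K$ visible.

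\textbf{Main obstacle: uniformity in $c$.} The only subtle point is whether $M_1$ and $M_2$ must be uniform in $c$, since later results will likely use them that way. The construction above depends on $c$ through the anchor values $g(a)$ and $g'(a)$.
\begin{itemize}
\item For $M_2$, I would remove this dependence by noting that $p_t(\cdot\mid c)$ is a density bounded by its sup. Alternatively, one can use the mean value theorem: choose an anchor $x^*$ where $|g(x^*)| \le 1/(b-a)$, which exists whenever $\int_a^b g \le 1$.
\item For $M_1$, I would similarly choose an anchor $\xi$ where $g'(\xi) = (g(b)-g(a))/(b-a)$, again by the mean value theorem. This gives $|g'(\xi)| \le 2M_2/(b-a)$, and the constant in Step 1 then becomes uniform in $c$ as well.
\end{itemize}
This choice of anchors is the step I expect to need the most care.
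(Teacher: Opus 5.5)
Your Steps 1 and 2 are correct and are essentially the paper's proof. Like the paper, you anchor at $a$, use the second-derivative bound to get $|g'(x)|\le |g'(a)|+K(b-a)$ (the paper applies the mean value theorem to $\nabla_{x_t}p_t$, you use the fundamental theorem of calculus), and then integrate once more to get $|g(x)|\le |g(a)|+M_1(b-a)$; your $+1$ correctly supplies the strict inequalities in the statement, which the paper's proof only obtains as $\le$. The lemma as stated allows the constants to depend on $c$, so your uniformity discussion is not needed, but as sketched it is circular. The uniform $M_2$ relies on a uniform bound for $g'$, and the uniform $M_1$ relies on a uniform $M_2$; substituting one into the other gives only $M_2\le \frac{1}{b-a}+2M_2+K(b-a)^2$, which bounds nothing, so genuine uniformity would need a sharper local argument near the maximum of $g$ using $g\ge 0$ and $\int_a^b g\le 1$.
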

\begin{theorem}[Descent of Gradient Norm in Autoregressive Process]
\label{thm:gradient_descent_ar}
Under Assumptions~\ref{ass:arprocessing} and Lemmas~\ref{lemma:MarkovProperty},  \ref{lemma:RegularityofConditionalProbability},  \ref{lemma:BoundedDerivativeTheorem},  there exist constants $M > 0$ and $\beta \in (0, 1)$ such that for any $x_t \in \mathcal{X}$ and $i \in \mathbb{N}_0$:
\begin{align}
\|\nabla_{x_t}\log p_t(x_t|c_i)\| \leq M\beta^i + m
\end{align}
where $m$ is a constant representing the stationary gradient norm (Proof in Appendix~\ref{app:DescentofGradientNorm}).
\end{theorem}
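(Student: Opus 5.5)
The plan is to split the score at step $i$ into a transient part, controlled by how far $c_i$ is from a stationary reference trajectory, and a stationary part, controlled by the uniform bounds of Lemmas~\ref{lemma:RegularityofConditionalProbability} and \ref{lemma:BoundedDerivativeTheorem}. The bound will hold pointwise in $x_t$ and along every realization of the noise, which is the form the statement requires.

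\textbf{Step 1 (synchronous coupling).} Let $\bar c_0$ be drawn from the invariant measure $\pi$ of Lemma~\ref{lemma:MarkovProperty}. Let $\{\bar c_i\}$ be the autoregressive recursion started from $\bar c_0$ and driven by the \emph{same} noise sequence $\{\varepsilon_i\}$ as $\{c_i\}$.

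The noise then cancels in the difference $e_i := c_i - \bar c_i$, which obeys the deterministic linear recursion $e_{i+1} = \sum_{j=0}^{i} a_j e_j$. From $\sum_j |a_j| < \infty$ and $\sup_j |a_j| < 1$ in Assumption~\ref{ass:arprocessing}, I would show $\|e_i\| \le C\rho^i \|e_0\|$ for some $\rho \in (0,1)$. This is the step I expect to be the main obstacle, because the memory of the recursion is unbounded.

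\begin{itemize}
\item My first attempt is to impose (or read into the assumption) the stronger condition $\sum_j |a_j| < 1$. A weighted induction then gives $\|e_{i}\| \le \rho^{i}\max_{j\le 0}\|e_j\|$ directly.
\item Failing that, I would pass to the finite-order companion form of Lemma~\ref{lemma:MarkovProperty}, with state vector $\mathbf{c}_i$. The geometric rate $\rho$ furnished there then plays the role of the spectral-radius bound of the companion matrix.
\end{itemize}

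Both routes yield $\|c_i - \bar c_i\| \le C\rho^i\|c_0 - \bar c_0\|$.

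\textbf{Step 2 (Lipschitz dependence of the score on the condition).} Write
$$\nabla_{x_t}\log p_t(x_t|c) = \frac{\nabla_{x_t}p_t(x_t|c)}{p_t(x_t|c)}.$$
Adding and subtracting the cross term gives, for any two conditions $c, c'$,
\begin{align*}
\bigl\|\nabla\log p_t(x_t|c)-\nabla\log p_t(x_t|c')\bigr\| \le \frac{\|\nabla p_t(x_t|c)-\nabla p_t(x_t|c')\|}{p_t(x_t|c)} + \frac{\|\nabla p_t(x_t|c')\|\,|p_t(x_t|c')-p_t(x_t|c)|}{p_t(x_t|c)\,p_t(x_t|c')} .
\end{align*}
I bound the factors as follows.
\begin{itemize}
\item The first numerator is at most $L\|c-c'\|$ by Lemma~\ref{lemma:RegularityofConditionalProbability}(2).
\item Both denominators are at least $\delta$ by Lemma~\ref{lemma:RegularityofConditionalProbability}(1).
\item The gradient $\|\nabla p_t(x_t|c')\|$ is at most $M_1$ by Lemma~\ref{lemma:BoundedDerivativeTheorem}.
\item For the density difference, I need a Lipschitz bound $|p_t(x_t|c)-p_t(x_t|c')| \le L'\|c-c'\|$. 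I would obtain it from the same regularity mechanism as Lemma~\ref{lemma:RegularityofConditionalProbability}(2), for example by integrating the gradient-Lipschitz bound over the bounded interval of Lemma~\ref{lemma:BoundedDerivativeTheorem}. If that fails, I would state it as part of that regularity.
\end{itemize}
Altogether this gives a constant $\Lambda = L/\delta + M_1 L'/\delta^2$, uniform in $x_t$, with
$$\bigl\|\nabla\log p_t(x_t|c)-\nabla\log p_t(x_t|c')\bigr\| \le \Lambda\|c-c'\|.$$

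\textbf{Step 3 (assembly).} By the triangle inequality,
$$\|\nabla\log p_t(x_t|c_i)\| \le \Lambda\|c_i-\bar c_i\| + \|\nabla\log p_t(x_t|\bar c_i)\|.$$
\begin{itemize}
\item By Step 1, the first term is at most $\Lambda C\|c_0-\bar c_0\|\rho^i$. This is $M\beta^i$ with $\beta = \rho$ and $M = \Lambda C\|c_0-\bar c_0\|$. If one wants $M$ deterministic, take the supremum of $\|c_0-\bar c_0\|$ over the bounded domain of Lemma~\ref{lemma:BoundedDerivativeTheorem}.
\item For the second term, $\bar c_i$ has law $\pi$ for every $i$, and Lemmas~\ref{lemma:RegularityofConditionalProbability} and \ref{lemma:BoundedDerivativeTheorem} give the uniform bound $\|\nabla\log p_t(x_t|\bar c_i)\| \le M_1/\delta$. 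I define $m$ as the supremum of this stationary score norm, which is at most $M_1/\delta$. This $m$ is precisely the ``stationary gradient norm'' of the statement.
\end{itemize}
Combining the two terms yields $\|\nabla_{x_t}\log p_t(x_t|c_i)\| \le M\beta^i + m$ for all $x_t$ and $i$.
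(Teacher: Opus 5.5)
Your route differs from the paper's, and its core is sounder.

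\textbf{The paper's route.} The paper compares $\nabla_{x_t}\log p_t(x_t|c_i)$ with the score of the mixture $p_t(x_t|\pi)=\int p_t(x_t|c)\,\pi(dc)$. It bounds the gap by $\frac{L}{\delta^2}\bigl(M_2+\|\nabla_{x_t}p_t(x_t|\pi)\|\bigr)\|c_i-c_\pi\|$ with $c_\pi=\mathbb{E}_\pi[c]$. It then invokes Lemma~\ref{lemma:GeometricErgodicity}, $\|c_i-c_\pi\|\le C\rho^i\sqrt{\mathrm{Var}_\pi(c)}$, and bounds the stationary term by $M_1/\delta=:m$. That lemma cannot hold pathwise: with fresh Gaussian innovations, $c_i-c_\pi$ has a nondegenerate Gaussian law for every $i\ge1$, because geometric ergodicity controls laws, not states. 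The paper also applies the Lipschitz-in-$c$ bound to the mixture as though it were $p_t(\cdot|c_\pi)$.

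\textbf{What your coupling gains.} Your synchronous coupling avoids both problems. The noise cancels in $c_i-\bar c_i$, so the decay is deterministic and pathwise. You compare two genuine conditional densities, where Lemma~\ref{lemma:RegularityofConditionalProbability}(2) legitimately applies. You are also right to isolate the Lipschitz bound for $p_t$ itself, which the paper uses without justification. Your integration idea works once you supply a base point: $h=p_t(\cdot|c)-p_t(\cdot|c')$ integrates to zero over the bounded domain, so it vanishes somewhere, and then $|h|\le L(b-a)\|c-c'\|$.

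\textbf{Step 1: only your second route survives.}
\begin{itemize}
\item For the recursion as written, $a_j$ multiplies the absolute index $j$. So $e_{i+1}=\sum_{j=0}^{i}a_je_j$ gives $e_{i+1}=(1+a_i)e_i$ for $i\ge1$, hence $e_i\to a_0\prod_{j\ge1}(1+a_j)\,e_0\ne0$. There is no decay, even when $\sum_j|a_j|<1$: take $a_0=1/2$ and all other $a_j=0$, which gives $e_i=e_0/2$ for all $i\ge1$.
\item For a lag-indexed recursion with infinite memory, the weighted induction needs $\sum_k|a_k|\rho^{-(k+1)}\le1$, i.e.\ finite order or geometric tails. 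Polynomially decaying $a_k$ give only polynomial decay of $e_i$.
\item In the finite-order companion form you need the spectral radius of $\mathbf A$ to be below $1$. Assumption~\ref{ass:arprocessing}(1) does not imply this, despite the paper's remark: $a_0=a_1=0.9$ has characteristic root $1.5$.
\item Stability can be extracted from Lemma~\ref{lemma:MarkovProperty}, but not by reading its TV rate as a spectral bound. Started from a point $\mathbf c$, $\mathbf c_n$ is Gaussian with mean $\mathbf A^n\mathbf c$. Convergence to the same $\pi$ from every $\mathbf c$, including $\mathbf c=0$, forces $\mathbf A^n\to0$. You should make this step explicit.
\end{itemize}

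\textbf{The stated inequality does not need the coupling.} Your bound on the second term in Step~3 never uses $\bar c_i\sim\pi$. It uses only that $\bar c_i$ is some condition and that $M_1$ is uniform in the condition, which your Step~2 already assumes. Applied to $c_i$ directly, it gives $\|\nabla_{x_t}\log p_t(x_t|c_i)\|\le M_1/\delta$. So the stated inequality holds with $m=M_1/\delta$ and any $M>0$, $\beta\in(0,1)$. Steps~1--2 only add a nonnegative term, and the same is true of the paper's proof.

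What your coupling genuinely establishes, beyond the paper, is the tracking estimate $\|\nabla_{x_t}\log p_t(x_t|c_i)-\nabla_{x_t}\log p_t(x_t|\bar c_i)\|\le\Lambda C\rho^i\|\mathbf c_0-\bar{\mathbf c}_0\|$. That is the substantive sense of decay toward a stationary gradient.

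This also makes your attempt to render $M$ deterministic unnecessary, which is fortunate. The proposed supremum would be infinite: $\bar c_0\sim\pi$ has unbounded Gaussian support, and the bounded interval of Lemma~\ref{lemma:BoundedDerivativeTheorem} concerns $x_t$, not the conditions.
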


\section{Autoregressive Condition Optimization}
Although autoregressive methods provide contextual information, they inevitably accumulate extraneous noise, leading to ``condition inconsistency''.

\paragraph{Why Optimal Transport?}
We employ Optimal Transport (OT) to rectify this distributional drift for three theoretical reasons:
\begin{enumerate}[leftmargin=*]
    \item \textbf{Geometric Correction:} Unlike overlap-based metrics (e.g., KL divergence), OT quantifies the geometric cost required to transform the noisy generated distribution back to the ideal one.
    \item \textbf{Least Action Principle:} Formulating the refinement as a Wasserstein Gradient Flow identifies the optimal path to eliminate inconsistency while preserving valid semantic information.
    \item \textbf{Convergence:} The framework guarantees theoretical convergence to the stationary ideal distribution, effectively acting as a mathematically grounded ``denoising'' step for the condition.
\end{enumerate}
The full algorithm is provided in Appendix~\ref{app:algorithm}.

\subsection{Condition Inconsistency in Autoregressive Generation}
\begin{wrapfigure}{r}{0.45\textwidth}
    \centering
    \vspace{-4.5mm}
    \includegraphics[width=1\linewidth]{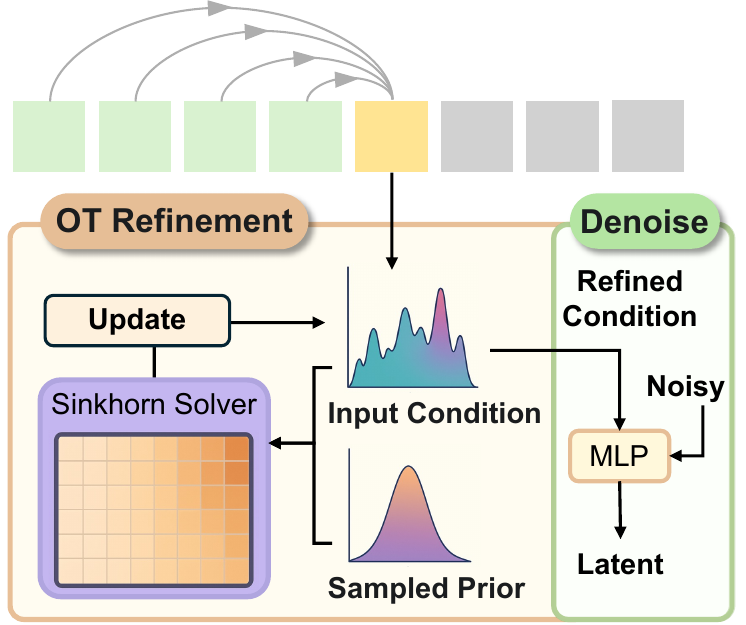}
    \vspace{-6mm}
    \caption{\small The autoregressive model predicts an initial condition, which is processed by the OT Refinement module using a sampled prior derived from Algorithm \ref{alg:aco_denoise_full}. The resulting refined condition then guides the Denoise MLP for latent generation.}
    \label{fig:model}
    \vspace{-4.5mm}
\end{wrapfigure}

The autoregressive condition generation process,  as defined by Equation \eqref{eq:ar_condition},  sequentially constructs conditions,  aiming to capture contextual dependencies. However,  this sequential nature can lead to conditions that are not only influenced by relevant preceding patches but also by accumulated information that is extraneous to generating the current patch. This phenomenon,  which we term ``condition inconsistency'',  arises because the autoregressive process,  while capturing dependencies,  does not inherently guarantee that each generated condition $c_i$ is optimally focused on information strictly necessary for the corresponding patch $x_i$.

\begin{lemma}[Condition Information Inconsistency and Extraneous Information Accumulation]
\label{lemma:condition_inconsistency_information}
Let $c_i = \Phi_\theta(c_{i-1}) + \Gamma_\theta(\epsilon_i)$ be the autoregressively generated condition for patch $x_i$, and $c_i^* = \pi_{\mathcal{I}_i^*}(c_i)$ be its projection onto the minimal sufficient information subspace $\mathcal{I}_i^*$ (derived from $x_{<i}$). The generated $c_i$ inherently contains an extraneous information component $\eta_i = c_i - c_i^*$. This component $\eta_i$ is generally non-zero (i.e., $\mathbb{E}[\|\eta_i\|_2^2] > 0$), accumulating from $(I - \pi_{\mathcal{I}_i^*})\Phi_\theta(c_{i-1})$ and noise components outside $\mathcal{I}_i^*$. The actual conditional distribution $p(x_i|c_i)$ deviates from the ideal $p(x_i|c_i^*)$, and the conditional score $\nabla_{x_t} \log p(x_t|c_i)$ is perturbed from its optimal form under $c_i^*$.
\end{lemma}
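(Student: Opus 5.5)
We work in the linear-algebraic setting the statement suggests. Let $\pi_{\mathcal{I}_i^*}$ be the orthogonal projector onto $\mathcal{I}_i^*$ and $P_i^\perp := I - \pi_{\mathcal{I}_i^*}$. By definition $\eta_i = c_i - c_i^* = P_i^\perp c_i$. Substituting the recursion $c_i = \Phi_\theta(c_{i-1}) + \Gamma_\theta(\epsilon_i)$ gives
\begin{align*}
\eta_i = P_i^\perp \Phi_\theta(c_{i-1}) + P_i^\perp \Gamma_\theta(\epsilon_i).
\end{align*}
This identity already exhibits the two sources named in the statement: the carried-over component of the previous condition that is irrelevant to patch $x_i$, and the injected noise falling outside $\mathcal{I}_i^*$.

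To show $\mathbb{E}[\|\eta_i\|_2^2] > 0$, we use that $\epsilon_i$ is independent of $c_{i-1}$ and centered. This is the same i.i.d.\ Gaussian structure as in Assumption~\ref{ass:arprocessing}, and we additionally assume $\mathbb{E}[\Gamma_\theta(\epsilon_i)] = 0$, for instance when $\Gamma_\theta$ is linear. The cross term then vanishes, so
\begin{align*}
\mathbb{E}\|\eta_i\|_2^2 = \mathbb{E}\|P_i^\perp\Phi_\theta(c_{i-1})\|_2^2 + \mathbb{E}\|P_i^\perp\Gamma_\theta(\epsilon_i)\|_2^2 \ge \mathrm{tr}\big(P_i^\perp \Sigma_\Gamma P_i^\perp\big),
\end{align*}
where $\Sigma_\Gamma$ is the covariance of $\Gamma_\theta(\epsilon_i)$. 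This bound is strictly positive unless the support of the noise lies entirely inside $\mathcal{I}_i^*$. The word ``generally'' in the statement is read exactly this way: $\mathcal{I}_i^*$ is a proper, minimal subspace determined by $x_{<i}$, while $\Gamma_\theta(\epsilon_i)$ is a generic full-rank perturbation, so $P_i^\perp \Sigma_\Gamma \neq 0$.

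For the accumulation claim, we unroll the recursion. The term $P_i^\perp\Phi_\theta(c_{i-1})$ contains $\Phi_\theta$ applied to $\eta_{i-1}$, plus whatever part of $\Phi_\theta(c_{i-1}^*)$ leaves $\mathcal{I}_i^*$. If $\Phi_\theta$ is, say, linearized with nonvanishing action on these directions, the extraneous energy satisfies a recursion of the form $e_i \ge \kappa\, e_{i-1} + \mathrm{tr}(P_i^\perp\Sigma_\Gamma)$ with $\kappa \ge 0$. Hence it does not die out and stays bounded below by the injected noise at every step. This is consistent with the stationary level $m$ in Theorem~\ref{thm:gradient_descent_ar}.

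For the distributional and score claims, we use the Lipschitz regularity of Lemma~\ref{lemma:RegularityofConditionalProbability} together with the lower bound $p_t \ge \delta$. Writing $\nabla\log p = \nabla p / p$ and applying the triangle inequality together with the bounds $M_1, M_2$ of Lemma~\ref{lemma:BoundedDerivativeTheorem}, we obtain
\begin{align*}
\|\nabla_{x_t}\log p_t(x_t|c_i) - \nabla_{x_t}\log p_t(x_t|c_i^*)\| \le \Big(\frac{L}{\delta} + \frac{M_1 L'}{\delta^2}\Big)\|\eta_i\|,
\end{align*}
where $L'$ is a Lipschitz constant of $p_t$ in $c$. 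This controls the size of the perturbation.

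Showing the perturbation is actually \emph{nonzero} is the main obstacle. Conditional independence of $x_i$ from the extraneous directions given $c_i^*$ would make $p(x_i|c_i) = p(x_i|c_i^*)$. The model, however, is parameterized by the full $c_i$ and not by the sufficient statistic, so the learned conditional generically depends on the $P_i^\perp$ directions. I would therefore state a mild non-degeneracy assumption: $\partial_c \nabla_{x_t}\log p_t(x_t|c)$ restricted to $\mathrm{range}(P_i^\perp)$ is nonzero on a set of positive measure. Combined with $\mathbb{E}\|\eta_i\|^2 > 0$, this yields a deviation with positive probability, and consequently a positive KL or Wasserstein gap between $p(x_i|c_i)$ and $p(x_i|c_i^*)$.
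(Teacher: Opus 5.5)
Your proposal is correct under the hypotheses you make explicit, and its skeleton is the paper's. The paper also writes $\eta_i=(I-\pi_{\mathcal{I}_i^*})c_i$ in an orthonormal basis adapted to $\mathcal{I}_i^*$, substitutes the recursion, and splits $\mathbb{E}\|\eta_i\|_2^2$ into the propagated term $\mathbb{E}\|(I-\pi_{\mathcal{I}_i^*})\Phi_\theta(c_{i-1})\|_2^2$ plus a noise term.

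Where you differ, it is mostly to your advantage. The paper writes the noise term as the unprojected $\mathrm{tr}(\Gamma_\theta\Gamma_\theta^\top)$. That also counts noise lying inside $\mathcal{I}_i^*$, which is not extraneous by definition, so it is only an upper bound, exact iff the noise range is orthogonal to $\mathcal{I}_i^*$. Your $\mathrm{tr}(P_i^\perp\Sigma_\Gamma P_i^\perp)$ is the correct quantity, and your positivity criterion $\mathrm{range}\,\Sigma_\Gamma\not\subseteq\mathcal{I}_i^*$ is the real one; the paper's formula makes positivity look automatic whenever $\Gamma_\theta\neq 0$. You also state the centering and independence that kill the cross term, which the paper uses silently. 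Strictly, $\epsilon_i$ must be independent of $x_{<i}$ as well, since $P_i^\perp$ depends on it.

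For the score, the paper writes the first-order expansion $\nabla_{x_t}\log p(x_t|c_i)=\nabla_{x_t}\log p(x_t|c_i^*)+\mathbf{J}_{\eta_i}\nabla_{c_i}\log p(x_t|c_i)$ as an identity and reads off an $\mathcal{O}(\|\eta_i\|_2)$ deviation. You instead derive a genuine bound with explicit constant $L/\delta+M_1L'/\delta^2$ from Lemmas~\ref{lemma:RegularityofConditionalProbability} and~\ref{lemma:BoundedDerivativeTheorem}. The price is the extra constant $L'$ (Lipschitz continuity of $p_t$ itself in $c$), which those lemmas do not supply but which the paper's proof of Theorem~\ref{thm:gradient_descent_ar} also uses.

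You are also right that no upper bound, and not even $\mathbb{E}\|\eta_i\|_2^2>0$, can make the perturbation nonzero. If $p_t(x_t|c)$ depends on $c$ only through $\pi_{\mathcal{I}_i^*}c$, then $p_t(x_t|c_i)=p_t(x_t|c_i^*)$ exactly. So some non-degeneracy hypothesis is unavoidable, and the paper's expansion tacitly assumes one by treating the perturbation term as nonzero.

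Two small repairs are needed. First, to turn your non-degeneracy condition into a deviation with positive probability, use that $\Gamma_\theta(\epsilon_i)$ is a nondegenerate Gaussian. Then, conditionally on $(x_{<i},c_{i-1},c_i^*)$, the vector $\eta_i$ has a positive density on $\mathrm{range}(P_i^\perp)$ and charges the small neighborhoods of $0$ where the first-order term dominates the finite difference. A nonzero derivative alone does not exclude cancellation at the realized $\eta_i$. Second, your accumulation recursion is only justified with $\kappa=0$ in general, since $P_i^\perp\Phi_\theta(c_{i-1}^*)$ and $P_i^\perp\Phi_\theta(\eta_{i-1})$ can cancel. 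That is, however, all your stated conclusion actually uses.
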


\begin{proof}
Let $\mathcal{I}_i^* \subseteq \mathbb{R}^d$ denote the minimal sufficient information subspace for generating patch $x_i$,  and $\pi_{\mathcal{I}_i^*}(\cdot)$ be the orthogonal projection onto this subspace. The ideal condition $c_i^*$ satisfies:
\begin{equation}
    c_i^* = \pi_{\mathcal{I}_i^*}(c_i) \quad \text{where} \quad \mathcal{I}_i^* = \text{span}\{f_k(x_{<i})\}_{k=1}^K
\end{equation}
for some basis functions $\{f_k\}$ encoding relevant dependencies from preceding patches $x_{<i}$. The autoregressive condition generation follows a Markov process:
\begin{equation}
    c_i = \Phi_\theta(c_{i-1}) + \Gamma_\theta(\epsilon_i)
    \label{eq:ar_condition_generation}
\end{equation}
where $\Phi_\theta: \mathbb{R}^d \to \mathbb{R}^d$ is the learned transition operator and $\Gamma_\theta$ modulates the noise injection. The extraneous information component $\eta_i$ can be quantified through subspace decomposition:
\begin{equation}
    \eta_i = (I - \pi_{\mathcal{I}_i^*})c_i = \sum_{k=K+1}^\infty \langle c_i,  v_k \rangle v_k
    \label{eq:extraneous_decomposition}
\end{equation}
where $\{v_k\}$ forms an orthonormal basis for $\mathbb{R}^d$ with the first $K$ vectors spanning $\mathcal{I}_i^*$. The $\ell^2$-norm of extraneous information $\|\eta_i\|_2$ satisfies:
\begin{equation}
    \mathbb{E}[\|\eta_i\|_2^2] = \mathbb{E}\left[\left\|(I - \pi_{\mathcal{I}_i^*})\Phi_\theta(c_{i-1})\right\|_2^2\right] + \text{tr}(\Gamma_\theta\Gamma_\theta^\top)
    \label{eq:extraneous_variance}
\end{equation}

The first term represents propagated extraneous information from previous conditions,  while the second term quantifies newly introduced noise. For the denoising process $\mathcal{D}_t$ at timestep $t$,  the conditional score function becomes perturbed:
\begin{equation}
    \nabla_{x_t} \log p(x_t|c_i) = \underbrace{\nabla_{x_t} \log p(x_t|c_i^*)}_{\text{Ideal score}} + \underbrace{\mathbf{J}_{\eta_i} \nabla_{c_i} \log p(x_t|c_i)}_{\text{Perturbation term}}
    \label{eq:perturbed_score}
\end{equation}
where $\mathbf{J}_{\eta_i}$ is the Jacobian of the perturbation. The extraneous information induces an $\mathcal{O}(\|\eta_i\|_2)$ deviation from the ideal denoising trajectory. The accumulated effect over $N$ patches yields total inconsistency:
\begin{equation}
    \mathcal{E}_{\text{total}} = \sum_{i=1}^N \mathbb{E}\left[\text{OT}_{\lambda}(p(x_i|c_i),  p(x_i|c_i^*))\right]
    \label{eq:total_incostistency}
\end{equation}
where $\text{OT}_{\lambda}$ denotes the Sinkhorn divergence with regularization parameter $\lambda$. This completes the proof of condition information inconsistency.
\end{proof}

\subsection{Optimal Transport for Condition Refinement via Wasserstein Gradient Flow}
\label{subsec:ot_refinement}

Building upon the condition inconsistency analysis in Lemma~\ref{lemma:condition_inconsistency_information},  we present a principled solution through optimal transport theory. Our approach establishes direct connections between the Wasserstein gradient flow framework and condition refinement in autoregressive generation.

\begin{proposition}[Optimal Transport as Wasserstein Gradient Flow]
\label{prop:wgf_refinement}
The condition refinement process can be formulated as a Wasserstein gradient flow that minimizes:
\begin{align}
\mathcal{F}(P_c) := W_2^2(P_c,  P_{c^*}) + \lambda \mathbb{E}_{c\sim P_c}[\|c - \mathcal{T}^{-1}(x)\|^2]
\end{align}
where $P_{c^*}$ denotes the ideal condition distribution and $\mathcal{T}^{-1}$ represents the inverse process of information accumulation in Equation~\eqref{eq:ar_condition_generation}. The solution admits an implementable discrete-time scheme through JKO iterations \citep{JKO}:
\begin{align}
P_c^{(k+1)} = \arg\min_{P} W_2^2(P,  P_c^{(k)}) + \eta_k \mathcal{F}(P)
\end{align}
\end{proposition}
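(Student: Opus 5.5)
The plan is to treat Proposition~\ref{prop:wgf_refinement} as a statement in the Otto calculus on $(\mathcal{P}_2(\mathbb{R}^d), W_2)$. I would show three things: that $\mathcal{F}$ is a well-posed, lower semicontinuous and (geodesically) convex functional; that its Wasserstein gradient flow $\partial_s P_c = \nabla\cdot\bigl(P_c \nabla \tfrac{\delta\mathcal{F}}{\delta P_c}\bigr)$ is well defined; and that the JKO scheme is its time discretization, whose iterates converge to the flow as $\max_k \eta_k\to 0$ \citep{JKO}.

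\textbf{Step 1: decompose the functional.} Write $\mathcal{F}=\mathcal{F}_1+\lambda\mathcal{F}_2$, with $\mathcal{F}_1(P)=W_2^2(P,P_{c^*})$ and $\mathcal{F}_2(P)=\int V\,dP$, where $V(c)=\mathbb{E}\|c-\mathcal{T}^{-1}(x)\|^2$. Here the expectation over $x$ is taken with respect to the patch distribution, so $V$ is a deterministic potential.
\begin{itemize}
\item $V$ is a quadratic in $c$ plus a constant. Hence $\mathcal{F}_2$ is a $2$-convex potential energy along generalized geodesics, it is lower semicontinuous under narrow convergence, and its first variation is $V$.
\item For $\mathcal{F}_1$, I would assume that $P_{c^*}$ is absolutely continuous. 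By Brenier's theorem there is then an optimal map $T^*$, and the first variation is $2\varphi$, where $\varphi$ is the Kantorovich potential with $\nabla\varphi=\mathrm{id}-T^*$. The functional $\tfrac12 W_2^2(\cdot,P_{c^*})$ is $1$-convex along generalized geodesics based at $P_{c^*}$. It is not convex along ordinary geodesics, so I would work with the Ambrosio--Gigli--Savar\'e framework of generalized geodesics throughout.
\item Summing the two parts shows that $\mathcal{F}$ is $\lambda_{\mathcal F}$-convex with $\lambda_{\mathcal F}=2+2\lambda>0$. It is also coercive, because of the quadratic potential.
\end{itemize}

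\textbf{Step 2: well-posedness of the JKO scheme.} For each $k$, the minimization $\arg\min_P W_2^2(P,P_c^{(k)})+\eta_k\mathcal{F}(P)$ has a minimizer. Existence follows from lower semicontinuity together with tightness of sublevel sets, since second moments are bounded by the quadratic terms. Uniqueness follows from strict convexity along generalized geodesics, given the $\lambda_{\mathcal F}$-convexity from Step 1. The Euler--Lagrange condition at $P^{(k+1)}$ then reads
\[
\frac{T_{k}-\mathrm{id}}{\eta_k}=\nabla\bigl(2\varphi+\lambda V\bigr) \quad P^{(k+1)}\text{-a.e.},
\]
where $T_k$ is the optimal map from $P^{(k+1)}$ to $P^{(k)}$. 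This is exactly an implicit Euler step of the continuity equation with velocity $-\nabla\frac{\delta\mathcal F}{\delta P}$.

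\textbf{Step 3: convergence.} I would invoke the standard convergence theorem for minimizing movements of $\lambda$-convex, lower semicontinuous, coercive functionals. It gives that the piecewise-constant interpolants converge uniformly in $W_2$ to the unique EVI gradient flow, with an $O(\sqrt{\eta})$ error. Strict positivity of $\lambda_{\mathcal F}$ then yields exponential convergence of the flow to the unique minimizer $\bar P$ of $\mathcal{F}$:
\[
W_2(P_c(s),\bar P)\le e^{-\lambda_{\mathcal F}s}\,W_2(P_c(0),\bar P).
\]
Finally, I would identify $\bar P$. If the ideal condition is consistent with the inverse accumulation, meaning $\mathcal{T}^{-1}(x)$ is distributed as $c^*$ and concentrates there, then $\bar P$ lies in an $O(\lambda)$-neighborhood of $P_{c^*}$, and it equals $P_{c^*}$ when the penalty is minimized there. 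This yields the ``refinement'' interpretation.

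\textbf{Main obstacle.} I expect the convexity of $W_2^2(\cdot,P_{c^*})$ to be the hard step, since it fails along standard geodesics. I would handle it first by using generalized geodesics with base $P_{c^*}$. If that proves awkward, a fallback is to apply the convergence theorem for functionals that are merely convex along some curve connecting any two points, which suffices in AGS Theorem~4.0.4. A secondary difficulty is that $\mathcal{T}^{-1}$ is only loosely defined in the paper. I would fix a measurable, square-integrable version so that $V$ is finite.
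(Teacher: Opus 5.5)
Your Step~1 adds two convexity statements that hold along different families of curves. The potential energy $\lambda\int V\,dP$ is $2\lambda$-convex along every generalized geodesic. But $W_2^2(\cdot,P_{c^*})$ is $2$-convex only along generalized geodesics \emph{based at $P_{c^*}$}, so what you actually get is $(2+2\lambda)$-convexity along those curves only. That is not what Steps~2 and~3 need.

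Your uniqueness argument for the JKO step, and Assumption~4.0.1 behind Theorem~4.0.4 of Ambrosio--Gigli--Savar\'e, both concern the objective $W_2^2(\cdot,P^{(k)})+\eta_k\mathcal{F}$. Along generalized geodesics this works only with base $P^{(k)}$, since that is where $W_2^2(\cdot,P^{(k)})$ is $2$-convex. It then requires $\mathcal{F}$ to be convex along generalized geodesics with \emph{arbitrary} base. $W_2^2(\cdot,P_{c^*})$ cannot have this property. Taking the base equal to one endpoint turns a generalized geodesic into an ordinary one, where, as you note yourself, convexity fails. For $d\ge 2$ it fails for every modulus $\kappa\in\mathbb{R}$, even for Gaussian $P_{c^*}$, e.g.\ along $t\mapsto\tfrac12(\delta_{x_t}+\delta_{-x_t})$ with $x_t$ on a short segment near the mean.

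Your fallback misreads Assumption~4.0.1. It does not ask for $\mathcal{F}$ to be convex along some curve. It asks, for every base point $w$, for a curve along which $\tfrac{1}{2\tau}W_2^2(w,\cdot)+\mathcal{F}$ is $(\tfrac1\tau+\kappa)$-convex, and you have produced none. As written, the uniqueness claim in Step~2 and all of Step~3 (EVI, the $O(\sqrt{\eta})$ estimate, the exponential rate) are unsupported.

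The repair is to use the structure of $\mathcal{F}$ rather than general convexity. With your potential $V(c)=\|c-m\|^2+\mathrm{const}$, where $m=\mathbb{E}[\mathcal{T}^{-1}(x)]$, integrate the pointwise identity
\[
\|c-c'\|^2+\lambda\|c-m\|^2=(1+\lambda)\bigl\|c-h(c')\bigr\|^2+\tfrac{\lambda}{1+\lambda}\|c'-m\|^2,\qquad h(c')=\tfrac{c'+\lambda m}{1+\lambda},
\]
against couplings of $P$ and $P_{c^*}$ and minimize. This gives $\mathcal{F}(P)=(1+\lambda)W_2^2(P,\tilde P)+C$ with $\tilde P=h_\#P_{c^*}$. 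A JKO step is then a two-measure barycenter problem. Its unique minimizer is the point at parameter $a_k/(1+a_k)$, $a_k=\eta_k(1+\lambda)$, on the geodesic from $P^{(k)}$ to $\tilde P$; the geodesic is unique because $\tilde P$ is absolutely continuous. Hence:
\begin{itemize}
\item all iterates lie on one geodesic, with $W_2(P^{(k+1)},\tilde P)=W_2(P^{(k)},\tilde P)/(1+a_k)$;
\item the gradient flow runs along the same geodesic, with $W_2(P(s),\tilde P)=e^{-2(1+\lambda)s}W_2(P(0),\tilde P)$.
\end{itemize}
This gives existence, uniqueness, convergence of the scheme (time step $\eta_k/2$) and your rate $2+2\lambda$, with no convexity theory.

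It also shows the limit is $h_\#P_{c^*}$. This differs from $P_{c^*}$ for every $\lambda>0$ whenever $P_{c^*}$ is absolutely continuous, so the case $\bar P=P_{c^*}$ in your last step never occurs under your own assumptions.

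For comparison, the paper's proof does only the formal part. It writes the JKO step, reads off the explicit particle update $c^{(k+1)}=c^{(k)}-\eta_k[\nabla W_2^2(\cdot,P_{c^*})+\lambda\nabla\phi](c^{(k)})$, and replaces exact OT by Sinkhorn. Your Euler--Lagrange equation in Step~2 is the implicit counterpart of that update, but it drops a factor $2$: with weight $\eta_k$ on $\mathcal{F}$, its right-hand side should be $\nabla(\varphi+\tfrac{\lambda}{2}V)$.
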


\begin{proof}
Let $\mathcal{P}_2(\mathbb{R}^d)$ denote the space of probability measures with finite second moments. We consider the energy functional:
\begin{align}
\mathcal{F}(P) = \frac{1}{2}W_2^2(P,  P_{c^*}) + \lambda \mathbb{E}_{c\sim P}[\phi(c)]
\end{align}
where $\phi(c) = \|c - \mathcal{T}^{-1}(x)\|^2$ encodes the inverse process regularization. The Wasserstein gradient flow $\partial_t P_t = -\nabla_{W_2}\mathcal{F}(P_t)$ can be discretized via the Jordan-Kinderlehrer-Otto (JKO) scheme:
\begin{align}
P^{(k+1)} = \arg\min_{P} \left\{W_2^2(P,  P^{(k)}) + 2\eta_k\mathcal{F}(P)\right\}
\end{align}
Substituting our specific energy functional yields the update rule in Proposition~\ref{prop:wgf_refinement}. The first term maintains proximity to previous iterates while the second term drives the distribution toward both the ideal condition and inverse-process consistency.

The optimal transport plan between $P^{(k)}$ and $P^{(k+1)}$ corresponds to the McCann interpolant:
\begin{align}
c^{(k+1)} = c^{(k)} - \eta_k\left[\nabla W_2^2(\cdot,  P_{c^*})|_{c^{(k)}} + \lambda\nabla\phi(c^{(k)})\right]
\end{align}
Implementing this requires solving the regularized OT problem:
\begin{align}
\inf_{\gamma\in\Gamma(P^{(k)},  P_{c^*})} \mathbb{E}_{(c, c')}[\|c-c'\|^2] + \epsilon \text{KL}(\gamma|\pi)
\end{align}
where $\pi$ is the independent coupling and $\epsilon$ controls entropy regularization. This leads to the Sinkhorn algorithm implementation described in Proposition~\ref{prop:wgf_refinement}.
\end{proof}

\begin{theorem}[Convergence of Wasserstein Gradient Flow]
\label{thm:main}
Under the assumptions of Proposition~\ref{prop:wgf_refinement},  the Wasserstein gradient flow defined by the energy functional \(\mathcal{F}(P)\) converges to the ideal condition distribution \(P_{c^*}\). Specifically,  for any initial distribution \(P_c^{(0)} \in \mathcal{P}_2(\mathbb{R}^d)\),  the sequence of distributions \(\{P_c^{(k)}\}_{k=1}^\infty\) generated by the JKO scheme satisfies:
\begin{align}
W_2(P_c^{(k)},  P_{c^*}) \leq \rho^k W_2(P_c^{(0)},  P_{c^*}), 
\end{align}
where \(\rho < 1\) is the contraction rate determined by the regularization parameter \(\lambda\) and the step size \(\eta_k\).
\end{theorem}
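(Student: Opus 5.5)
The plan is to obtain the contraction from the theory of gradient flows of $\alpha$-convex functionals in Wasserstein space, following Ambrosio--Gigli--Savar\'e (AGS). There are three steps: show that $\mathcal{F}$ is $\alpha$-convex along generalized geodesics with $\alpha>0$; show that $P_{c^*}$ is the unique minimizer of $\mathcal{F}$, hence a fixed point of the JKO map; and then apply the discrete contraction estimate for JKO steps of $\alpha$-convex functionals.

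\textbf{Step 1: convexity of $\mathcal{F}$.} I use the normalization from the proof of Proposition~\ref{prop:wgf_refinement},
\begin{align}
\mathcal{F}(P)=\tfrac12 W_2^2(P,P_{c^*})+\lambda\,\mathbb{E}_{c\sim P}[\phi(c)], \qquad \phi(c)=\|c-\mathcal{T}^{-1}(x)\|^2 .
\end{align}
I treat $x$, and hence $\mathcal{T}^{-1}(x)$, as fixed during refinement.

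\begin{itemize}
\item The potential energy $P\mapsto \mathbb{E}_P[\phi]$ has a $2$-convex integrand $\phi$, so it is $2$-convex along every interpolation of couplings. In particular it is $2$-convex along generalized geodesics.
\item $\tfrac12 W_2^2(\cdot,P_{c^*})$ is not convex along ordinary $W_2$ geodesics; it is only semiconcave. However, it is exactly $1$-convex along generalized geodesics with base point $P_{c^*}$ (AGS, Lemma 9.2.1).
\end{itemize}

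Hence $\mathcal{F}$ is $\alpha$-convex along generalized geodesics with $\alpha = 1+2\lambda$. This is the structural hypothesis AGS require for well-posedness of the JKO scheme and for the discrete contraction estimate.

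\textbf{Step 2: $P_{c^*}$ is the minimizer.} As written, the minimizer of $\mathcal{F}$ is a compromise between $P_{c^*}$ and the regularization target. It equals $P_{c^*}$ only if the inverse-process regularization is consistent with the ideal condition distribution. I will make this explicit as the content of ``the assumptions of Proposition~\ref{prop:wgf_refinement}''. Concretely, I assume that $\mathcal{T}^{-1}(x)$ recovers the ideal condition, so that $\phi$ is minimized on the support of $P_{c^*}$. One way to do this is to take $\phi(c)=\|c-c^*\|^2$ measured relative to the optimal coupling.

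Under this assumption $\mathcal{F}(P_{c^*})=\min\mathcal{F}$. Uniqueness of the minimizer then follows from $\alpha$-convexity with $\alpha>0$. A minimizer is automatically a stationary point of the JKO map, since
\begin{align}
P_{c^*}=\arg\min_P \left\{ W_2^2(P,P_{c^*})+2\eta_k\mathcal{F}(P) \right\}.
\end{align}
I expect this consistency issue to be the main obstacle. Without such an assumption the statement is false as written: the flow still converges exponentially, but to the minimizer $P_\lambda\neq P_{c^*}$. In that case I would state the result with $P_\lambda$ in place of $P_{c^*}$ and note that $P_\lambda\to P_{c^*}$ as $\lambda\to0$.

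\textbf{Step 3: contraction of the JKO scheme.} I use the discrete EVI for $\alpha$-convex functionals (AGS, Theorem 4.1.2 together with Lemma 4.2.4). For two JKO sequences with step $\tau_k=\eta_k$ it gives
\begin{align}
W_2(P^{(k+1)},Q^{(k+1)})\le (1+\alpha\eta_k)^{-1}\,W_2(P^{(k)},Q^{(k)}).
\end{align}
Take $Q^{(k)}\equiv P_{c^*}$, which is a valid JKO sequence by Step 2. Assume the step sizes are bounded below, $\eta_k\ge\eta>0$. Iterating the estimate then yields the claim with
\begin{align}
\rho=\sup_k\,(1+(1+2\lambda)\eta_k)^{-1}\le (1+(1+2\lambda)\eta)^{-1}<1 ,
\end{align}
so $\rho$ is determined by $\lambda$ and the step sizes, as the theorem asserts.

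Two routine checks remain:
\begin{itemize}
\item The factor $2\eta_k$ in the proof of Proposition~\ref{prop:wgf_refinement} versus $\eta_k$ in its statement only rescales $\eta$ inside $\rho$.
\item Existence of the JKO minimizers in $\mathcal{P}_2(\mathbb{R}^d)$ follows from lower semicontinuity and coercivity of $\mathcal{F}$.
\end{itemize}
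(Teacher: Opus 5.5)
Your Step~2 diagnosis is correct and goes beyond the paper. The paper's proof is only a sketch that appeals to ``strong convexity of $\mathcal{F}$'' and never checks that $P_{c^*}$ minimizes $\mathcal{F}$.

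The genuine gap is Step~1, and Steps~2--3 rest on it. The discrete EVI you cite (AGS, Theorem~4.1.2) needs AGS Assumption~4.0.1. That assumption asks, for every centre $w$ and endpoints $\mu_0,\mu_1$, for a \emph{single} curve along which $\mathcal{F}+\frac{1}{2\tau}W_2^2(w,\cdot)$ is $(\tau^{-1}+\alpha)$-convex. In $\mathcal{P}_2$ that curve is a generalized geodesic \emph{based at the centre} $w=P^{(k)}$. Lemma~9.2.1 gives $1$-convexity of $\tfrac12 W_2^2(\cdot,P_{c^*})$ only along generalized geodesics based at $P_{c^*}$, whereas the proximal term needs generalized geodesics based at $P^{(k)}$. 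With two different base points there is no common curve, so the two convexity moduli cannot be added.

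Convexity along generalized geodesics with arbitrary base is not available either. Putting the base at an endpoint recovers ordinary geodesics, and you note yourself that $\tfrac12 W_2^2(\cdot,P_{c^*})$ is not convex along those. So the claim that $\mathcal{F}$ is $(1+2\lambda)$-convex along generalized geodesics is false in general for $d\ge 2$. This leaves both the uniqueness argument of Step~2 and the contraction of Step~3 unsupported.

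Your consistency fix makes this worse. Measuring $\phi$ ``relative to the optimal coupling'' turns $\lambda\mathbb{E}_P[\phi]$ into $\lambda W_2^2(P,P_{c^*})$. That is no longer a potential energy, so the $2$-convexity you used for that term disappears too. If instead $\phi$ stays a fixed potential, requiring $\phi$ to be minimized on the support of $P_{c^*}$ forces $P_{c^*}=\delta_{\mathcal{T}^{-1}(x)}$. That is a degenerate case in which your argument does go through.

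The conclusion survives without any convexity. Fix $\bar x=\mathcal{T}^{-1}(x)$ and let $S$ be the inverse of $c\mapsto (c+2\lambda\bar x)/(1+2\lambda)$. Completing the square pointwise gives $\tfrac12\|c-S(c'')\|^2+\lambda\|c-\bar x\|^2=(\tfrac12+\lambda)\|c-c''\|^2+g(c'')$ for some function $g$. Since $S$ puts couplings of $(P,P_\lambda)$ in bijection with couplings of $(P,P_{c^*})$, this yields
\[
\mathcal{F}(P)=\bigl(\tfrac12+\lambda\bigr)W_2^2(P,P_\lambda)+\mathrm{const},\qquad
P_\lambda:=\Bigl(c\mapsto \tfrac{c+2\lambda\bar x}{1+2\lambda}\Bigr)_{\#}P_{c^*}.
\]
The JKO step $\min_P W_2^2(P,P^{(k)})+2\eta_k\mathcal{F}(P)$ is therefore $\min_P W_2^2(P,P^{(k)})+\beta_k W_2^2(P,P_\lambda)$ with $\beta_k=(1+2\lambda)\eta_k$.

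Set $D=W_2(P^{(k)},P_\lambda)$. The triangle inequality bounds the objective below by $\beta_k D^2/(1+\beta_k)$. Equality holds exactly when $W_2(P,P_\lambda)=D/(1+\beta_k)$ and $W_2(P,P^{(k)})=D-W_2(P,P_\lambda)$, and a point on a $W_2$-geodesic attains this. Hence $W_2(P^{(k+1)},P_\lambda)=(1+(1+2\lambda)\eta_k)^{-1}W_2(P^{(k)},P_\lambda)$, which is exactly your rate.

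For $\lambda>0$ the map defining $P_\lambda$ is a strict contraction towards $\bar x$, so $P_\lambda=P_{c^*}$ if and only if $P_{c^*}=\delta_{\bar x}$. This confirms your claim that the flow converges to some $P_\lambda\neq P_{c^*}$ and identifies $P_\lambda$ explicitly. Under your coupling-relative reading, $\mathcal{F}=(\tfrac12+\lambda)W_2^2(\cdot,P_{c^*})$, and the same two-line argument proves the statement as written.
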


\begin{proof}[Proof Sketch]
The proof follows from the contractive properties of the Wasserstein gradient flow for convex energy functionals. By the JKO scheme and the regularization term \(\lambda \mathbb{E}_{c\sim P_c}[\|c - \mathcal{T}^{-1}(x)\|^2]\),  the sequence \(\{P_c^{(k)}\}\) forms a Cauchy sequence in the Wasserstein space \(\mathcal{P}_2(\mathbb{R}^d)\). The contraction rate \(\rho\) arises from the strong convexity of the energy functional \(\mathcal{F}(P)\) and Lipschitz continuity of the gradient flow.
\end{proof}

\begin{remark}
The inverse process regularization $\mathcal{T}^{-1}$ directly counters the extraneous information accumulation characterized in Equation~\eqref{eq:extraneous_variance}. By Theorem~\ref{thm:main},  the refinement ensures monotonic improvement in patch generation quality:
\begin{align}
\mathbb{E}[\text{OT}_\lambda(p(x_i|c_i^{(k)}),  p(x_i|c_i^*))] \leq \rho^k \mathbb{E}[\text{OT}_\lambda(p(x_i|c_i^{(0)}),  p(x_i|c_i^*))]
\end{align}
where $\rho < 1$ quantifies the contraction rate of our OT-based refinement operator.

This theorem ensures that the proposed Wasserstein gradient flow refinement process monotonically reduces the Wasserstein distance between the autoregressive condition distribution \(P_c^{(k)}\) and the ideal condition distribution \(P_{c^*}\). The contraction rate \(\rho\) quantifies the refinement efficiency,  with smaller values of \(\rho\) indicating faster convergence.
\end{remark}

\section{Experiments}

\begin{wraptable}{r}{0.42\textwidth}\small
\vspace{-26mm}
\centering
\caption{\small Comparison of different methods on various metrics on ImageNet 256$\times$256 conditional generation. Baseline (CDM) denotes a baseline of conditional diffusion modeling.}
\label{tab:comparison}
\vspace{1mm}
\setlength{\tabcolsep}{3.5pt}
\resizebox{0.42\textwidth}{!}{
\begin{tabular}{lcccc}
\toprule
\bf Method & \bf FID $\downarrow$ & \bf IS $\uparrow$ & \bf Pre. $\uparrow$ & \bf Rec. $\uparrow$ \\
\midrule
LDM-4 \citep{rombach2022high} & 3.60 & 247.7 & 0.87 & 0.48 \\
U-ViT-H/2-G \citep{Bao22All} & 2.29 & 263.9 & 0.81 & 0.62 \\
DiT-XL/2 \citep{DiT-XL2} & 2.27 & 278.2 & 0.83 & 0.57 \\
DiffiT \citep{DiffiT} & 1.73 & 276.5 & 0.80 & 0.62 \\
MDTv2-XL/2 \citep{MDTv2} & 1.58 & 314.7 & 0.79 & 0.65 \\
GIVT \citep{GIVT} & 3.35 & - & 0.84 & 0.53 \\
MAR \citep{vqfree2024}  & 1.55 & 303.7 & 0.81 & 0.62 \\
De-MAR \citep{yao2025denoising} & 1.47 & 305.8 & 0.83 & 0.62 \\
RAR \citep{yu2025randomized} & 1.50 & 306.9 & 0.80 & 0.62 \\
Baseline (CDM) & 3.26 & 259.6 & 0.81 & 0.58 \\
Baseline (AR) & 2.02 & 282.6 & 0.80 & 0.59 \\
MAR \citep{vqfree2024}  & 1.55 & 303.7 & 0.81 & 0.62 \\\midrule
Ours (AR) &  1.52 &  317.6 & 0.82 & 0.60 \\
Ours (MAR) & \bf 1.31 & \bf 324.2 & 0.81 & 0.63 \\
\bottomrule
\end{tabular}}
\vspace{-4mm}
\end{wraptable}

\subsection{Experimental Settings}
Our autoregressive model is directly based on GPT-XL,  while the denoising module is implemented using the MAR-based denoising module.
For the Variational Autoencoder (VAE) component,  we use the KL-16 version of LDM~\citep{rombach2022high}.
Our experiments are conducted on the ImageNet dataset~\citep{ImageNet},  with image resolutions set to $256 \times 256$. 
For evaluation,  we adopt Fréchet Inception Distance (FID)~\citep{FID},  Inception Score (IS)~\citep{IS},  as well as Precision and Recall metrics~\citep{dhariwal2021diffusion}.
During training,  the noise schedule follows a cosine shape and consists of 1000 steps. The learning rate is set to $1\times10^{-5}$,  with a total of 400 epochs and a batch size of 2048. The models are trained with a 100-epoch linear learning rate warmup. We use an exponential moving average (EMA) in parameters with a momentum of 0.9999.

\subsection{Perfomance Comparison}
\begin{wrapfigure}{r}{0.45\textwidth}
\centering
\vspace{-14.5mm}
\includegraphics[width=\linewidth]{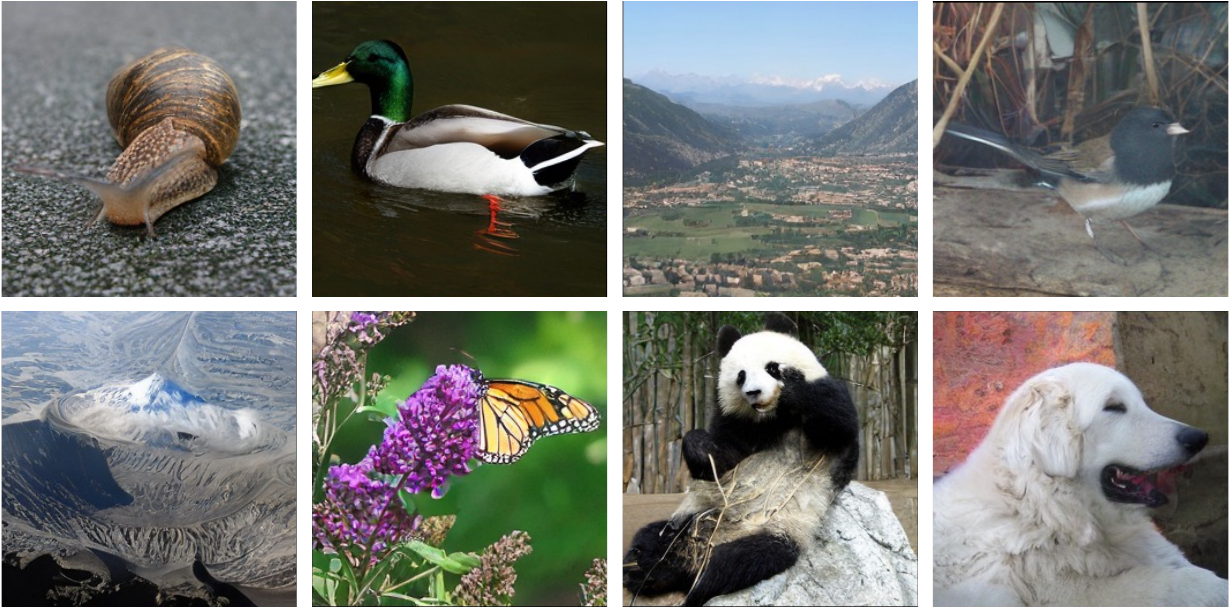}
\vspace{-6.5mm}
\caption{\small Qualitative results on $256 \times 256$ ImageNet class-conditional generation. These images are generated by Ours.}
\label{fig:Qualitative}
\vspace{-3mm}
\end{wrapfigure}

Table~\ref{tab:comparison} shows a comparison of our method against state-of-the-art approaches on ImageNet $256 \times 256$ conditional generation.
Our method achieves the best FID score of \textbf{1.52}, outperforming MAR~\citep{vqfree2024} (1.55), MDTv2-XL/2~\citep{MDTv2} (1.58), and DiffiT~\citep{DiffiT} (1.73). Based on MAR, it can further reach 1.31.
In IS,  our method also achieves the highest score. 
This demonstrates that our model produces samples with higher fidelity and better alignment with the real image distribution. 
For Precision and Recall, it attains 0.81 and 0.63, respectively, remaining competitive with other methods. 
Compared to the baseline,  our model exhibits significant improvements across all evaluation metrics,  highlighting the effectiveness of our approach. Qualitative results are shown in Figure~\ref{fig:Qualitative}.

\begin{wraptable}{r}{0.45\textwidth}\small
    \vspace{-15mm}
    \centering
    \caption{\small Comparison of scalability across different model sizes on ImageNet $256 \times 256$. Our method consistently achieves lower FID and higher IS compared to MAR.}
    \label{tab:scalability_size}
    \vspace{1mm}
    \resizebox{\linewidth}{!}{
    \begin{tabular}{c l c c}
    \toprule
    \textbf{Size} & \textbf{Method} & \textbf{FID} $\downarrow$ & \textbf{IS} $\uparrow$ \\
    \midrule
    \multirow{2}{*}{208M} & MAR \citep{vqfree2024} & 2.31 & 281.7 \\
                          & \textbf{Ours} & \textbf{1.96} & \textbf{290.5} \\
    \midrule
    \multirow{2}{*}{479M} & MAR \citep{vqfree2024} & 1.78 & 296.0 \\
                          & \textbf{Ours} & \textbf{1.59} & \textbf{301.5} \\
    \midrule
    \multirow{2}{*}{943M} & MAR \citep{vqfree2024} & 1.55 & 303.7 \\
                          & \textbf{Ours} & \textbf{1.31} & \textbf{324.2} \\
    \bottomrule
    \end{tabular}}
    \vspace{-2mm}
    \centering
    \caption{Performance comparison on high-resolution ImageNet $512 \times 512$.}
    \label{tab:scalability_res}
    \vspace{1mm}
    \begin{tabular}{l c c}
    \toprule
    \textbf{Method} & \textbf{FID} $\downarrow$ & \textbf{IS} $\uparrow$ \\
    \midrule
    MAR \citep{vqfree2024} & 1.73 & 279.9 \\
    \textbf{Ours} & \textbf{1.58} & \textbf{302.3} \\
    \bottomrule
    \end{tabular}
    \vspace{-8mm}
\end{wraptable}

\subsection{Scalability Analysis}
\label{sec:scalability}
To investigate the scalability and robustness of our proposed method, we conducted additional evaluations across varying model sizes and higher image resolutions. These experiments aim to verify whether the benefits of our Condition Refinement approach persist as the model capacity increases and the generation task becomes more challenging.

\vspace{-5mm}
\paragraph{Scalability across Model Sizes.} 
We evaluated our method against the strong baseline MAR \citep{vqfree2024} on ImageNet $256 \times 256$ using three different model scales: 208M, 479M, and 943M parameters. As presented in Table~\ref{tab:scalability_size}, our method consistently outperforms MAR across all model sizes. Notably, the performance gap widens as the model size increases, suggesting that our autoregressive condition optimization effectively leverages larger capacities for superior generation quality.

\vspace{-5mm}
\paragraph{High-Resolution Generation.}
To further assess the generalization capability of our method, we extended our evaluation to a higher resolution setting on ImageNet $512 \times 512$ (using a model size of approximately 481M parameters). Table~\ref{tab:scalability_res} demonstrates the superiority of our approach in high-resolution synthesis. Our method achieves an FID of 1.58 compared to 1.73 for MAR, indicating that our OT-based condition refinement remains effective in mitigating inconsistencies even in higher-dimensional spaces.

\subsection{Condition Errors Analysis}
\begin{wrapfigure}{r}{0.5\textwidth}
\centering
\vspace{-14mm}
\includegraphics[width=\linewidth]{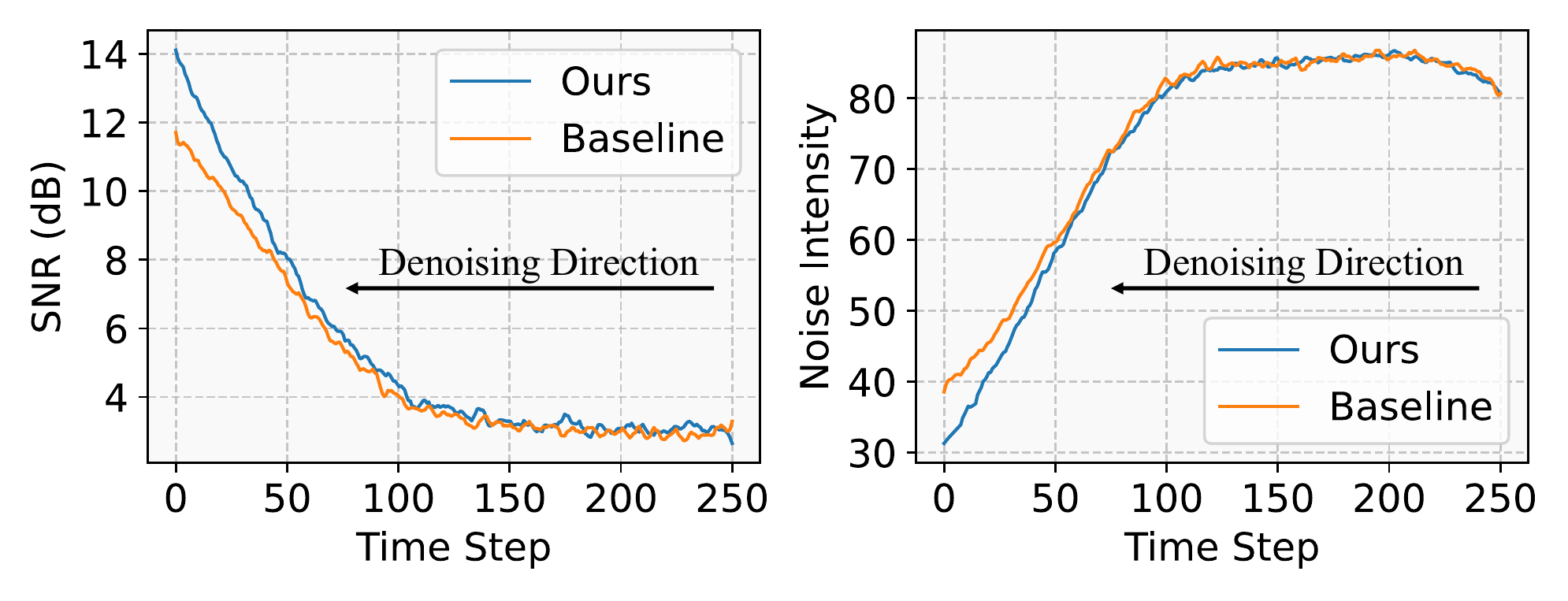}
\vspace{-7mm}
\caption{\small Analysis of Signal-to-Noise Ratio (SNR,  {\bf Left}) and Noise Intensity ({\bf Right}) during the denoising process of our method and the baseline. All analyses are computed in the image space after VAE decoding.}
\label{fig:noise}
\vspace{-2mm}
\end{wrapfigure}

Figure~\ref{fig:noise} presents the denoising analysis, showing Signal-to-Noise Ratio (SNR) and Noise Intensity over time for both our method and a baseline. 
The denoising process proceeds from right to left, with time steps decreasing as denoising progresses.
Our method consistently achieves higher SNR, with a widening gap in later stages. 
Similarly,  the right panel shows the Noise Intensity,  where both methods show a reduction in noise as denoising progresses.  
Consistent with the SNR analysis,  our method exhibits a marginally lower Noise Intensity,  especially in the earlier time steps, i.e., later stages of denoising. 
These results highlight the efficacy of our proposed OT-based refinement of conditional distributions and effectively mitigate the potential inconsistencies introduced by purely autoregressive methods.  

\section{Conclusion}
In this work, our analysis shows that patch denoising in autoregressive models mitigates condition errors, stabilizing condition distribution and enhancing generation quality. Autoregressive condition generation further refines conditions, exponentially reducing error influence. 
To address ``condition inconsistency'', we introduce a refinement method based on Optimal Transport and prove that casting it as Wasserstein Gradient Flow ensures convergence. 
Experimental results and analysis show the effectiveness of our method.

%% file: appendix.tex
\clearpage
\appendix
\section{Related Work}\label{app:related}
\subsection{Diffusion Model}
Diffusion models have emerged as a powerful generative framework,  surpassing GANs~\citep{GAN} and VAEs~\citep{VAE} in stability and sample quality. DDPMs~\citep{ho2020denoising} introduced a noise-based training and reconstruction paradigm,  later linked theoretically to Score Matching and DAEs~\citep{ScoreMatching}. However,  early diffusion models suffered from slow sampling due to numerous iterative steps. Improved DDPMs~\citep{nichol2021improved} refined noise scheduling,  while DDIMs~\citep{song2020denoising} accelerated generation through a non-Markovian formulation. LDMs~\citep{rombach2022high} further optimized efficiency by applying diffusion in a lower-dimensional latent space. Diffusion models also exhibit theoretical advantages over GANs,  notably their implicit minimization of the Wasserstein distance~\citep{ScoreWasserstein},  leading to better convergence and robustness. Enhancing conditional control remains a key research focus: Classifier-Free Diffusion Guidance~\citep{ho2022classifier} enables flexible conditioning without external classifiers, and structure-aware adaptations~\citep{Adapting} improve efficiency for structured data. Various applications extend their utility: Palette~\citep{Saharia2022Palette} enhances image restoration,  GLIDE~\citep{Nichol2022GLIDE} improves text-guided synthesis,  CDMs~\citep{Ho2022Cascaded} refine images progressively,  and ControlNet~\citep{Zhang2023ControlNet} integrates structural conditions for enhanced controllability. Detecting diffusion-generated images is increasingly challenging,  with studies like Schaefer et al.~\citep{Schaefer2023Detection} highlighting the need for robust detection methods.  

\subsection{Autoregressive Image Generation}
Autoregressive models,  despite their effectiveness,  face computational constraints due to sequential generation \citep{zhou2024less,sun2025speed,zhou2025medical}. Optimization efforts focus on efficiency and scalability:  LlamaGen~\citep{llamagen2024} leverages large-scale training to surpass diffusion models in quality and efficiency,  while VAR~\citep{var2024} reduces inference latency via next-scale prediction. Spatial alignment strategies like ImageFolder~\citep{imagefolder2024} improve autoregressive modeling,  and Emu3~\citep{emu32024} unifies token prediction across modalities. Expanding autoregression to multimodal tasks requires bridging discrete and continuous data representations. Lumina-mGPT~\citep{luminamgpt2024} employs a decoder-only Transformer for high-quality text-to-image synthesis,  while MMAR~\citep{mmar2024} models continuous tokens to enhance understanding and generation. Traditional vector quantization in autoregressive models is being reconsidered:  VQ-free autoregression~\citep{vqfree2024} introduces diffusion-based per-token probabilities for efficiency,  and LatentLM~\citep{latentlm2024} integrates next-token diffusion for multimodal synthesis across image,  speech,  and text.

\section{Limitations}\label{app:limitation}

While our research provides novel theoretical insights and algorithmic advancements, it is important to acknowledge certain limitations. Specifically, our experimental evaluation has not been conducted on large-scale models due to the substantial computational resources required for such validation. Instead, our focus has been on a rigorous theoretical analysis and the development of scalable algorithms. Despite the absence of experiments on very large models, the generality and applicability of our method have been theoretically established, and our experiments on general settings support the soundness of the proposed approach. We believe future work can extend these evaluations to more resource-intensive settings to further verify empirical performance at scale.

\section{Proof of Theorem \ref{thm:upper_bound} (Conditional Score Matching Upper Bound)}
\label{app:upper_bound_proof}

We begin by establishing two foundational lemmas required for the proof.

\begin{lemma}[Bayes' Theorem for Conditional Scores]
\label{lem:bayes}
For any measurable sets $c$ and $\mathbf{x}_t$,  the posterior distribution satisfies:
\begin{align*}
p_t(c|\mathbf{x}_t) = \frac{p_c(c)p_t(\mathbf{x}_t|c)}{p_t(\mathbf{x}_t)} = \frac{p_c(c)p_t(\mathbf{x}_t|c)}{\int p_c(c')p_t(\mathbf{x}_t|c')dc'}
\end{align*}
where the second equality explicitly shows the marginalization over $c'$.
\end{lemma}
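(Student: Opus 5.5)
This identity is Bayes' rule written out for densities, and I would prove it in three steps, taking all densities as strictly positive wherever they appear in denominators. The one real choice is a convention: the hypothesis says ``measurable sets $c$ and $\mathbf{x}_t$'', but I will read $c$ and $\mathbf{x}_t$ as points, with $p_c$, $p_t(\cdot|c)$ and $p_t(c|\cdot)$ densities with respect to a common reference measure (Lebesgue or counting). Under that reading both equalities become pointwise statements.

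\textbf{Step 1: factor the joint density both ways.} Write the joint density as $p_t(c,\mathbf{x}_t)$. By the definition of conditional density,
\begin{align*}
p_t(c,\mathbf{x}_t) = p_c(c)\,p_t(\mathbf{x}_t|c) = p_t(\mathbf{x}_t)\,p_t(c|\mathbf{x}_t).
\end{align*}
Dividing by $p_t(\mathbf{x}_t)>0$ gives the first equality of Lemma~\ref{lem:bayes}.

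\textbf{Step 2: identify the marginal.} The law of total probability gives
\begin{align*}
p_t(\mathbf{x}_t) = \int p_t(c',\mathbf{x}_t)\,dc' = \int p_c(c')\,p_t(\mathbf{x}_t|c')\,dc'.
\end{align*}
Substituting this into the denominator yields the second equality, with the integral replaced by a sum if $c$ is discrete, as for class labels.

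\textbf{Step 3: justify the manipulations.} The only point needing care, rather than a genuine obstacle, is the measure-theoretic setting. I would invoke the existence of regular conditional distributions on the separable complete metric space from Assumption~\ref{ass:arprocessing}, item 5. I would then use Fubini/Tonelli to justify the integral above, which holds because the integrand is nonnegative. Finally, I would note that the marginal is strictly positive: the Gaussian forward process gives $p_t(\mathbf{x}_t)>0$ for every $t\ge 1$, and Lemma~\ref{lemma:RegularityofConditionalProbability} gives the same lower bound $\delta$. This positivity is what makes the division in Step 1 valid.

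As a remark that feeds into Theorem~\ref{thm:upper_bound}, taking $\nabla_{\mathbf{x}_t}\log$ of the identity gives the score decomposition
\begin{align*}
\nabla_{\mathbf{x}_t}\log p_t(c|\mathbf{x}_t) = \nabla_{\mathbf{x}_t}\log p_t(\mathbf{x}_t|c) - \nabla_{\mathbf{x}_t}\log p_t(\mathbf{x}_t),
\end{align*}
because $p_c(c)$ does not depend on $\mathbf{x}_t$.
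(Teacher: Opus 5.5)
Your argument is correct and matches the paper's approach. The paper gives no separate proof of this lemma; it relies on the same two facts you use: Bayes' rule, which you obtain from the two factorizations of the joint density, and the law of total probability for the denominator, written out in Step~1 of its proof of Theorem~\ref{thm:upper_bound}. One caution: base the positivity of $p_t(\mathbf{x}_t)$ on your Gaussian-forward-kernel argument alone. The uniform bound $p_t(\mathbf{x}_t|c)\ge\delta>0$ of Lemma~\ref{lemma:RegularityofConditionalProbability} cannot hold for a probability density on an unbounded space, and that lemma was stated for the autoregressive conditions $c_i$ anyway.
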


\begin{lemma}[Jensen's Inequality for Convex Functions]
\label{lem:jensen}
For any convex function $f:\mathbb{R}^d \to \mathbb{R}$ and random variable $\mathbf{Y}$ with finite expectation:
\begin{align*}
f\left(\mathbb{E}[\mathbf{Y}]\right) \leq \mathbb{E}\left[f(\mathbf{Y})\right]
\end{align*}
Equality holds if and only if $f$ is affine linear on the support of $\mathbf{Y}$,  or $\mathbf{Y}$ is constant almost surely.
\end{lemma}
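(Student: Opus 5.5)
The plan is the standard supporting-hyperplane argument. Set $\mathbf{m} := \mathbb{E}[\mathbf{Y}]$, which is finite by hypothesis. Since $f:\mathbb{R}^d\to\mathbb{R}$ is convex and finite everywhere, it has a nonempty subdifferential at every point of the open set $\mathbb{R}^d$. I will therefore fix a subgradient $\mathbf{g}\in\partial f(\mathbf{m})$, giving the global affine minorant
\begin{align*}
f(\mathbf{y}) \;\geq\; \ell(\mathbf{y}) := f(\mathbf{m}) + \langle \mathbf{g}, \mathbf{y}-\mathbf{m}\rangle \qquad \forall \mathbf{y}\in\mathbb{R}^d .
\end{align*}

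Substituting $\mathbf{y}=\mathbf{Y}$ and taking expectations gives the inequality. The right-hand side $\ell(\mathbf{Y})$ is integrable because $\mathbf{Y}$ has finite mean. Hence the negative part of $f(\mathbf{Y})$ is integrable, and $\mathbb{E}[f(\mathbf{Y})]$ is well defined in $(-\infty,+\infty]$. Linearity then gives $\mathbb{E}[\ell(\mathbf{Y})] = f(\mathbf{m}) + \langle \mathbf{g}, \mathbb{E}[\mathbf{Y}]-\mathbf{m}\rangle = f(\mathbf{m})$, so $f(\mathbb{E}[\mathbf{Y}])\leq \mathbb{E}[f(\mathbf{Y})]$.

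For the equality clause I will use the nonnegative random variable $Z := f(\mathbf{Y})-\ell(\mathbf{Y})\geq 0$. Equality in Jensen is exactly $\mathbb{E}[Z]=0$, which holds if and only if $Z=0$ almost surely. Equivalently, $f$ coincides with the affine function $\ell$ on a set of full $\mathbf{Y}$-measure, i.e.\ on the support of $\mathbf{Y}$ up to a null set. This gives the necessity direction. The case where $\mathbf{Y}$ is almost surely constant is immediate, since then $\mathbf{Y}=\mathbf{m}$ a.s.\ and both sides equal $f(\mathbf{m})$.

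The main obstacle is the sufficiency direction of the equality statement, which needs care in how ``affine on the support'' is read. Suppose $f$ agrees with some affine $\tilde\ell$ only on the support $S$. Then $\mathbb{E}[f(\mathbf{Y})]=\tilde\ell(\mathbf{m})$, but convexity only yields $f\leq\tilde\ell$ on $\operatorname{conv}S$, hence $f(\mathbf{m})\leq\tilde\ell(\mathbf{m})$, which can be strict. For example, take $f(y)=|y|$ and $\mathbf{Y}$ uniform on $\{-1,1\}$: $f$ is constant (hence affine) on $S$, yet $f(0)=0<1=\mathbb{E}|\mathbf{Y}|$. I will therefore prove the clause in its correct form: equality holds if and only if $f$ agrees almost surely on $S$ with an affine function that supports $f$ at $\mathbf{m}$. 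A sufficient condition is that $f$ is affine on $\operatorname{conv}S$, which contains $\mathbf{m}$, because then $\tilde\ell(\mathbf{m})=f(\mathbf{m})$. I will note this interpretation explicitly. It suffices for the later use of the lemma in the proof of Theorem~\ref{thm:upper_bound}, where $f=\|\cdot\|^2$ is strictly convex and equality reduces to $\mathbf{Y}$ being almost surely constant.
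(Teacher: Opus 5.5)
The paper gives no proof of this lemma. It is stated as a standard tool and invoked once, in Step~3 of the proof of Theorem~\ref{thm:upper_bound}. Your supporting-hyperplane argument is the standard proof and is correct, including the integrability bookkeeping: the affine minorant $\ell(\mathbf{Y})$ controls the negative part of $f(\mathbf{Y})$, so $\mathbb{E}[f(\mathbf{Y})]\in(-\infty,+\infty]$ is well defined, and equality forces it to be finite.

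Your objection to the equality clause is also right. The example $f(y)=|y|$ with $\mathbf{Y}$ uniform on $\{-1,1\}$ is a genuine counterexample to the clause as written, since any function on a two-point set agrees there with an affine function.

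Your $Z$-argument in fact gives the corrected statement directly. The function $f-\ell$ is continuous, convex and nonnegative on $\mathbb{R}^d$. If $(f-\ell)(\mathbf{Y})=0$ almost surely, then $f-\ell$ vanishes on the closed set $S=\operatorname{supp}\mathbf{Y}$, so your ``up to a null set'' can be dropped. By convexity it then vanishes on all of $\operatorname{conv}S$. So equality holds if and only if $f$ is affine on $\operatorname{conv}S$, which also absorbs the ``$\mathbf{Y}$ constant'' case.

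The one step you assert without justification is $\mathbf{m}\in\operatorname{conv}S$ in the sufficiency direction. It is true (the mean even lies in the relative interior of $\operatorname{conv}S$). The quickest fix is to show $\mathbf{m}\in\overline{\operatorname{conv}S}$ by strict separation: a linear $h$ with $h(\mathbf{m})<\inf_{\overline{\operatorname{conv}S}}h$ would contradict $\mathbb{E}[h(\mathbf{Y})]=h(\mathbf{m})$. Continuity of $f$ and $\tilde\ell$ then extends $f=\tilde\ell$ to the closure.

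In the paper's application, $f=\|\cdot-s_\theta\|^2$ is strictly convex. The corrected clause therefore yields exactly ``$\mathbf{Y}$ constant a.s.'', which is what the Tightness Analysis in that proof uses. Read literally, the paper's clause would instead wrongly permit equality for every two-point distribution.
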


\paragraph{Step-by-Step Proof:}
Using these lemmas,  we proceed with the main proof.

\paragraph{Step 1: Marginal-Conditional Decomposition.}
Express the marginal distribution through conditioning variables:
\begin{align*}
p_t(\mathbf{x}_t) &= \int_{\mathcal{C}} p_c(c)p_t(\mathbf{x}_t|c)dc \quad \text{(Law of total probability)}
\end{align*}
Differentiate both sides under the integral sign (valid under Dominated Convergence Theorem conditions):
\begin{align*}
\nabla_{\mathbf{x}_t}p_t(\mathbf{x}_t) &= \int_{\mathcal{C}} p_c(c)\nabla_{\mathbf{x}_t}p_t(\mathbf{x}_t|c)dc
\end{align*}

\paragraph{Step 2: Score Function Representation.}
Using Lemma \ref{lem:bayes},  decompose the marginal score:
\begin{align*}
\nabla_{\mathbf{x}_t}\log p_t(\mathbf{x}_t) &= \frac{\nabla_{\mathbf{x}_t}p_t(\mathbf{x}_t)}{p_t(\mathbf{x}_t)} \\
&= \frac{\int p_c(c)\nabla_{\mathbf{x}_t}p_t(\mathbf{x}_t|c)dc}{p_t(\mathbf{x}_t)} \\
&= \int \underbrace{\frac{p_c(c)p_t(\mathbf{x}_t|c)}{p_t(\mathbf{x}_t)}}_{p_t(c|\mathbf{x}_t)} \nabla_{\mathbf{x}_t}\log p_t(\mathbf{x}_t|c)dc \\
&= \mathbb{E}_{c \sim p_t(c|\mathbf{x}_t)}\left[ \nabla_{\mathbf{x}_t}\log p_t(\mathbf{x}_t|c) \right]
\end{align*}
where the critical step (line 3) applies Lemma \ref{lem:bayes} to identify the posterior distribution.

\paragraph{Step 3: Jensen's Inequality Application.}
Substitute into the unconditional loss:
\begin{align*}
\mathbb{E}_{\mathbf{x}_t} \left\| \nabla\log p_t - s_\theta \right\|^2 
&= \mathbb{E}_{\mathbf{x}_t} \left\| \mathbb{E}_{c|\mathbf{x}_t}[\nabla\log p_t(\cdot|c)] - s_\theta \right\|^2 \\
&\leq \mathbb{E}_{\mathbf{x}_t} \mathbb{E}_{c|\mathbf{x}_t} \left\| \nabla\log p_t(\cdot|c) - s_\theta \right\|^2 \quad \text{(by Lemma \ref{lem:jensen})}
\end{align*}
Here we specifically apply Lemma \ref{lem:jensen} with:
\begin{itemize}
\item $f(\mathbf{y}) = \|\mathbf{y} - s_\theta\|^2$ (convex since $\|\cdot\|^2$ is convex)
\item $\mathbf{Y} = \nabla_{\mathbf{x}_t}\log p_t(\mathbf{x}_t|c)$
\end{itemize}

\paragraph{Step 4: Law of Total Expectation.}
Convert the nested expectation to a joint expectation:
\begin{align*}
\mathbb{E}_{\mathbf{x}_t} \mathbb{E}_{c|\mathbf{x}_t}[\cdot] = \mathbb{E}_{c, \mathbf{x}_t}[\cdot] = \mathbb{E}_{c \sim p_c} \mathbb{E}_{\mathbf{x}_t \sim p_t(\cdot|c)}[\cdot]
\end{align*}
Thus, we obtain the final inequality:
\begin{align*}
\mathbb{E}_{\mathbf{x}_t} \left\| \nabla\log p_t - s_\theta \right\|^2 
\leq \mathbb{E}_{c, \mathbf{x}_t} \left\| \nabla\log p_t(\cdot|c) - s_\theta \right\|^2
\end{align*}

\paragraph{Tightness Analysis.} By Lemma \ref{lem:jensen},  equality holds iff $\nabla_{\mathbf{x}_t}\log p_t(\mathbf{x}_t|c)$ is constant $c$-a.s.,  which requires $p_t(\mathbf{x}_t|c) = p_t(\mathbf{x}_t)$ for all $c$ in the support of $p_c$. This corresponds to statistical independence $\mathbf{x}_t \perp\perp c$.

\section{Detailed Derivations for Conditional Score Matching Analysis}\label{appendix:conditional_score_matching_derivations}

\subsection{Proof of Lemma \ref{lemma:score_loss_expansion}}
\label{appendix:score_loss_expansion}
Following the score matching framework from \citep{ScoreMatching},  we begin by expanding the squared norm in both unconditional and conditional score matching objectives.

\paragraph{Unconditional Case:}
For the unconditional score matching loss:
\begin{align*}
&\mathbb{E}_{x_t\sim p_{t}(x_t)} \left[ \left\| \nabla_{x_t} \log p_{t}(x_t) - s_\theta(x_t,  t) \right\|^{2} \right] \\
&= \mathbb{E}_{x_t \sim p_{t}(x_t)} \left[ \left(\nabla_{x_t} \log p_{t}(x_t) - s_\theta(x_t,  t)\right)^\top \left(\nabla_{x_t} \log p_{t}(x_t) - s_\theta(x_t,  t)\right) \right] \\
&= \mathbb{E}_{x_t} \Big[ \underbrace{\left\| \nabla_{x_t} \log p_{t}(x_t) \right\|^{2}}_{\text{True score norm}} + \underbrace{\left\| s_\theta(x_t,  t) \right\|^{2}}_{\text{Learned score norm}} \\
&\quad - 2\underbrace{\left\langle s_\theta(x_t,  t),  \nabla_{x_t} \log p_{t}(x_t) \right\rangle}_{\text{Alignment term}} \Big]
\end{align*}
This follows directly from the identity $\|a-b\|^2 = \|a\|^2 + \|b\|^2 - 2a^\top b$.

\paragraph{Conditional Case:}
For conditional score matching,  expanding the L2 norm of the training error,  we have:
\begin{align*}
&E_{c, x_t\sim p_c(c) p_{t}(x_t|c)} \left[ \left\| \nabla_{x_t} \log p_{t}(x_t|c) - s_\theta(x_t,  t) \right\|^{2} \right] \\
&= E_{c, x_t\sim p_c(c) p_{t}(x_t|c)} \left[ \left\| \nabla_{x_t} \log p_{t}(x_t|c) \right\|^{2} +  \left\| s_\theta(x_t,  t) \right\|^{2} - 2 \left\langle s_\theta(x_t,  t) ,  \nabla_{x_t} \log p_{t}(x_t|c) \right\rangle \right] \\
&= E_{x_t \sim p_t(x_t)} E_{c \sim p_t(c|x_t)} \left[ \left\| \nabla_{x_t} \log p_{t}(x_t|c) \right\|^{2} +  \left\| s_\theta(x_t,  t) \right\|^{2} - 2 \left\langle s_\theta(x_t,  t) ,  \nabla_{x_t} \log p_{t}(x_t|c) \right\rangle \right] \\
&= E_{x_t \sim p_t(x_t)} \left[ E_{c \sim p_t(c|x_t)} \left[\left\| \nabla_{x_t} \log p_{t}(x_t|c) \right\|^{2} \right]+  \left\| s_{\theta}(x_t,  t) \right\|^{2} - 2E_{c \sim p_t(c|x_t)} \left\langle s_\theta(x_t,  t) ,  \nabla_{x_t} \log p_{t}(x_t|c) \right\rangle \right]
\end{align*}

\subsection{Derivation of Error Difference $\epsilon_c$ (Definition~\ref{def:epsilon_c})}
\label{appendix:epsilon_c_derivation}

\paragraph{Loss Difference Analysis:}
Subtracting the unconditional loss from the conditional loss (after expansion):
\begin{align*}
&\mathbb{E}_{c, x_t}[\text{Loss}] - \mathbb{E}_{x_t}[\text{Loss}] \\
&= \mathbb{E}_{x_t} \mathbb{E}_{c|x_t} \left[\left\| \nabla_{x_t} \log p_{t}(x_t|c) \right\|^{2}\right] - \mathbb{E}_{x_t} \left\| \nabla_{x_t} \log p_{t}(x_t) \right\|^{2} \\
&\quad + \underbrace{\mathbb{E}_{x_t} \left\| s_{\theta} \right\|^{2} - \mathbb{E}_{x_t} \left\| s_{\theta} \right\|^{2}}_{=0} \\
&\quad - 2\underbrace{\left( \mathbb{E}_{x_t, c|x_t} \langle s_{\theta},  \nabla\log p_t(x_t|c) \rangle - \mathbb{E}_{x_t} \langle s_{\theta},  \nabla\log p_t(x_t) \rangle \right)}_{\text{Vanishes by tower property}}
\end{align*}

The cross-terms cancel due to the tower property of expectation:
\begin{align*}
\mathbb{E}_{x_t} \mathbb{E}_{c|x_t} \langle s_{\theta},  \nabla\log p_t(x_t|c) \rangle &= \mathbb{E}_{x_t} \langle s_{\theta},  \mathbb{E}_{c|x_t} \nabla\log p_t(x_t|c) \rangle \\
&= \mathbb{E}_{x_t} \langle s_{\theta},  \nabla\log p_t(x_t) \rangle
\end{align*}
where we used the identity $\nabla\log p_t(x_t) = \mathbb{E}_{c|x_t} \nabla\log p_t(x_t|c)$ from \citep{Adapting}.

\paragraph{Final Error Expression:}
Thus,  the difference reduces to:
\begin{align*}
\epsilon_c &= \frac{1}{T} \sum_{t=1}^T \mathbb{E}_{x_t} \left[ \mathbb{E}_{c|x_t} \left\| \nabla\log p_t(x_t|c) \right\|^2 - \left\| \nabla\log p_t(x_t) \right\|^2 \right]
\end{align*}
This quantifies the excess ``score energy'' induced by conditioning,  similar to variance decomposition in probability theory.

\subsection{Properties of $\overline{\epsilon}_c$ (Definition~\ref{def:bar_epsilon_c})}
\label{appendix:epsilon_bar_c}

From Definition \ref{def:bar_epsilon_c},  we can relate $\overline{\epsilon}_c$ to $\epsilon_c$ using the law of total variance:
\begin{align*}
\overline{\epsilon}_c &= \epsilon_c + \frac{1}{T} \sum_{t=1}^T \mathbb{E}_{x_t} \left\| \nabla\log p_t(x_t) \right\|^2 
\end{align*}

This decomposition reveals that $\overline{\epsilon}_c$ contains both the intrinsic score energy from the unconditional distribution and the additional energy $\epsilon_c$ from conditioning.

\section{Conditional Control Term Uniqueness Proof}
\label{app:control_term_uniqueness}

\begin{proof}
We prove Lemma~\ref{lemma:unique_control_term} through three key steps:

\paragraph{Step 1: Bayesian Score Decomposition}  
Using Bayes' rule $p_t(x_t|c) = \frac{p(c|x_t)p_t(x_t)}{p(c)}$,  we derive:
\begin{align*}
\nabla_{x_t}\log p_t(x_t|c) &= \nabla_{x_t}\log p(c|x_t) + \nabla_{x_t}\log p_t(x_t) - \underbrace{\nabla_{x_t}\log p(c)}_{=0} \nonumber \\
&= \nabla_{x_t}\log p(c|x_t) + \nabla_{x_t}\log p_t(x_t)
\end{align*}

\paragraph{Step 2: Cross-Term Cancellation}  
The squared norm decomposes as:
\begin{align*}
\|\nabla_{x_t}\log p_t(x_t|c)\|^2 &= \|\nabla_{x_t}\log p(c|x_t)\|^2 + \|\nabla_{x_t}\log p_t(x_t)\|^2 \nonumber \\
&\quad + 2\langle\nabla_{x_t}\log p(c|x_t),  \nabla_{x_t}\log p_t(x_t)\rangle
\end{align*}

Taking expectation over $p_t(x_t)$ and $p_c(c)$:
\begin{align*}
\mathbb{E}[\langle\nabla_{x_t}\log p(c|x_t),  \nabla_{x_t}\log p_t(x_t)\rangle] &= \mathbb{E}_{x_t}\left[\mathbb{E}_c\left[\langle\nabla_{x_t}\log p(c|x_t),  \nabla_{x_t}\log p_t(x_t)\rangle\mid x_t\right]\right] \nonumber \\
&= \mathbb{E}_{x_t}\left[\langle\underbrace{\mathbb{E}_c[\nabla_{x_t}\log p(c|x_t)]}_{=0},  \nabla_{x_t}\log p_t(x_t)\rangle\right] = 0
\end{align*}
where the inner expectation vanishes because $\mathbb{E}_{c}[\nabla_{x_t}\log p(c|x_t)] = \nabla_{x_t}\mathbb{E}_{c}[p(c|x_t)] = \nabla_{x_t}1 = 0$.

\paragraph{Step 3: Variance Propagation}  
Combining results from Steps 1-2:
\begin{align*}
\epsilon_c &= \mathbb{E}\|\nabla_{x_t}\log p_t(x_t|c)\|^2 - \mathbb{E}\|\nabla_{x_t}\log p_t(x_t)\|^2 \nonumber \\
&= \mathbb{E}\|\sigma_t^2\nabla_{x_t}\log p(c|x_t)\|^2
\end{align*}

The scaling factor $\sigma_t^2$ emerges from the reverse process parameterization in \citep{Classifier-Free, MoreControlforFree},  where the conditional mean adjustment contains an explicit $\sigma_t^2$ multiplier. This completes the proof that $\epsilon_c$ isolates the conditional control term's contribution.
\end{proof}

\section{Proof of Condition Refinement via Patch Denoising
(Proposition~\ref{prop:patch_refine_condition})}\label{app:RefineViaPatchDenoising}
To understand the long-term behavior of the condition refinement process,  we invoke the Markov Chain Stationary Theorem.
The conditional probability density function $p(x|c)$ is then given by:
aBy invoking ~\citep{meyn2012markov},  we establish that as $c_i$ iterates,  its distribution converges to a stationary distribution. 
(The details of the lemma we used can be found in Appendix~\ref{app:markov_lemma}.)
Consequently,  our analysis shifts to understanding how the gradient of the conditional probability evolves when $c_i$ follows a normal distribution. To compute the gradient of the conditional probability density function $p(x|c)$ with respect to $x$,  the joint distribution of $x$ and $c$ follows a multivariate normal distribution:
\begin{align*}
\begin{pmatrix}
X \\
C
\end{pmatrix} \sim \mathcal{N} \left( 
\begin{pmatrix}
\mu_x \\
\mu_c
\end{pmatrix},  
\begin{pmatrix}
\sigma_{xx} & \sigma_{xc} \\
\sigma_{xc} & \sigma_{cc}
\end{pmatrix} \right).
\end{align*}
The conditional probability density function $p(x|c)$ is:
\begin{equation*}
p(x|c) = \frac{1}{\sqrt{2 \pi(\sigma_{xx} -\frac{\sigma_{xc}^2}{\sigma_{cc}}}) } \exp\left( -\frac{(x -(\mu_x + \frac{\sigma_{xc}}{\sigma_{cc}} (c - \mu_c)))^2}{2 (\sigma_{xx}-\frac{\sigma_{xc}^2}{\sigma_{cc}})} \right)
\end{equation*}
Taking the logarithm,  we obtain the log-likelihood function:
\begin{align*}
\log p(x|c) = &-\frac{1}{2} \log \left(2 \pi (\sigma_{xx} - \frac{\sigma_{xc}^2}{\sigma_{cc}}) \right) \notag\\
&- \frac{(x - (\mu_x + \frac{\sigma_{xc}}{\sigma_{cc}} (c - \mu_c)))^2}{2 (\sigma_{xx}-\frac{\sigma_{xc}^2}{\sigma_{cc}})}
\end{align*}
Differentiating with respect to $x$,  while ignoring constant terms,  yields:
\begin{align*}
\nabla_x\log p(x|c) = -\frac{x - (\mu_x + \frac{\sigma_{xc}}{\sigma_{cc}}(c - \mu_c))}{\sigma_{xx} - \frac{\sigma_{xc}^2}{\sigma_{cc}}}.
\end{align*}
The squared norm of this gradient is given by:
\begin{align*}
\|\nabla_x\log p(x|c)\|^2 = \left(\frac{x - (\mu_x + \frac{\sigma_{xc}}{\sigma_{cc}}(c - \mu_c))}{\sigma_{xx} - \frac{\sigma_{xc}^2}{\sigma_{cc}}}\right)^2.
\end{align*}
The conditional mean and the conditional variance primarily influence the gradient behavior. The conditional mean is:
\begin{equation}
\mu_x + \frac{\sigma_{xc}}{\sigma_{cc}} (c - \mu_c), \nonumber
\end{equation}
where $\mu_c$ represents the mean of $c$. As $c$ iterates and reaches its stationary distribution,  $\mu_c$ converges to a constant,  which we denote as $\mu_c^{\text{stable}}$. Consequently,  the conditional mean stabilizes to a fixed value.
The conditional variance $\sigma_{xx} - \frac{\sigma_{xc}^2}{\sigma_{cc}}$ is determined by the covariance structure of the joint distribution. Since this variance does not depend on $c$ in its stationary distribution,  it remains unchanged.
As $c$ reaches its stationary distribution,  the deviation of $x$ from the conditional mean gradually diminishes while the variance remains constant. It leads to gradient magnitude decay,  indicating attenuation of the conditional probability gradient over iterations.

\section{Proof of Descent of Gradient Norm in Autoregressive
Process (Theorem~\ref{thm:gradient_descent_ar})}\label{app:DescentofGradientNorm}
\begin{proof}
\begin{remark}[State Space Representation]\label{rem:state_space}
By introducing the state vector $\mathbf{c}_i = (c_i,  c_{i-1},  \ldots,  c_{i-p+1})^\top$,  we can represent the original process as a vector-valued first-order autoregressive process:
\begin{align*}
\mathbf{c}_{i+1} = \mathbf{A}\mathbf{c}_i + \mathbf{e}_{i+1}
\end{align*}
\begin{align*}
\mathbf{A} = \begin{pmatrix}
a_0 & a_1 & \cdots & a_{p-2} & a_{p-1} \\
1 & 0 & \cdots & 0 & 0 \\
0 & 1 & \cdots & 0 & 0 \\
\vdots & \vdots & \ddots & \vdots & \vdots \\
0 & 0 & \cdots & 1 & 0
\end{pmatrix}, ~
\mathbf{e}_{i+1} = \begin{pmatrix}
\varepsilon_{i+1} \\
0 \\
0 \\
\vdots \\
0
\end{pmatrix}
\end{align*}
In this representation,  the conditional distribution of $\mathbf{c}_{i+1}$ depends only on $\mathbf{c}_i$. From Assumption~\ref{ass:arprocessing} (1),  the spectral radius of matrix $\mathbf{A}$ is less than 1,  which ensures the stability of the process.
\end{remark}
Let $p_t(x_t|\pi) = \int_{\mathcal{X}} p_t(x_t|c)\pi(dc)$.
The log-likelihood gradient can be decomposed as:
\begin{align*}
\nabla_{x_t}\log p_t(x_t|c_i) &= \frac{\nabla_{x_t}p_t(x_t|c_i)}{p_t(x_t|c_i)} \\
&= \left( \frac{\nabla_{x_t}p_t(x_t|c_i)}{p_t(x_t|c_i)} - \frac{\nabla_{x_t}p_t(x_t|\pi)}{p_t(x_t|\pi)} \right) + \underbrace{\frac{\nabla_{x_t}p_t(x_t|\pi)}{p_t(x_t|\pi)}}_{\text{Stationary Term}}
\end{align*}
Using the triangle inequality for norms,  we get:
\begin{align*}
\|\nabla_{x_t}\log p_t(x_t|c_i)\|
&\leq \left\|\frac{\nabla_{x_t}p_t(x_t|c_i)}{p_t(x_t|c_i)} - \frac{\nabla_{x_t}p_t(x_t|\pi)}{p_t(x_t|\pi)}\right\| + \left\| \underbrace{\frac{\nabla_{x_t}p_t(x_t|\pi)}{p_t(x_t|\pi)}}_{\text{Stationary Term}}\right\|\\
\end{align*}
For the non-stationary term,  we have:
\begin{align}
&~~~~~\left\|\frac{\nabla_{x_t}p_t(x_t|c_i)}{p_t(x_t|c_i)} - \frac{\nabla_{x_t}p_t(x_t|\pi)}{p_t(x_t|\pi)}\right\| \nonumber\\
&= \left\|\frac{p_t(x_t|\pi)\nabla_{x_t}p_t(x_t|c_i) - p_t(x_t|c_i)\nabla_{x_t}p_t(x_t|\pi)}{p_t(x_t|c_i)p_t(x_t|\pi)}\right\| \nonumber\\
&\leq \frac{1}{\delta^2}\left\|p_t(x_t|\pi)\nabla_{x_t}p_t(x_t|c_i) - p_t(x_t|c_i)\nabla_{x_t}p_t(x_t|\pi)\right\| \nonumber\\
&= \frac{1}{\delta^2}\left\|p_t(x_t|\pi)[\nabla_{x_t}p_t(x_t|c_i)-\nabla_{x_t}{p_t(x_t|\pi)}] + \nabla_{x_t}p_t(x_t|\pi)[p_t(x_t|\pi)-p_t(x_t|c_i)]\right\| \nonumber\\
& \leq \frac{1}{\delta^2}\Big[\left\|p_t(x_t|\pi)[\nabla_{x_t}p_t(x_t|c_i)-\nabla_{x_t}{p_t(x_t|\pi)}]\| + \|\nabla_{x_t}p_t(x_t|\pi)[p_t(x_t|\pi)-p_t(x_t|c_i)]\right\| \Big]\nonumber\\
\intertext{Using the Lipschitz property of both the conditional probability density function $p_t(x_t|c)$ and its gradient $\nabla_{x_t}p_t(x_t|c)$ with respect to $c$ (with Lipschitz constant $L$),  and the upper bound $M_2$ for $p_t(x_t|\pi)$ from Lemma~\ref{lemma:RegularityofConditionalProbability},  the above inequality becomes:}
& \leq \frac{1}{\delta^2}\Big[ M_{2}L\|c_{i}-c_{\pi}\| + \|\nabla_{x_t}p_t(x_t|\pi)\|L\|c_{i}-c_{\pi}\| \Big]\nonumber\\
&=\frac{L}{\delta^2}\Big[ M_{2} + \|\nabla_{x_t}p_t(x_t|\pi)\| \Big]\|c_{i}-c_{\pi}\| \nonumber\\
\intertext{We consider $c_\pi$ to be a representative value from the stationary distribution $\pi$,  for simplicity we can consider $c_\pi = \mathbb{E}_\pi[c]$.
Applying the geometric ergodicity of the Markov chain (Lemma~\ref{lemma:GeometricErgodicity} in Appendix~\ref{app:GeometricErgodicity}),  which gives us $\|c_{i}-c_{\pi}\| \leq \sqrt{\text{Var}_{\pi}(c)}\rho^{i}$,  we arrive at:}
&\leq \frac{L}{\delta^2}\Big[ M_{2} + \|\nabla_{x_t}p_t(x_t|\pi)\| \Big]\sqrt{\text{Var}_{\pi}(c)}\rho^{i}\nonumber
\end{align}
For the stationary term,  from Lemma~\ref{lemma:BoundedDerivativeTheorem},  we have:
\begin{equation*}
\left\|\frac{\nabla_{x_t}p_t(x_t|\pi)}{p_t(x_t|\pi)}\right\|
\leq \frac{\|\nabla_{x_t}p_t(\cdot|\pi)\|_{\infty}}{\delta} \leq \frac{M_1}{\delta} := m
\end{equation*}
where $m$ is a constant.

Combining the estimates for both terms,  we have:
\begin{align*}
\|\nabla_{x_t}\log p_t(x_t|c_i)\|
&\leq \frac{L}{\delta^2}\Big( M_{2} +\|\nabla_{x_t}p_t(x_t|\pi)\| \Big)\sqrt{\text{Var}_{\pi}(c)}\rho^{i} + \frac{M_1}{\delta}\\
&\leq M\rho^{i} + m
\end{align*}
where $M = \frac{L}{\delta^2}\Big( M_{2} +\|\nabla_{x_t}p_t(x_t|\pi)\| \Big)\sqrt{\text{Var}_{\pi}(c)}$ and $\beta = \rho \in (0, 1)$.

Thus,  we obtain the desired exponential decay estimate for the gradient norm as the autoregressive process iterates. This estimate holds uniformly for all $x_t \in \mathcal{X}$.
\end{proof}

\section{Geometric Ergodicity and Convergence to Stationary Mean}\label{app:GeometricErgodicity}
\begin{lemma}[Geometric Ergodicity and Convergence to Stationary Mean]\label{lemma:GeometricErgodicity}
Assume that the Markov chain $\{c_t\}_{t \ge 0}$ is geometrically ergodic with stationary distribution $\pi$,  and let $c_\pi = \mathbb{E}_\pi[c]$. Then,  there exist constants $C > 0$ and $0 < \rho < 1$ such that for all $i \ge 0$:
$$ \|c_{i}-c_{\pi}\| \leq C \rho^{i} \sqrt{\text{Var}_{\pi}(c)} $$
where $\text{Var}_{\pi}(c) = \mathbb{E}_\pi[\|c - c_\pi\|^2]$.
\end{lemma}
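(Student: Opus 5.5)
Since $c_i$ is random and $c_\pi$ is deterministic, I read the inequality as a statement about the mean trajectory: $\|\mathbb{E}[c_i]-c_\pi\| \le C\rho^i\sqrt{\mathrm{Var}_\pi(c)}$, with the initial condition $c_0$ (or its law $\mu$) fixed. This is the only sense in which it can hold uniformly in $i$, and it is how it is used in the proof of Theorem~\ref{thm:gradient_descent_ar}. The plan is to obtain geometric convergence of $\mathbb{E}[f(c_i)]$ for the linear test functions $f(c)=\langle u, c-c_\pi\rangle$, $\|u\|=1$. I would then take the supremum over $u$ and absorb all initial-state dependence into the constant $C$.

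\textbf{Step 1 (strengthen the ergodicity).} The total-variation bound of Lemma~\ref{lemma:MarkovProperty} is not enough, because $f$ is unbounded. I would therefore upgrade it to $V$-uniform ergodicity using a Foster--Lyapunov drift condition. I work with the companion form $\mathbf{c}_{i+1}=\mathbf{A}\mathbf{c}_i+\mathbf{e}_{i+1}$ of Remark~\ref{rem:state_space}, where the spectral radius of $\mathbf{A}$ is below $1$. Take $P\succ 0$ solving the discrete Lyapunov equation $\mathbf{A}^\top P\mathbf{A}-P=-I$, and set $V(\mathbf{c})=1+\mathbf{c}^\top P\mathbf{c}$. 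Since the noise is centered Gaussian and independent of $\mathbf{c}_i$,
\begin{align*}
\mathbb{E}[V(\mathbf{c}_{i+1})\mid \mathbf{c}_i] = 1+\mathbf{c}_i^\top\mathbf{A}^\top P\mathbf{A}\mathbf{c}_i+\sigma^2P_{11} \le \lambda V(\mathbf{c}_i)+b,
\end{align*}
with $\lambda=1-1/\|P\|<1$. The Gaussian innovation gives a density bounded below on compacts, so compact sets are small. Meyn--Tweedie's $V$-uniform ergodic theorem then yields, for every $|g|\le V$,
\begin{align*}
|\mathbb{E}_\mu[g(\mathbf{c}_i)]-\pi(g)|\le R\,\rho^i\,\mu(V).
\end{align*}

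\textbf{Step 2 (apply to the mean).} For $f(c)=\langle u,c-c_\pi\rangle$ we have $|f|\le \|c\|+\|c_\pi\|\le \kappa V$ for some constant $\kappa$, and $\pi(f)=0$. Hence $|\langle u,\mathbb{E}[c_i]-c_\pi\rangle|\le \kappa R\rho^i\mu(V)$. Taking the supremum over unit $u$ gives $\|\mathbb{E}[c_i]-c_\pi\|\le \kappa R\,\mu(V)\,\rho^i$. As a sanity check, the linear structure gives this directly as well: $\mathbb{E}[\mathbf{c}_{i}]=\mathbf{A}^i\mathbb{E}[\mathbf{c}_0]$ and $c_\pi=0$, so $\|\mathbf{A}^i\|\le K\rho^i$ for any $\rho$ strictly larger than the spectral radius. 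This direct route is what I would actually present if the Markov-chain machinery becomes cumbersome.

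\textbf{Step 3 (produce the $\sqrt{\mathrm{Var}_\pi}$ factor).} Because $\sigma^2>0$, the stationary law is a nondegenerate Gaussian, and $\mathrm{Var}_\pi(c)\ge\sigma^2>0$. I can therefore write $\kappa R\mu(V)=C\sqrt{\mathrm{Var}_\pi(c)}$ with $C:=\kappa R\mu(V)/\sqrt{\mathrm{Var}_\pi(c)}$, which is finite whenever $\mu$ has a finite second moment.

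The main obstacle is Step 1: the TV-norm statement in the paper does not control moments, so the drift condition and the choice of $V$ are essential. The secondary point is that the $\sqrt{\mathrm{Var}_\pi}$ factor is cosmetic. It comes from renormalizing the constant, not from the dynamics, and I would state this explicitly.
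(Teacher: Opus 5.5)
The paper gives no real proof here. The appendix only ``explains'' that geometric ergodicity makes the initial influence die out, then asserts the bound for the realized random state $c_i$. Your diagnosis that this literal claim cannot hold is correct, and it is the substantive point. If $\pi$ is not a point mass, total-variation convergence gives $\mathbb{P}(\|c_i-c_\pi\|\ge\varepsilon)\to\pi(\|c-c_\pi\|\ge\varepsilon)>0$ for small $\varepsilon$, while $C\rho^i\sqrt{\mathrm{Var}_\pi(c)}\to0$. If $\pi$ is a point mass, the right-hand side vanishes and forces $c_0=c_\pi$. So any proof must change the statement, and your mean version is correct.

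The direct route, $\mathbb{E}[\mathbf c_i]=\mathbf A^i\mathbb{E}[\mathbf c_0]$ with $\|\mathbf A^i\|\le K\rho^i$, is the one to present. It needs only a finite first moment of $\mu$. It also avoids a wrinkle in your Step~1: for $p>1$ the one-step companion kernel is degenerate, because noise enters only the first coordinate, so compact sets are small only for the $p$-step kernel.

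Two points on hypotheses:
\begin{itemize}
\item Say explicitly that you use more than the lemma assumes, namely the linear Gaussian structure. Bare TV-geometric ergodicity does not control $\mathbb{E}[c_i]$.
\item Do not take $\rho(\mathbf A)<1$ from Remark~\ref{rem:state_space}. Its justification is wrong: Assumption~\ref{ass:arprocessing}(1) allows $a_0=a_1=0.9$, which gives an eigenvalue $1.5$. Derive it from the ergodicity hypothesis instead. Since $(\mathbf A,e_1)$ is controllable, an eigenvalue of modulus $\ge1$ would make the Gaussian laws of $\mathbf c_n$ non-tight, contradicting convergence to $\pi$.
\end{itemize}

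Where the proposal goes wrong is the claim that the mean version ``is how it is used in the proof of Theorem~\ref{thm:gradient_descent_ar}.'' There the lemma is substituted into the pathwise estimate $\frac{L}{\delta^2}\big[M_2+\|\nabla_{x_t}p_t(x_t|\pi)\|\big]\|c_i-c_\pi\|$ for the realized $c_i$. A bound on $\|\mathbb{E}[c_i]-c_\pi\|$ cannot be inserted there. Jensen runs the wrong way, since $\|\mathbb{E}[c_i]-c_\pi\|\le\mathbb{E}\|c_i-c_\pi\|$. Moreover, $\mathbb{E}\|c_i-c_\pi\|\to\mathbb{E}_\pi\|c-c_\pi\|>0$, so that quantity does not decay. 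Your lemma is true but does not rescue the downstream step, and you should say so rather than claim compatibility.

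``The only sense'' is also too strong. Couple $\mathbf c$ with a stationary copy $\tilde{\mathbf c}$ driven by the same noise, with the initial pair optimally coupled. Then $\mathbf c_i-\tilde{\mathbf c}_i=\mathbf A^i(\mathbf c_0-\tilde{\mathbf c}_0)$ pathwise, so $W_2(\mathcal L(\mathbf c_i),\pi)\le K\rho^i\,W_2(\mu,\pi)$. For $\mu=\delta_{\mathbf c_0}$ this gives $W_2(\delta_{\mathbf c_0},\pi)^2=\|\mathbf c_0-\mathbb{E}_\pi\mathbf c\|^2+\mathrm{Var}_\pi(\mathbf c)$. In this form the $\sqrt{\mathrm{Var}_\pi}$ factor arises naturally from the initial mismatch, not by renormalizing $C$. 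This comparison of the realized chain with a coupled stationary copy is also much closer to what a pathwise argument like that of Theorem~\ref{thm:gradient_descent_ar} would actually need.
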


\noindent \textit{Explanation:}
Lemma~\ref{lemma:GeometricErgodicity} formalizes the geometric convergence of the Markov chain state $c_i$ towards the stationary mean $c_\pi$ due to geometric ergodicity.  This property ensures that the influence of the initial condition diminishes exponentially over time,  allowing us to bound the distance $\|c_{i}-c_{\pi}\|$ by a geometrically decaying term proportional to $\sqrt{\text{Var}_{\pi}(c)}$. This justifies the transition in the eighth line of the derivation,  replacing $\|c_{i}-c_{\pi}\|$ with a term that decays geometrically with $i$.

\section{Proof of Regularity of Conditional Probabilities}
\label{app:regularity_proof}
\begin{enumerate}
\item Existence of a lower bound: Since the conditional probability density is quadratically continuous and differentiable and is defined on a tight set,  there is a minimum by the extreme value theorem. Since the probability density is always positive,  the minimum must be greater than zero.
\item Using the boundedness of the second-order derivatives in our Assumption 2
\begin{align*}
  for\quad\forall x \quad\exists K>0 , \quad \|\nabla^2_{x_t}p_t(x_t|c_i)\| \leq K
\end{align*}
Consider the difference of derivatives:
\begin{align*}
    \nabla_{x_t}p_t(x_t|c_1)-\nabla_{x_t}p_t(x_t|c_2)
\end{align*}
By the median theorem in multivariable calculus,  there exists some point between $c_1$$c_2$ such that:
\begin{align*}
    \nabla_{x_t}p_t(x_t|c_1)-\nabla_{x_t}p_t(x_t|c_2)=\nabla^2_{x_t}p_t(x_t|c)(c_1-c_2)
\end{align*}
By taking the value of the paradigm of the above equation,  we have:
\begin{align*}
    \|\nabla_{x_t}p_t(x_t|c_1)-\nabla_{x_t}p_t(x_t|c_2)\|=\|\nabla^2_{x_t}p_t(x_t|c)(c_1-c_2)\|
\end{align*}
Using Multiplicative Inequality for Norms$(\|a\cdot b\|\leq \|a\| \cdot\|b\|)$and the boundedness of the second-order derivatives,  we obtain:
\begin{align*}
    \|\nabla_{x_t}p_t(x_t|c_1)-\nabla_{x_t}p_t(x_t|c_2)\|
    \leq \|\nabla^2_{x_t}p_t(x_t|c)\|\cdot\|(c_1-c_2)\| 
    \leq K\cdot\|(c_1-c_2)\|
\end{align*}
Thus,  taking L = M,  we get the final inequality:
\begin{align*}
    \|\nabla_{x_t}p_t(x_t|c_1)-\nabla_{x_t}p_t(x_t|c_2)\|
    \leq L\|(c_1-c_2)\|
\end{align*}
\end{enumerate}

\section{Proof of the Markov property}
\label{app:markov_lemma_proof}
This paper mainly uses the last geometric traversal of the theorem and therefore focuses on proving the geometric traversal of Markov chains. The proof is divided into three steps.
\begin{enumerate}
    \item Drift Conditional Verification: Constructing Foster-Lyapunov Functions $V (x) = 1 + |x|^2$,  for$\quad \forall x$
\begin{align*}
    E[V (X_{n+1})|X_{n} = x] &= 1 + E[|a_nx + \epsilon_n|^2] \\
    &= 1 + |a_n|^2|x|^2 + \sigma^2\\
    &= |a_n|^2(1 + |x|^2) + (1-|a_n|^2 + \sigma^2) \\
    &\leq \lambda V (x) + b
\end{align*}
where$\lambda=\sup_n|a_n|^2<1$($\sum|a_i|<\infty$), $b=1+\sigma^2$.\\
    \item Compactness:for $\forall R > 0$, the set $\{x : V (x) \leq R\}$ is compact because it is equivalent to the closed ball$\{x:|x|^2 \leq R-1\}$.\\
    \item Irregularity and continuity: Since the noise term  $\epsilon_n$ obeys a normal distribution,  the transfer probability has a positive density everywhere,  which guarantees strong Feller and irregularity of the chain.\\
According to Meyn-Tweedie theory,  the above condition guarantees geometric ergodicity.
\end{enumerate}

\section{Lemma of Markov Property}
\label{app:markov_lemma}
\begin{lemma}[Markov Chain Stationary Theorem]
If a random process has a transition matrix \( P \) and is ergodic (i.e.,  any two states are aperiodic and irreducible),  then:
\begin{enumerate}[itemsep=5pt,  topsep=5pt]
    \item The limit of the $n$-step transition matrix exists and is given by:
    \begin{align*}
\lim_{n \to \infty} P^n =
\begin{pmatrix}
\pi(1) & \pi(2) & \cdots&\pi(j) & \cdots \\
\pi(1) & \pi(2) & \cdots&\pi(j) & \cdots \\
\vdots & \vdots & \ddots &\vdots &\ddots \\
\pi(1) & \pi(2) & \cdots&\pi(j) & \cdots \\
\vdots &\vdots &\vdots &\vdots &\ddots
\end{pmatrix}
\end{align*}
    \item The stationary distribution $\pi = [\pi(1),  \pi(2),  \ldots]$ satisfies the equation:
    \begin{align*}
    \pi(j) = \sum_{i} \pi_i P_{ij}
    \end{align*}
    \item $\pi$ is the unique non-negative solution to the stationary equation,  with $\sum_{i} \pi(i) = 1$.
\end{enumerate}

\end{lemma}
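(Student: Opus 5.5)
We read ``ergodic'' in the standard sense: the chain is irreducible, aperiodic and positive recurrent. For a finite state space, irreducibility alone already gives positive recurrence. With this reading, the plan is the classical \emph{coupling} proof. First we construct a candidate stationary distribution. Then we show that the rows of $P^n$ converge to it. Uniqueness follows from the convergence.

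\textbf{Step 1 (existence of $\pi$).} Fix a reference state $k$ and let $\tau_k$ be the return time to $k$. Define $\nu(j)=\mathbb{E}_k\big[\sum_{n=0}^{\tau_k-1}\mathbf{1}\{X_n=j\}\big]$, the expected number of visits to $j$ during one excursion from $k$. A one-step (last-exit) decomposition gives $\nu P=\nu$. Irreducibility gives $0<\nu(j)<\infty$ for every $j$. Positive recurrence gives $\sum_j\nu(j)=\mathbb{E}_k[\tau_k]<\infty$. Normalizing, $\pi=\nu/\mathbb{E}_k[\tau_k]$ is a probability vector with $\pi(j)=\sum_i\pi(i)P_{ij}$, which is item 2. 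We also record the identity $\pi(j)=1/\mathbb{E}_j[\tau_j]$.

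\textbf{Step 2 (convergence of $P^n$).} Let $(X_n)$ start at an arbitrary state $i$, and let $(Y_n)$ be an independent copy started from $\pi$. Consider the product chain $(X_n,Y_n)$ on pairs of states. Aperiodicity together with irreducibility gives, for each pair $i,j$, that $P^n_{ij}>0$ for all sufficiently large $n$. This is the number-theoretic fact about additive semigroups of $\mathbb{N}$ with gcd $1$. Applying it in each coordinate shows that the product chain is irreducible. It also has the stationary distribution $\pi\otimes\pi$, hence it is (positive) recurrent.

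Therefore the coupling time $T=\inf\{n:X_n=Y_n\}$ is almost surely finite. Define $Z_n=X_n$ for $n<T$ and $Z_n=Y_n$ for $n\ge T$. By the strong Markov property, $Z_n$ has the same law as $X_n$. This gives
\begin{align*}
\sum_j\big|P^n_{ij}-\pi(j)\big| \le 2\,\mathbb{P}(T>n)\xrightarrow[n\to\infty]{}0 .
\end{align*}
So every row of $P^n$ tends to $\pi$, which is exactly the matrix limit in item 1.

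\textbf{Step 3 (uniqueness).} Let $\mu\ge0$ satisfy $\mu P=\mu$ with $\sum_i\mu(i)=1$. Then $\mu=\mu P^n$ for every $n$, so $\mu(j)=\sum_i\mu(i)P^n_{ij}$. Letting $n\to\infty$ and using dominated convergence (each $P^n_{ij}\le 1$ and $\mu$ is summable), Step 2 gives $\mu(j)=\pi(j)$. This proves item 3.

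\textbf{Main obstacle.} The delicate step is Step 2: showing that the product chain is irreducible and recurrent, so that $T<\infty$ almost surely. This is the only place where aperiodicity is used; without it the statement fails, for example for a two-state flip chain. It is also where positive recurrence is essential, since it supplies the stationary law $\pi\otimes\pi$ that rules out transience of the product chain. We would first settle the finite-state case, where positive recurrence is automatic and $P^N$ has all entries positive for some $N$. In that case a Doeblin minorization $P^N_{ij}\ge\alpha\,\pi(j)$ already yields geometric convergence directly. The countable case is then obtained by the coupling argument above.
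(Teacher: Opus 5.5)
Your proof is correct, but there is no proof in the paper to compare it with step by step. The paper states this lemma in the appendix as a quoted textbook fact, supported only by the citation to Meyn--Tweedie. The one place it argues ergodicity itself (Lemma~\ref{lemma:MarkovProperty}, for the AR condition chain), it uses a Foster--Lyapunov drift function and compact level sets and then appeals to Meyn--Tweedie theory, not a coupling.

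Against that citation, your self-contained route (cycle construction of $\pi$, Doeblin coupling on the product chain, dominated convergence for uniqueness) has the advantage of pinning down the hypotheses. In particular, reading positive recurrence into ``ergodic'' is necessary, not cosmetic. Under the paper's parenthetical definition (irreducible and aperiodic only), the statement is false on a countably infinite state space. For example, the lazy walk on $\mathbb{Z}$ with $P_{i,i}=\tfrac12$ and $P_{i,i\pm1}=\tfrac14$ is irreducible and aperiodic, yet $P^n_{ij}\to 0$ for all $i,j$ and no stationary probability vector exists. For a finite matrix, as you note, positive recurrence is automatic and the statement holds as written.

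What the Meyn--Tweedie route buys in return is generality. Your Step~2 needs the two independent chains to meet at a common state. That event has probability zero for the continuous-valued Gaussian chain $c_i$, which is where the paper actually applies the lemma. There one needs the minorization/regeneration version of your argument: a drift condition plus $P^N(x,\cdot)\ge\alpha\,\varphi(\cdot)$ on a small set, i.e.\ your finite-state Doeblin bound extended to general state spaces. That version also gives the geometric rate the paper later uses.
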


\section{Autoregressive Condition Optimization Algorithm} 
\label{app:algorithm}

The full algorithm integrates three key components: (1) autoregressive condition generation,  (2) diffusion-based denoising,  and (3) optimal transport refinement. The pseudocode below specifies the detailed computational workflow.

\begin{algorithm}[H]
\caption{Autoregressive Condition Optimization (ACO) with Denoising Integration}
\label{alg:aco_denoise_full}
\begin{algorithmic}[1]
\REQUIRE Initial condition $c_0 \gets \Phi_\theta(c_{i-1},  x_{<i})$; Diffusion model $\{\mathcal{D}_t\}_{t=1}^T$ with noise levels $\{\beta_t\}$; Target latent distribution $P_{z^*}$,  OT parameters $\lambda, \epsilon, K_{\text{sink}}$; Learning rate schedule $\{\eta_k\}$,  gradient clipping threshold $\tau$
\ENSURE Optimized condition $c_i^*$,  generated latent $z_i^{(T)}$

\STATE \textbf{Initialize:} $c^{(0)} \gets c_0$,  $z^{(0)} \sim \mathcal{N}(0, I)$
\FOR{$k=0$ {\bfseries to} $K-1$}
    \STATE \textbf{Denoising trajectory:}
    \FOR{$t=T$ {\bfseries to} $1$}
        \STATE $z^{(k, t-1)} \gets \mathcal{D}_t(z^{(k, t)},  c^{(k)})$ \COMMENT{DDIM update}
    \ENDFOR
    
    \STATE \textbf{Inverse process alignment:}
    \STATE $\phi(c^{(k)}) = \|c^{(k)} - \mathcal{T}^{-1}(z^{(k, 0)})\|^2 + \alpha\|\nabla_z\mathcal{T}^{-1}\|_F^2$
    
    \STATE \textbf{Optimal transport computation:}
    \STATE Sample reference latents $\{z_j^*\} \sim P_{z^*}$
    \STATE Compute pairwise cost matrix:
    $
        C_{mn} = \underbrace{\|z^{(k, 0)}_m - z_n^*\|^2}_{\text{Latent matching}} + \lambda\underbrace{\|c^{(k)}_m - \mathcal{T}^{-1}(z_n^*)\|^2}_{\text{Condition consistency}}
    $
    
    \STATE \textbf{Entropy-regularized OT:}
    \STATE Initialize $u^{(0)} \gets \mathbf{1}$,  $v^{(0)} \gets \mathbf{1}$
    \FOR{$l=1$ {\bfseries to} $K_{\text{sink}}$}
        \STATE $v^{(l)} \gets \frac{P_{z^*}}{K_{\epsilon}(u^{(l-1)},  v^{(l-1)})}$
        \STATE $u^{(l)} \gets \frac{P_z^{(k)}}{K_{\epsilon}(u^{(l-1)},  v^{(l)})}$
    \ENDFOR
    \STATE $\gamma^{(k)} \gets \text{diag}(u^{(K_{\text{sink}})}) \cdot K_\epsilon \cdot \text{diag}(v^{(K_{\text{sink}})})$
    
    \STATE \textbf{Gradient computation \& update:}
    \STATE $\nabla_c \mathcal{L}_{\text{OT}} \gets \gamma^{(k)} \odot \frac{\partial C}{\partial c^{(k)}}$ 
    \STATE $\nabla_c \mathcal{L}_{\text{reg}} \gets \frac{\partial \phi}{\partial c^{(k)}}$ 
    \STATE $\nabla_c^{\text{total}} \gets \text{Clip}(\nabla_c \mathcal{L}_{\text{OT}} + \nabla_c \mathcal{L}_{\text{reg}},  \tau)$
    \STATE $c^{(k+1)} \gets c^{(k)} - \eta_k \nabla_c^{\text{total}}$
\ENDFOR

\STATE \textbf{Return} $c_i^* \gets c^{(K)}$,  $z_i \gets z^{(K, 0)}$
\end{algorithmic}
\end{algorithm}

\subsection{Implementation Details} 
\label{app:implementation}

\begin{itemize}
\item \textbf{Target Distribution Estimation}: Maintain an EMA of generated latents:
\begin{equation*}
P_{z^*}^{(i)} = (1-\nu)P_{z^*}^{(i-1)} + \nu\frac{1}{B}\sum_{b=1}^B \delta(z_b^{(i)})
\end{equation*}
with $\nu=0.1$ and buffer size $B=2048$.

\item \textbf{Adaptive Entropy Regularization}: Schedule $\epsilon$ during Sinkhorn iterations:
\begin{equation*}
\epsilon^{(k)} = \epsilon_{\max} - (\epsilon_{\max}-\epsilon_{\min})\frac{k}{K}
\end{equation*}

\item \textbf{Stochastic Optimization}: Use Adam optimizer with:
\begin{equation*}
\eta_k = \eta_0\cdot\min(1,  \sqrt{k_{\text{warm}}/k})
\end{equation*}
where $k_{\text{warm}}=100$ controls learning rate warmup.
\end{itemize}

\subsection{Convergence Analysis} 
\label{app:convergence}

The algorithm maintains the convergence guarantee in Theorem~\ref{thm:main} through:

\begin{enumerate}
\item \textbf{Monotonic Improvement}: For Lyapunov function
\begin{equation*}
\mathcal{V}_k = W_2(P_{z^{(k)}},  P_{z^*}) + \lambda_{\text{reg}}\mathbb{E}[\phi(c^{(k)})]
\end{equation*}
we have $\mathcal{V}_{k+1} \leq \mathcal{V}_k - \eta_k\|\nabla\mathcal{V}_k\|^2 + \mathcal{O}(\eta_k^2)$

\item \textbf{Error Propagation Bound}: Approximation error from Sinkhorn iterations satisfies
\begin{equation*}
\|\gamma^{(k)} - \gamma^*\|_F \leq C\rho^{K_{\text{sink}}}
\end{equation*}
with $\rho = \frac{\epsilon}{\epsilon + \delta}$ where $\delta$ is the minimum cost matrix entry.

\item \textbf{Stability Condition}: Gradient clipping ensures Lipschitz continuity:
\begin{equation*}
\|\nabla_c^{\text{total}}\|_2 \leq \tau(1 + \lambda_{\text{reg}}L_{\mathcal{T}^{-1}})
\end{equation*}
where $L_{\mathcal{T}^{-1}}$ is the Lipschitz constant of the inverse process.
\end{enumerate}

\section{Bounded theorem}
Function \( p_t(x_t | c) \) in the fixed interval \( [a,  b] \) has second order derivatives \( \nabla^2_{x_t}p_t(x_t | c) \), and \( \nabla^2_{x_t}p_t(x_t | c) \) is bounded, so we have \( K > 0 \), 
\[
|\nabla^2_{x_t}p_t(x_t | c)| \leq K,  \quad \forall x_t \in [a,  b].
\]
\subsection*{The first derivative is bounded}
\label{app:bounded_derivative_proof}
We use the Mean Value Theorem to show that first-order derivatives are bounded.
According to the mean value theorem, if $x_t,  x_t' \in [a,  b]$, there exists a point $\xi \in (x_t,  x_t') $, then:
    \[
    \nabla_{x_t}p_t(x_t' | c) - \nabla_{x_t}p_t(x_t | c) = \nabla^2_{x_t}p_t(\xi | c) (x_t' - x_t).
    \]
 \( |\nabla^2_{x_t}p_t(x_t | c)| \leq C \), so we have:
    \[
    |\nabla_{x_t}p_t(x_t' | c) - p_t'(x_t | c)| = |\nabla^2_{x_t}p_t(\xi | c) (x_t' - x_t)| \leq C |x_t' - x_t|.
    \]
    This shows that \( \nabla_{x_t}p_t(x_t | c) \) is Lipschitz continuous and the Lipschitz constant is \( C \).So \( \nabla_{x_t}p_t(x_t | c) \) is bounded.
Next,  we give specific boundedness estimates
Taking \( x_t = a \),  we have:
    \[
    |\nabla_{x_t}p_t(x_t | c) - \nabla_{x_t}p_t(a | c)| \leq C |x_t - a|.
    \]
    Since \( |x_t - a| \leq b - a \),  we get:
    \[
    |\nabla_{x_t}p_t(x_t | c)| \leq |\nabla_{x_t}p_t(a | c)| + C(b - a).
    \]
    Therefore,  there exists a constant \( M_1 = |\nabla_{x_t}p_t(a | c)| + C(b - a) \) such that for all \( x_t \in [a,  b] \):
    \[
    |\nabla_{x_t}p_t(x_t | c)| \leq M_1.
    \]
\subsection*{The original function is bounded}
According to the Fundamental Theorem of Calculus,  we have:
    \[
    p_t(x_t | c) - p_t(a | c) = \int_a^{x_t} \nabla_{x_t}p_t(y | c) \,  dy.
    \]
 Since \( |\nabla_{x_t}p_t(x_t | c)| \leq M_1 \) for all \( x_t \in [a,  b] \),  we can estimate the above integral:
    \[
    |p_t(x_t | c) - p_t(a | c)| = \left| \int_a^{x_t} \nabla_{x_t}p_t(y | c) \,  dy \right| \leq \int_a^{x_t} |\nabla_{x_t}p_t(y | c)| \,  dy \leq M_1 |x_t - a|.
    \]
 Since \( |x_t - a| \leq b - a \),  we have:
    \[
    |p_t(x_t | c) - p_t(a | c)| \leq M_1 (b - a).
    \]
Therefore,  the original function \( p_t(x_t | c) \) is bounded and:
    \[
    |p_t(x_t | c)| \leq |p_t(a | c)| + M_1 (b - a).
    \]
    Thus,  there exists the constant \( M_2 = |p_t(a | c)| + M_1 (b - a) \) such that for all \( x_t \in [a,  b]\):
    \[
    |p_t(x_t | c)| \leq M_2.
    \]
Thus,  in a fixed interval,  a bounded second-order derivative is bounded by a bounded first-order derivative,  and the original function is bounded by the proof.

\section{The Use of Large Language Models}
In this paper, ChatGPT was employed to assist in polishing the writing. 
The model was used as a language aid to improve clarity, grammar, and readability of the text, while ensuring that the academic 
content and arguments remain entirely the work of the author.

\section{Table of Notations}
\label{app:notations}
To facilitate easier reading of the theoretical sections and provide a quick reference for the mathematical symbols used throughout the paper, we summarize the key notations in Table~\ref{tab:notations}.

\begin{table}[ht]
    \centering
    \caption{Summary of Notations}
    \label{tab:notations}
    \renewcommand{\arraystretch}{1.3}
    \begin{tabular}{@{}l p{0.75\textwidth}@{}}
        \toprule
        \textbf{Symbol} & \textbf{Description} \\
        \midrule
        \multicolumn{2}{l}{\textit{\textbf{Diffusion \& Autoregressive Basics}}} \\
        $x_0, x_T$ & The original data (image) and the Gaussian noise at time $T$. \\
        $x_{1:T}$ & The sequence of latent variables in the forward diffusion process. \\
        $q(x_t|x_{t-1})$ & The forward diffusion transition kernel. \\
        $p_\theta(x_{t-1}|x_t)$ & The reverse (denoising) process approximated by the network. \\
        $s_\theta(x_t, t)$ & The score function predicted by the neural network. \\
        $c$ & A global, static condition (in standard conditional diffusion). \\
        $c_i$ & The autoregressively generated condition for the $i$-th patch. \\
        $x_i$ & The $i$-th image patch. \\
        $x_{<i}$ & The set of patches preceding $i$, i.e., $\{x_1, \dots, x_{i-1}\}$. \\
        $\Phi_\theta, \Gamma_\theta$ & The transition operator and noise modulation in the AR condition process. \\
        \midrule
        \multicolumn{2}{l}{\textit{\textbf{Error Analysis \& Theory}}} \\
        $\epsilon_c$ & \textbf{Conditional Error Term} (Eq.~13). Measures the change in expected score squared norm due to conditioning. \\
        $\bar{\epsilon}_c$ & \textbf{Simplified Conditional Error Term} (Eq.~14). Directly measures the expected squared norm of the conditional score. \\
        $\mathcal{T}(c_i)$ & The transition function representing patch-based condition refinement. \\
        $\sigma_t^2 \nabla_{x_t} \log p(c|x_t)$ & The conditional guidance term in the reverse process mean. \\
        $M, \beta, m$ & Constants and decay rate describing the descent of the gradient norm (Theorem~2). \\
        \midrule
        \multicolumn{2}{l}{\textit{\textbf{Condition Refinement (Optimal Transport)}}} \\
        $c_i^*$ & The ideal condition for patch $x_i$. \\
        $\mathcal{I}^*_i$ & The minimal sufficient information subspace for patch $x_i$. \\
        $\pi_{\mathcal{I}^*_i}$ & Orthogonal projection onto the subspace $\mathcal{I}^*_i$. \\
        $\eta_i$ & \textbf{Extraneous Information Component} (Eq.~25). The deviation from the ideal condition ($\eta_i = c_i - c_i^*$). \\
        $W_2(\cdot, \cdot)$ & The 2-Wasserstein distance between probability distributions. \\
        $\mathcal{F}(P)$ & The free energy functional minimized by the Wasserstein Gradient Flow. \\
        $T^{-1}$ & The inverse process regularization operator. \\
        $P_c^{(k)}$ & The probability distribution of the condition at refinement step $k$. \\
        $\rho$ & The contraction rate of the Wasserstein Gradient Flow (Theorem~3). \\
        \bottomrule
    \end{tabular}
\end{table}

%% file: ref.bib
@article{ScoreMatching,
  author       = {Pascal Vincent},
  title        = {A Connection Between Score Matching and Denoising Autoencoders},
  journal      = {Neural Comput.},
  volume       = {23},
  number       = {7},
  pages        = {1661--1674},
  year         = {2011},
  url          = {https://doi.org/10.1162/NECO_a_00142},
  doi          = {10.1162/NECO_A_00142},
  bibsource    = {dblp computer science bibliography, https://dblp.org}
}

@inproceedings{ScoreWasserstein,
  author       = {Dohyun Kwon and
                  Ying Fan and
                  Kangwook Lee},
  editor       = {Sanmi Koyejo and
                  S. Mohamed and
                  A. Agarwal and
                  Danielle Belgrave and
                  K. Cho and
                  A. Oh},
  title        = {Score-based Generative Modeling Secretly Minimizes the Wasserstein
                  Distance},
  booktitle    = {Advances in Neural Information Processing Systems 35: Annual Conference
                  on Neural Information Processing Systems 2022, NeurIPS 2022, New Orleans,
                  LA, USA, November 28 - December 9, 2022},
  year         = {2022},
  bibsource    = {dblp computer science bibliography, https://dblp.org}
}

@article{Adapting,
  author       = {Gen Li and
                  Yuling Yan},
  title        = {Adapting to Unknown Low-Dimensional Structures in Score-Based Diffusion
                  Models},
  journal      = {CoRR},
  volume       = {abs/2405.14861},
  year         = {2024},
  url          = {https://doi.org/10.48550/arXiv.2405.14861},
  doi          = {10.48550/ARXIV.2405.14861},
  eprinttype    = {arXiv},
  eprint       = {2405.14861},
  bibsource    = {dblp computer science bibliography, https://dblp.org}
}

@article{ho2020denoising,
  author       = {Shuohang Yang and
                  Jian Gao and
                  Jiayi Zhang and
                  Chao Xu},
  title        = {Wrapped Phase Denoising Using Denoising Diffusion Probabilistic Models},
  journal      = {{IEEE} Geosci. Remote. Sens. Lett.},
  volume       = {21},
  pages        = {1--5},
  year         = {2024},
  url          = {https://doi.org/10.1109/LGRS.2024.3405000},
  doi          = {10.1109/LGRS.2024.3405000},
  bibsource    = {dblp computer science bibliography, https://dblp.org}
}

@article{nichol2021improved,
  author       = {Yun Pang and
                  Jiawei Mao and
                  Libo He and
                  Hong Lin and
                  Zhenping Qiang},
  title        = {An Improved Face Image Restoration Method Based on Denoising Diffusion
                  Probabilistic Models},
  journal      = {{IEEE} Access},
  volume       = {12},
  pages        = {3581--3596},
  year         = {2024},
  url          = {https://doi.org/10.1109/ACCESS.2024.3349423},
  doi          = {10.1109/ACCESS.2024.3349423},
  bibsource    = {dblp computer science bibliography, https://dblp.org}
}

@article{song2020denoising,
  author       = {Xiao{-}Li Wei and
                  Chunxia Zhang and
                  Hongtao Wang and
                  Chengli Tan and
                  Deng Xiong and
                  Baisong Jiang and
                  Jiangshe Zhang and
                  Sang{-}Woon Kim},
  title        = {Seismic Data Interpolation via Denoising Diffusion Implicit Models
                  With Coherence-Corrected Resampling},
  journal      = {{IEEE} Trans. Geosci. Remote. Sens.},
  volume       = {62},
  pages        = {1--17},
  year         = {2024},
  url          = {https://doi.org/10.1109/TGRS.2024.3485573},
  doi          = {10.1109/TGRS.2024.3485573},
  bibsource    = {dblp computer science bibliography, https://dblp.org}
}

@inproceedings{rombach2022high,
  author       = {Robin Rombach and
                  Andreas Blattmann and
                  Dominik Lorenz and
                  Patrick Esser and
                  Bj{\"{o}}rn Ommer},
  title        = {High-Resolution Image Synthesis with Latent Diffusion Models},
  booktitle    = {{IEEE/CVF} Conference on Computer Vision and Pattern Recognition,
                  {CVPR} 2022, New Orleans, LA, USA, June 18-24, 2022},
  pages        = {10674--10685},
  publisher    = {{IEEE}},
  year         = {2022},
  url          = {https://doi.org/10.1109/CVPR52688.2022.01042},
  doi          = {10.1109/CVPR52688.2022.01042},
  bibsource    = {dblp computer science bibliography, https://dblp.org}
}

@inproceedings{dhariwal2021diffusion,
  author       = {Prafulla Dhariwal and
                  Alexander Quinn Nichol},
  editor       = {Marc'Aurelio Ranzato and
                  Alina Beygelzimer and
                  Yann N. Dauphin and
                  Percy Liang and
                  Jennifer Wortman Vaughan},
  title        = {Diffusion Models Beat GANs on Image Synthesis},
  booktitle    = {Advances in Neural Information Processing Systems 34: Annual Conference
                  on Neural Information Processing Systems 2021, NeurIPS 2021, December
                  6-14, 2021, virtual},
  pages        = {8780--8794},
  year         = {2021},
  bibsource    = {dblp computer science bibliography, https://dblp.org}
}

@article{ho2022classifier,
  author       = {Rahmatulloh Daffa Izzuddin Wahid and
                  Novanto Yudistira and
                  Candra Dewi and
                  Irawati Nurmala Sari and
                  Dyanningrum Pradhikta and
                  Fatmawati},
  title        = {Prompt Conditioned Batik Pattern Generation Using LoRA Weighted Diffusion
                  Model With Classifier-Free Guidance},
  journal      = {{IEEE} Access},
  volume       = {13},
  pages        = {2436--2448},
  year         = {2025},
  url          = {https://doi.org/10.1109/ACCESS.2024.3523494},
  doi          = {10.1109/ACCESS.2024.3523494},
  bibsource    = {dblp computer science bibliography, https://dblp.org}
}

@inproceedings{Saharia2022Palette,
  author       = {Chitwan Saharia and
                  William Chan and
                  Huiwen Chang and
                  Chris A. Lee and
                  Jonathan Ho and
                  Tim Salimans and
                  David J. Fleet and
                  Mohammad Norouzi},
  editor       = {Munkhtsetseg Nandigjav and
                  Niloy J. Mitra and
                  Aaron Hertzmann},
  title        = {Palette: Image-to-Image Diffusion Models},
  booktitle    = {{SIGGRAPH} '22: Special Interest Group on Computer Graphics and Interactive
                  Techniques Conference, Vancouver, BC, Canada, August 7 - 11, 2022},
  pages        = {15:1--15:10},
  publisher    = {{ACM}},
  year         = {2022},
  url          = {https://doi.org/10.1145/3528233.3530757},
  doi          = {10.1145/3528233.3530757},
  bibsource    = {dblp computer science bibliography, https://dblp.org}
}

@inproceedings{Nichol2022GLIDE,
  author       = {Alexander Quinn Nichol and
                  Prafulla Dhariwal and
                  Aditya Ramesh and
                  Pranav Shyam and
                  Pamela Mishkin and
                  Bob McGrew and
                  Ilya Sutskever and
                  Mark Chen},
  editor       = {Kamalika Chaudhuri and
                  Stefanie Jegelka and
                  Le Song and
                  Csaba Szepesv{\'{a}}ri and
                  Gang Niu and
                  Sivan Sabato},
  title        = {{GLIDE:} Towards Photorealistic Image Generation and Editing with
                  Text-Guided Diffusion Models},
  booktitle    = {International Conference on Machine Learning, {ICML} 2022, 17-23 July
                  2022, Baltimore, Maryland, {USA}},
  series       = {Proceedings of Machine Learning Research},
  volume       = {162},
  pages        = {16784--16804},
  publisher    = {{PMLR}},
  year         = {2022},
  url          = {https://proceedings.mlr.press/v162/nichol22a.html},
  bibsource    = {dblp computer science bibliography, https://dblp.org}
}

@article{Ho2022Cascaded,
  author       = {Jonathan Ho and
                  Chitwan Saharia and
                  William Chan and
                  David J. Fleet and
                  Mohammad Norouzi and
                  Tim Salimans},
  title        = {Cascaded Diffusion Models for High Fidelity Image Generation},
  journal      = {J. Mach. Learn. Res.},
  volume       = {23},
  pages        = {47:1--47:33},
  year         = {2022},
  url          = {https://jmlr.org/papers/v23/21-0635.html},
  bibsource    = {dblp computer science bibliography, https://dblp.org}
}

@inproceedings{Schaefer2023Detection,
  author       = {Riccardo Corvi and
                  Davide Cozzolino and
                  Giada Zingarini and
                  Giovanni Poggi and
                  Koki Nagano and
                  Luisa Verdoliva},
  title        = {On The Detection of Synthetic Images Generated by Diffusion Models},
  booktitle    = {{IEEE} International Conference on Acoustics, Speech and Signal Processing
                  {ICASSP} 2023, Rhodes Island, Greece, June 4-10, 2023},
  pages        = {1--5},
  publisher    = {{IEEE}},
  year         = {2023},
  url          = {https://doi.org/10.1109/ICASSP49357.2023.10095167},
  doi          = {10.1109/ICASSP49357.2023.10095167},
  bibsource    = {dblp computer science bibliography, https://dblp.org}
}

@article{llamagen2024,
  author       = {Peize Sun and
                  Yi Jiang and
                  Shoufa Chen and
                  Shilong Zhang and
                  Bingyue Peng and
                  Ping Luo and
                  Zehuan Yuan},
  title        = {Autoregressive Model Beats Diffusion: Llama for Scalable Image Generation},
  journal      = {CoRR},
  volume       = {abs/2406.06525},
  year         = {2024},
  url          = {https://doi.org/10.48550/arXiv.2406.06525},
  doi          = {10.48550/ARXIV.2406.06525},
  eprinttype    = {arXiv},
  eprint       = {2406.06525},
  bibsource    = {dblp computer science bibliography, https://dblp.org}
}

@article{var2024,
  author       = {Keyu Tian and
                  Yi Jiang and
                  Zehuan Yuan and
                  Bingyue Peng and
                  Liwei Wang},
  title        = {Visual Autoregressive Modeling: Scalable Image Generation via Next-Scale
                  Prediction},
  journal      = {CoRR},
  volume       = {abs/2404.02905},
  year         = {2024},
  url          = {https://doi.org/10.48550/arXiv.2404.02905},
  doi          = {10.48550/ARXIV.2404.02905},
  eprinttype    = {arXiv},
  eprint       = {2404.02905},
  bibsource    = {dblp computer science bibliography, https://dblp.org}
}

@article{imagefolder2024,
  author       = {Xiang Li and
                  Kai Qiu and
                  Hao Chen and
                  Jason Kuen and
                  Jiuxiang Gu and
                  Bhiksha Raj and
                  Zhe Lin},
  title        = {ImageFolder: Autoregressive Image Generation with Folded Tokens},
  journal      = {CoRR},
  volume       = {abs/2410.01756},
  year         = {2024},
  url          = {https://doi.org/10.48550/arXiv.2410.01756},
  doi          = {10.48550/ARXIV.2410.01756},
  eprinttype    = {arXiv},
  eprint       = {2410.01756},
  bibsource    = {dblp computer science bibliography, https://dblp.org}
}

@article{emu32024,
  author       = {Xinlong Wang and
                  Xiaosong Zhang and
                  Zhengxiong Luo and
                  Quan Sun and
                  Yufeng Cui and
                  Jinsheng Wang and
                  Fan Zhang and
                  Yueze Wang and
                  Zhen Li and
                  Qiying Yu and
                  Yingli Zhao and
                  Yulong Ao and
                  Xuebin Min and
                  Tao Li and
                  Boya Wu and
                  Bo Zhao and
                  Bowen Zhang and
                  Liangdong Wang and
                  Guang Liu and
                  Zheqi He and
                  Xi Yang and
                  Jingjing Liu and
                  Yonghua Lin and
                  Tiejun Huang and
                  Zhongyuan Wang},
  title        = {Emu3: Next-Token Prediction is All You Need},
  journal      = {CoRR},
  volume       = {abs/2409.18869},
  year         = {2024},
  url          = {https://doi.org/10.48550/arXiv.2409.18869},
  doi          = {10.48550/ARXIV.2409.18869},
  eprinttype    = {arXiv},
  eprint       = {2409.18869},
  bibsource    = {dblp computer science bibliography, https://dblp.org}
}

@article{luminamgpt2024,
  author       = {Dongyang Liu and
                  Shitian Zhao and
                  Le Zhuo and
                  Weifeng Lin and
                  Yu Qiao and
                  Hongsheng Li and
                  Peng Gao},
  title        = {Lumina-mGPT: Illuminate Flexible Photorealistic Text-to-Image Generation
                  with Multimodal Generative Pretraining},
  journal      = {CoRR},
  volume       = {abs/2408.02657},
  year         = {2024},
  url          = {https://doi.org/10.48550/arXiv.2408.02657},
  doi          = {10.48550/ARXIV.2408.02657},
  eprinttype    = {arXiv},
  eprint       = {2408.02657},
  bibsource    = {dblp computer science bibliography, https://dblp.org}
}

@article{mmar2024,
  author       = {Jian Yang and
                  Dacheng Yin and
                  Yizhou Zhou and
                  Fengyun Rao and
                  Wei Zhai and
                  Yang Cao and
                  Zheng{-}Jun Zha},
  title        = {{MMAR:} Towards Lossless Multi-Modal Auto-Regressive Probabilistic
                  Modeling},
  journal      = {CoRR},
  volume       = {abs/2410.10798},
  year         = {2024},
  url          = {https://doi.org/10.48550/arXiv.2410.10798},
  doi          = {10.48550/ARXIV.2410.10798},
  eprinttype    = {arXiv},
  eprint       = {2410.10798},
  bibsource    = {dblp computer science bibliography, https://dblp.org}
}

@article{vqfree2024,
  author       = {Tianhong Li and
                  Yonglong Tian and
                  He Li and
                  Mingyang Deng and
                  Kaiming He},
  title        = {Autoregressive Image Generation without Vector Quantization},
  journal      = {CoRR},
  volume       = {abs/2406.11838},
  year         = {2024},
  url          = {https://doi.org/10.48550/arXiv.2406.11838},
  doi          = {10.48550/ARXIV.2406.11838},
  eprinttype    = {arXiv},
  eprint       = {2406.11838},
  bibsource    = {dblp computer science bibliography, https://dblp.org}
}

@article{latentlm2024,
  author       = {Yutao Sun and
                  Hangbo Bao and
                  Wenhui Wang and
                  Zhiliang Peng and
                  Li Dong and
                  Shaohan Huang and
                  Jianyong Wang and
                  Furu Wei},
  title        = {Multimodal Latent Language Modeling with Next-Token Diffusion},
  journal      = {CoRR},
  volume       = {abs/2412.08635},
  year         = {2024},
  url          = {https://doi.org/10.48550/arXiv.2412.08635},
  doi          = {10.48550/ARXIV.2412.08635},
  eprinttype    = {arXiv},
  eprint       = {2412.08635},
  bibsource    = {dblp computer science bibliography, https://dblp.org}
}

@inproceedings{Zhang2023ControlNet,
  author       = {Lvmin Zhang and
                  Anyi Rao and
                  Maneesh Agrawala},
  title        = {Adding Conditional Control to Text-to-Image Diffusion Models},
  booktitle    = {{IEEE/CVF} International Conference on Computer Vision, {ICCV} 2023,
                  Paris, France, October 1-6, 2023},
  pages        = {3813--3824},
  publisher    = {{IEEE}},
  year         = {2023},
  url          = {https://doi.org/10.1109/ICCV51070.2023.00355},
  doi          = {10.1109/ICCV51070.2023.00355},
  bibsource    = {dblp computer science bibliography, https://dblp.org}
}

@article{GAN,
  author       = {Ian J. Goodfellow and
                  Jean Pouget{-}Abadie and
                  Mehdi Mirza and
                  Bing Xu and
                  David Warde{-}Farley and
                  Sherjil Ozair and
                  Aaron C. Courville and
                  Yoshua Bengio},
  title        = {Generative adversarial networks},
  journal      = {Commun. {ACM}},
  volume       = {63},
  number       = {11},
  pages        = {139--144},
  year         = {2020},
  url          = {https://doi.org/10.1145/3422622},
  doi          = {10.1145/3422622},
  bibsource    = {dblp computer science bibliography, https://dblp.org}
}

@inproceedings{VAE,
  author       = {Diederik P. Kingma and
                  Max Welling},
  editor       = {Yoshua Bengio and
                  Yann LeCun},
  title        = {Auto-Encoding Variational Bayes},
  booktitle    = {2nd International Conference on Learning Representations, {ICLR} 2014,
                  Banff, AB, Canada, April 14-16, 2014, Conference Track Proceedings},
  year         = {2014},
  url          = {http://arxiv.org/abs/1312.6114},
  bibsource    = {dblp computer science bibliography, https://dblp.org}
}

@article{Classifier-Free,
  author       = {Jonathan Ho and
                  Tim Salimans},
  title        = {Classifier-Free Diffusion Guidance},
  journal      = {CoRR},
  volume       = {abs/2207.12598},
  year         = {2022},
  url          = {https://doi.org/10.48550/arXiv.2207.12598},
  doi          = {10.48550/ARXIV.2207.12598},
  eprinttype    = {arXiv},
  eprint       = {2207.12598},
  bibsource    = {dblp computer science bibliography, https://dblp.org}
}

@inproceedings{MoreControlforFree,
  author       = {Xihui Liu and
                  Dong Huk Park and
                  Samaneh Azadi and
                  Gong Zhang and
                  Arman Chopikyan and
                  Yuxiao Hu and
                  Humphrey Shi and
                  Anna Rohrbach and
                  Trevor Darrell},
  title        = {More Control for Free! Image Synthesis with Semantic Diffusion Guidance},
  booktitle    = {{IEEE/CVF} Winter Conference on Applications of Computer Vision, {WACV}
                  2023, Waikoloa, HI, USA, January 2-7, 2023},
  pages        = {289--299},
  publisher    = {{IEEE}},
  year         = {2023},
  url          = {https://doi.org/10.1109/WACV56688.2023.00037},
  doi          = {10.1109/WACV56688.2023.00037},
  bibsource    = {dblp computer science bibliography, https://dblp.org}
}

@article{JKO,
author = {Jordan, Richard and Kinderlehrer, David and Otto, Felix},
title = {The Variational Formulation of the Fokker--Planck Equation},
journal = {SIAM Journal on Mathematical Analysis},
volume = {29},
number = {1},
pages = {1-17},
year = {1998},
doi = {10.1137/S0036141096303359},
URL = {https://doi.org/10.1137/S0036141096303359},
eprint = {https://doi.org/10.1137/S0036141096303359},
}

@book{meyn2012markov,
  title={Markov chains and stochastic stability},
  author={Meyn, Sean P and Tweedie, Richard L},
  year={2012},
  publisher={Springer Science \& Business Media}
}

@book{DU04,
  address = {Belmont, CA},
  author = {Durrett, Richard},
  description = {q-paper},
  edition = {Second},
  isbn = {0-534-24318-5},
  mrclass = {60-01},
  mrnumber = {MR1609153 (98m:60001)},
  pages = {xiii+503},
  publisher = {Duxbury Press},
  title = {Probability: theory and examples},
  year = 1996
}

@article{Bao22All,
  author       = {Fan Bao and
                  Chongxuan Li and
                  Yue Cao and
                  Jun Zhu},
  title        = {All are Worth Words: a ViT Backbone for Score-based Diffusion Models},
  journal      = {CoRR},
  volume       = {abs/2209.12152},
  year         = {2022},
  url          = {https://doi.org/10.48550/arXiv.2209.12152},
  doi          = {10.48550/ARXIV.2209.12152},
  eprinttype    = {arXiv},
  eprint       = {2209.12152},
  bibsource    = {dblp computer science bibliography, https://dblp.org}
}

@inproceedings{DiT-XL2,
  author       = {William Peebles and
                  Saining Xie},
  title        = {Scalable Diffusion Models with Transformers},
  booktitle    = {{IEEE/CVF} International Conference on Computer Vision, {ICCV} 2023,
                  Paris, France, October 1-6, 2023},
  pages        = {4172--4182},
  publisher    = {{IEEE}},
  year         = {2023},
  url          = {https://doi.org/10.1109/ICCV51070.2023.00387},
  doi          = {10.1109/ICCV51070.2023.00387},
  bibsource    = {dblp computer science bibliography, https://dblp.org}
}

@inproceedings{DiffiT,
  author       = {Ali Hatamizadeh and
                  Jiaming Song and
                  Guilin Liu and
                  Jan Kautz and
                  Arash Vahdat},
  editor       = {Ales Leonardis and
                  Elisa Ricci and
                  Stefan Roth and
                  Olga Russakovsky and
                  Torsten Sattler and
                  G{\"{u}}l Varol},
  title        = {DiffiT: Diffusion Vision Transformers for Image Generation},
  booktitle    = {Computer Vision - {ECCV} 2024 - 18th European Conference, Milan, Italy,
                  September 29-October 4, 2024, Proceedings, Part {VIII}},
  series       = {Lecture Notes in Computer Science},
  volume       = {15066},
  pages        = {37--55},
  publisher    = {Springer},
  year         = {2024},
  url          = {https://doi.org/10.1007/978-3-031-73242-3\_3},
  doi          = {10.1007/978-3-031-73242-3\_3},
  bibsource    = {dblp computer science bibliography, https://dblp.org}
}

@inproceedings{MDTv2,
  author       = {Shanghua Gao and
                  Pan Zhou and
                  Ming{-}Ming Cheng and
                  Shuicheng Yan},
  title        = {Masked Diffusion Transformer is a Strong Image Synthesizer},
  booktitle    = {{IEEE/CVF} International Conference on Computer Vision, {ICCV} 2023,
                  Paris, France, October 1-6, 2023},
  pages        = {23107--23116},
  publisher    = {{IEEE}},
  year         = {2023},
  url          = {https://doi.org/10.1109/ICCV51070.2023.02117},
  doi          = {10.1109/ICCV51070.2023.02117},
  bibsource    = {dblp computer science bibliography, https://dblp.org}
}

@inproceedings{GIVT,
  author       = {Michael Tschannen and
                  Cian Eastwood and
                  Fabian Mentzer},
  editor       = {Ales Leonardis and
                  Elisa Ricci and
                  Stefan Roth and
                  Olga Russakovsky and
                  Torsten Sattler and
                  G{\"{u}}l Varol},
  title        = {{GIVT:} Generative Infinite-Vocabulary Transformers},
  booktitle    = {Computer Vision - {ECCV} 2024 - 18th European Conference, Milan, Italy,
                  September 29-October 4, 2024, Proceedings, Part {LVII}},
  series       = {Lecture Notes in Computer Science},
  volume       = {15115},
  pages        = {292--309},
  publisher    = {Springer},
  year         = {2024},
  url          = {https://doi.org/10.1007/978-3-031-72998-0\_17},
  doi          = {10.1007/978-3-031-72998-0\_17},
  bibsource    = {dblp computer science bibliography, https://dblp.org}
}

@Inbook{Bellet2006,
author="Bellet, Luc Rey",
title="Ergodic Properties of Markov Processes",
bookTitle="Open Quantum Systems II: The Markovian Approach",
year="2006",
publisher="Springer Berlin Heidelberg",
address="Berlin, Heidelberg",
pages="1--39",
isbn="978-3-540-33966-3",
doi="10.1007/3-540-33966-3_1",
url="https://doi.org/10.1007/3-540-33966-3_1"
}

@inproceedings{ImageNet,
  author       = {Jia Deng and
                  Wei Dong and
                  Richard Socher and
                  Li{-}Jia Li and
                  Kai Li and
                  Li Fei{-}Fei},
  title        = {ImageNet: {A} large-scale hierarchical image database},
  booktitle    = {2009 {IEEE} Computer Society Conference on Computer Vision and Pattern
                  Recognition {(CVPR} 2009), 20-25 June 2009, Miami, Florida, {USA}},
  pages        = {248--255},
  publisher    = {{IEEE} Computer Society},
  year         = {2009},
  url          = {https://doi.org/10.1109/CVPR.2009.5206848},
  doi          = {10.1109/CVPR.2009.5206848},
  bibsource    = {dblp computer science bibliography, https://dblp.org}
}

@inproceedings{FID,
  author       = {Martin Heusel and
                  Hubert Ramsauer and
                  Thomas Unterthiner and
                  Bernhard Nessler and
                  Sepp Hochreiter},
  editor       = {Isabelle Guyon and
                  Ulrike von Luxburg and
                  Samy Bengio and
                  Hanna M. Wallach and
                  Rob Fergus and
                  S. V. N. Vishwanathan and
                  Roman Garnett},
  title        = {GANs Trained by a Two Time-Scale Update Rule Converge to a Local Nash
                  Equilibrium},
  booktitle    = {Advances in Neural Information Processing Systems 30: Annual Conference
                  on Neural Information Processing Systems 2017, December 4-9, 2017,
                  Long Beach, CA, {USA}},
  pages        = {6626--6637},
  year         = {2017},
  bibsource    = {dblp computer science bibliography, https://dblp.org}
}

@inproceedings{IS,
  author       = {Tim Salimans and
                  Ian J. Goodfellow and
                  Wojciech Zaremba and
                  Vicki Cheung and
                  Alec Radford and
                  Xi Chen},
  editor       = {Daniel D. Lee and
                  Masashi Sugiyama and
                  Ulrike von Luxburg and
                  Isabelle Guyon and
                  Roman Garnett},
  title        = {Improved Techniques for Training GANs},
  booktitle    = {Advances in Neural Information Processing Systems 29: Annual Conference
                  on Neural Information Processing Systems 2016, December 5-10, 2016,
                  Barcelona, Spain},
  pages        = {2226--2234},
  year         = {2016},
  bibsource    = {dblp computer science bibliography, https://dblp.org}
}

@inproceedings{yao2025denoising,
  title={Denoising token prediction in masked autoregressive models},
  author={Yao, Ting and Li, Yehao and Pan, Yingwei and Qiu, Zhaofan and Mei, Tao},
  booktitle={Proceedings of the IEEE/CVF international conference on computer vision},
  pages={18024--18033},
  year={2025}
}

@inproceedings{yu2025randomized,
  title={Randomized autoregressive visual generation},
  author={Yu, Qihang and He, Ju and Deng, Xueqing and Shen, Xiaohui and Chen, Liang-Chieh},
  booktitle={Proceedings of the IEEE/CVF International Conference on Computer Vision},
  pages={18431--18441},
  year={2025}
}

@article{sun2025speed,
  title={Speed always wins: A survey on efficient architectures for large language models},
  author={Sun, Weigao and Hu, Jiaxi and Zhou, Yucheng and Du, Jusen and Lan, Disen and Wang, Kexin and Zhu, Tong and Qu, Xiaoye and Zhang, Yu and Mo, Xiaoyu and others},
  journal={arXiv preprint arXiv:2508.09834},
  year={2025}
}

@misc{zhou2025medical,
  title={From Medical LLMs to Versatile Medical Agents: A Comprehensive Survey},
  author={Zhou, Yucheng and Zheng, Huan and Chen, Dubing and Yang, Hongji and Han, Wencheng and Shen, Jianbing},
  year={2025}
}

@misc{zhou2024less,
  title={Less is more: Vision representation compression for efficient video generation with large language models},
  author={Zhou, Yucheng and Zhang, Jihai and Chen, Guanjie and Shen, Jianbing and Cheng, Yu},
  year={2024},
  publisher={OpenReview}
}

@article{zhou2025draw,
  title={Draw ALL Your Imagine: A Holistic Benchmark and Agent Framework for Complex Instruction-based Image Generation},
  author={Zhou, Yucheng and Yuan, Jiahao and Wang, Qianning},
  journal={arXiv preprint arXiv:2505.24787},
  year={2025}
}
